\newtheorem{theorem}{Theorem}[section]
\newtheorem{lemma}[theorem]{Lemma}
\newtheorem{corollary}[theorem]{Corollary}
\newcommand{\im}{\text{im}\,}
\newcommand{\boundary}{\partial}
\newcommand{\Z}{\mathbb{Z}}
\newcommand{\R}{\mathbb{R}}
\newcommand{\suchthat}{\,|\,}
\newcommand{\V}{\mathbf{V}} 
\DeclareMathOperator{\tw}{tw}
\begin{document}

\author{
Mitchell Black
    \thanks{School of Electrical Engineering and Computer Science, Oregon State University;\href{
            mailto:blackmit@eecs.oregonstate.edu
        }{
            blackmit@eecs.oregonstate.edu
        }.
    }
\and
Amir Nayyeri
    \thanks{
        School of Electrical Engineering and Computer Science, Oregon State University;\href{
            mailto:nayyeria@eecs.oregonstate.edu
        }{
            nayyeria@eecs.oregonstate.edu
        }.
    }
}
\title{
Finding minimum bounded and homologous chains in simplicial complexes with bounded-treewidth 1-skeleton
}
\date{}

\maketitle
\thispagestyle{empty}
\begin{abstract}
We consider two problems on simplicial complexes: the Optimal Bounded Chain Problem and the Optimal Homologous Chain Problem. The Optimal Bounded Chain Problem asks to find the minimum weight $d$-chain in a simplicial complex $K$ bounded by a given $(d{-}1)$-chain, if such a $d$-chain exists. The Optimal Homologous Chain problem asks to find the minimum weight $(d{-}1)$-chain in $K$ homologous to a given $(d{-}1)$-chain. Both of these problems are NP-hard and hard to approximate within any constant factor assuming the Unique Games Conjecture. We prove that these problems are fixed-parameter tractable with respect to the treewidth of the 1-skeleton of $K$. 
\end{abstract}

\newpage
\pagenumbering{arabic}

\section{Introduction}

We consider two problems in this paper: the Optimal Bounded Chain Problem and the Optimal Homologous Chain Problem. Both of these problems are NP-hard \cite{chen_freedman, dunfield-harani}, so it is natural to ask whether more efficient algorithms exist for any special family of simplicial complexes. In this paper, we study algorithms for these problems for simplicial complexes with bounded-width tree decompositions of their 1-skeletons.
\par
Tree decompositions have been used successfully to design algorithms on graphs; see \cite{Cygan_2015}. Often, graphs with tree decompositions of bounded-width admit polynomial-time solutions to otherwise hard problems. Recently, tree decompositions have also begun to be used for algorithms on simplicial complexes. Existing algorithms use tree decompositions of a variety of graphs associated with a simplicial complex. The most commonly used graph is the dual graph of combinatorial $d$-manifolds \cite{bagchi-tightness, burton-courcelle, Burton_morse, burton_taut}. Other graphs that have been used are the incidence graph between the $d$ and $(d+1)$-simplices (a.k.a.~level $d$ of the Hasse diagram) \cite{Burton_morse, blaser_hl, blaser_mbc}, the adjacency graph of the $d$-simplices \cite{blaser_hl}, and the 1-skeleton \cite{bagchi-tightness}. 
\par
Our algorithms use a tree decomposition of the 1-skeleton. We observe that any simplex of a simplicial complex will necessarily be contained in some bag of a tree decomposition of its 1-skeleton. This is analogous to the requirement that each vertex and edge of a graph be contained in some bag of a tree decomposition. In this way, a tree decomposition of the 1-skeleton of a simplicial complex is analogous to a tree decomposition of a graph.
\par
The problems we consider ask to find a chain in a simplicial complex with certain properties. There can be exponentially many such chains, so we cannot simply check whether each chain or subcomplex has the desired property. Our approach is to incrementally build these chains, checking along the way that they have the desired property. Tree decompositions of the 1-skeleton of a simplicial complex define a recursive structure on a simplicial complex that we use to incrementally build our solutions.

\subsection{Our Results}

We now summarize the main results of our paper. Throughout, $n$ is the number of vertices in the input simplicial complex and $k$ is the treewidth of the 1-skeleton of the input simplicial complex. Formal definitions of the terms used will be given in Section \ref{sec:prelim}.
\par
The \textit{\textbf{Optimal Bounded Chain Problem}} (OBCP) asks to find the minimum weight $d$-chain $c$ of a simplicial complex $K$ bounded by a given $(d{-}1)$-chain $b$, if such a chain $c$ exists. We work with chain groups with coefficients in $\Z_2$, so a $d$-chain can therefore be thought of as a set of $d$-simplices. The weight of a $d$-chain is the number of $d$-simplices it contains.  We show that OBCP is fixed-parameter tractable with respect to the treewidth of the $1$-skeleton.
\begin{theorem}
\label{thm:obc}
    OBCP can be solved in $2^{O\left(k^{d}\right)}n\subset 2^{O\left(k^{k}\right)}n$ time.
\end{theorem}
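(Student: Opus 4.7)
The plan is to perform dynamic programming on a nice tree decomposition of the 1-skeleton of $K$. Because every simplex of $K$ is a clique in the 1-skeleton, each simplex is contained in some bag, and the bags containing any fixed vertex form a connected subtree. It follows that the $d$-simplices lying entirely in a bag $X_t$ form a ``frontier'' separating the $d$-simplices whose vertex sets are fully contained in the subtree rooted at $t$ from those above.

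At each node $t$ with bag $X_t$, I maintain a table keyed by (i) a subset $S$ of the frontier $d$-simplices (those with all vertices in $X_t$) indicating which are tentatively placed in the partial chain $c$, and (ii) a $\Z_2$ parity bit for each frontier $(d{-}1)$-simplex $\sigma$, recording the parity of the number of $d$-simplices already committed in the subtree that contain $\sigma$. The value stored is the minimum number of $d$-simplices used in the subtree consistent with this state. Since $|X_t| \le k+1$, there are at most $\binom{k+1}{d+1}$ frontier $d$-simplices and $\binom{k+1}{d}$ frontier $(d{-}1)$-simplices, so the number of states per node is $2^{O(k^d)}$.

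Transitions follow the standard nice-tree-decomposition schema. At an \emph{introduce node} adding a vertex $v$, I enumerate all subsets of the newly enabled $d$-simplices (those containing $v$ with all other vertices in $X_t$), update $S$ and the parity bits of the newly enabled $(d{-}1)$-simplices accordingly, and increment the weight by the number of simplices just added. At a \emph{forget node} removing $v$, I discard states in which some $(d{-}1)$-simplex $\sigma \ni v$ that is entirely contained in the child's bag has parity differing from $b(\sigma)$, since such a $\sigma$ will never reappear in a future bag; then I project out the bits for simplices containing $v$ and take the minimum over compatible states. At a \emph{join node}, I match on the frontier $d$-simplex sets and XOR the parity contributions of the two subtrees, correcting for the double contribution of $b(\sigma)$.

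The main technical task is to verify that the state above is a sound summary of the subtree: once a $(d{-}1)$-simplex $\sigma$ leaves the frontier, no later introduced $d$-simplex can contain $\sigma$ (by the connectedness axiom applied to each vertex of $\sigma$), so enforcing $\partial c(\sigma) = b(\sigma)$ at the forget step is both necessary and sufficient. Given a nice tree decomposition with $O(kn)$ nodes and $2^{O(k^d)}$ work per node, the total running time is $2^{O(k^d)} n$. The containment $2^{O(k^d)} \subset 2^{O(k^k)}$ is immediate since a $d$-simplex requires $d+1$ vertices lying in a common bag, forcing $d \le k$; otherwise $K$ has no $d$-simplex and the problem is trivial.
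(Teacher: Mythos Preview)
Your dynamic programming scheme is sound in spirit and close to the paper's, but there is a quantitative gap in the running-time analysis that prevents you from achieving the stated $2^{O(k^d)}n$ bound. You key your table on a subset $S$ of the frontier $d$-simplices in $X_t$ together with a parity vector over the frontier $(d{-}1)$-simplices. The number of frontier $d$-simplices is $\binom{k+1}{d+1}$, which for fixed $d$ is $\Theta(k^{d+1})$, not $O(k^d)$; e.g.\ already for $d=1$ you carry $2^{\Theta(k^2)}$ edge-subsets while the theorem promises $2^{O(k)}$. Consequently your state space is $2^{\binom{k+1}{d+1}+\binom{k+1}{d}}=2^{O(k^{d+1})}$, and the algorithm as described runs in $2^{O(k^{d+1})}n$ time, not $2^{O(k^{d})}n$. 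The sentence ``so the number of states per node is $2^{O(k^d)}$'' is therefore incorrect.

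The paper avoids this by \emph{not} tracking bag $d$-simplices at all: it defines the partial chain at $t$ to live in $K_t:=K[V_t]\setminus (K[X_t])_d$, so a $d$-simplex is committed only at the forget node where its first vertex leaves the bag (and it is committed exactly once, since each vertex is forgotten exactly once). The state then reduces to a single $(d{-}1)$-chain $\beta\in C_{d-1}(K[X_t])$, giving $2^{\binom{k+1}{d}}=2^{O(k^d)}$ states. In your language: move the ``enumerate subsets of newly enabled $d$-simplices'' step from introduce nodes to forget nodes (enumerate over $d$-simplices containing the forgotten vertex $w$, of which there are only $\binom{k}{d}=O(k^d)$), and drop $S$ from the key entirely. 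This also cleans up your join step, where the phrase ``correcting for the double contribution of $b(\sigma)$'' is unclear; with the paper's convention the two children's $d$-chains are genuinely disjoint, so you just add the $\beta$'s and the weights with no correction needed.
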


As a corollary, our algorithm for OBCP can determine whether or not two $(d{-}1)$-chains $b$ and $h$ are homologous. The naive algorithm for this problem is to solve the system of linear equation $\boundary c=b+h$ for $c$, which takes $O(n^{\omega})$ time.  Our algorithm runs in linear time on a bounded-treewidth simplicial complex.

\begin{corollary}
\label{cor:homology_test}
    We can determine whether two $(d{-}1)$-chains are homologous in $2^{O\left(k^{d}\right)}n\subset 2^{O\left(k^{k}\right)}n$ time.
\end{corollary}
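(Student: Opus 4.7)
The plan is to reduce the homology test directly to OBCP. Over $\Z_2$, two $(d{-}1)$-chains $b$ and $h$ are homologous if and only if their sum $b+h$ is a boundary, i.e., there exists a $d$-chain $c$ with $\partial c = b + h$. Thus, to decide whether $b$ and $h$ are homologous, I would first form the chain $b' = b + h$ in $\Z_2$-coefficients (symmetric difference of their supports), and then invoke the algorithm of Theorem~\ref{thm:obc} on the instance $(K, b')$. By the specification of OBCP, this algorithm either returns a minimum-weight $d$-chain $c$ bounded by $b'$, or reports that no such $d$-chain exists. In the first case $b+h = \partial c$, so $b$ and $h$ are homologous; in the second case $b+h$ is not a boundary, so they are not homologous.

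For the running time, the preprocessing step of forming $b' = b+h$ costs time linear in the number of $(d{-}1)$-simplices of $K$. In a simplicial complex whose $1$-skeleton has treewidth $k$, every $(d{-}1)$-simplex is a $d$-clique of the $1$-skeleton, and such a graph has at most $\binom{k+1}{d}\cdot n = O(k^{d-1} n / (d-1)!)$ many $d$-cliques, so forming $b'$ takes $O(k^{d-1} n)$ time, which is absorbed by the $2^{O(k^d)} n$ running time of the call to the OBCP solver. The total running time is therefore $2^{O(k^d)} n \subset 2^{O(k^k)} n$, matching the bound claimed in the corollary.

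Because this argument is just a linear-algebraic reformulation plus a single call to the OBCP algorithm, there is no genuine technical obstacle; the only thing to verify is the definitional equivalence ``$b \sim h \iff b+h \in \im \partial_d$'' under the $\Z_2$ chain-complex conventions fixed in Section~\ref{sec:prelim}. Once this equivalence is invoked, Theorem~\ref{thm:obc} finishes the proof.
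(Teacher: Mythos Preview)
Your proposal is correct and matches the paper's own argument essentially verbatim: the paper also reduces to a single OBCP call on $b+h$ and declares $b$ and $h$ homologous iff the algorithm finds a bounding chain (equivalently, iff $\mathbf{V}[\emptyset,r]<\infty$). Your extra remark bounding the cost of forming $b+h$ is harmless padding that the paper omits, since that cost is trivially dominated by the OBCP call.
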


The \textit{\textbf{Optimal Homologous Chain Problem}} (OHCP) asks to find the minimal weight $(d{-}1)$-chain $h$ homologous to a given $(d{-}1)$-chain $b$. If the chains $b$ and $h$ are cycles, OHCP is also known as Homology Localization.
We show that OHCP is fixed-parameter tractable algorithm with respect to the treewidth of the $1$-skeleton.
\begin{theorem}
\label{thm:ohc}
    OHCP can be solved in $2^{O\left(k^{d}\right)}n\subset 2^{O\left(k^{k}\right)}n$ time.
\end{theorem}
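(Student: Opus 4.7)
The plan is to adapt the tree-decomposition dynamic program behind Theorem~\ref{thm:obc} to the OHCP objective. Fix a nice tree decomposition of the $1$-skeleton of $K$ of width $k$ and process its bags bottom-up. The essential local bookkeeping is on $(d{-}1)$-simplices: since each simplex of $K$ induces a clique in its $1$-skeleton, every $d$- and $(d{-}1)$-simplex sits inside some bag, so its membership in the chains $c$ and $h = b + \partial c$ can be decided locally while it is active.

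At a bag $X_t$ the DP state is a pending assignment $h \in \{0,1\}^{F_t}$, where $F_t$ is the set of $(d{-}1)$-simplices $\tau \subseteq X_t$. Semantically $h_\tau = [\tau \in b] \oplus \bigoplus_{\sigma \supset \tau,\ \sigma\ \text{decided below}\ t} [\sigma \in c]$, the partial value of $h$ at $\tau$ produced by the $d$-simplex choices made in the subtree rooted at $t$. For each such state we store the minimum accumulated cost $\sum_{\tau\ \text{already finalized}}[h_\tau=1]$. Since $|F_t| \le \binom{k+1}{d} = O(k^d)$, the state space has size $2^{O(k^d)}$.

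The transitions are the standard nice-tree-decomposition ones. At an introduce node adding a vertex $v$, each newly contained $(d{-}1)$-simplex $\tau \ni v$ is inserted with $h_\tau = [\tau \in b]$. At a forget node removing $v$, I branch over the choice $s_\sigma \in \{0,1\}$ for every $d$-simplex $\sigma \ni v$ with $\sigma \subseteq X_t$; there are at most $\binom{k}{d} = O(k^d)$ such simplices, blowing up each state by $2^{O(k^d)}$ branches. For each branch I XOR $s_\sigma$ into $h_\tau$ for every $(d{-}1)$-face $\tau$ of $\sigma$, then for each $\tau \ni v$ I add $[h_\tau = 1]$ to the running cost and drop $\tau$ from the state. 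At a join node I combine a pair of child states via $h_\tau \gets h_\tau^{(1)} \oplus h_\tau^{(2)} \oplus [\tau \in b]$, which cancels the duplicate initialization, and keep the minimum over all pairs producing each combined state. The sole substantive departure from OBCP is at the forget step: OBCP rejects states with $h_\tau \ne 0$ (the boundary constraint $\partial c = b$ failing at $\tau$), whereas OHCP merely pays $[h_\tau = 1]$ and continues. At the root bag $\emptyset$ all simplices are finalized, and the smallest recorded cost equals the minimum of $|b + \partial c|$ over all $d$-chains $c$. Each of the $O(n)$ nodes does $2^{O(k^d)}$ work, yielding the claimed $2^{O(k^d)} n$ bound.

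The main technical obstacle I foresee is correct accounting at join nodes. A $d$-simplex $\sigma$ with $\sigma \subseteq X_t$ is active in both child subtrees, and its contribution to the $h_\tau$ of each $(d{-}1)$-face $\tau \subset \sigma$ must be XORed in exactly once overall, not once per subtree. I plan to handle this by a variant of the nice tree decomposition that includes ``introduce-simplex'' refinement nodes---analogous to the introduce-edge nodes used in graph DPs---so that every $d$-simplex is charged at a unique node on its active subtree and its contribution is counted exactly once. Verifying that the stated join combinator is correct under this convention, and that the DP invariants survive all transitions, is the main bookkeeping the proof needs to carry out.
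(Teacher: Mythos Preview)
Your dynamic program is essentially the paper's: your state $h$ on the $(d{-}1)$-simplices of $K[X_t]$ differs from the paper's $\beta=\partial\gamma\cap K[X_t]$ only by the fixed offset $b\cap K[X_t]$, and your leaf/introduce/forget/join transitions match the paper's OHCP recurrences once this substitution is made. (One typo: at a forget node removing $v$ you should branch over $d$-simplices $\sigma\ni v$ with $\sigma\subseteq X_{t'}$, the child bag, since $v\notin X_t$.)

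The obstacle you anticipate at join nodes is not real, and no introduce-simplex refinement is needed. Under your own rules a $d$-simplex $\sigma$ is decided only at the forget node where its first vertex leaves the bag; at a join with $X_t=X_{t'}=X_{t''}$, any $\sigma\subseteq X_t$ still has all its vertices present, so it has not been decided in either child subtree. The paper makes this explicit by excluding $(K[X_t])_d$ from $K_t$ and proving $(K_t)_d=(K_{t'})_d\sqcup(K_{t''})_d$ at a join (Lemma~\ref{lem:mbc_join_chain}), which is exactly why your combinator $h_\tau\gets h_\tau^{(1)}\oplus h_\tau^{(2)}\oplus[\tau\in b]$ is already correct as written.
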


Our results for OBCP and OHCP easily generalize to weighted complexes. We discuss this at the end of Section \ref{section:mbc}. 

\section{Related Work}

Blaser and V\.agset~\cite{blaser_hl} and Blaser et al.~\cite{blaser_mbc} independently discovered treewidth-parameterized algorithms for OHCP and OBCP concurrently to this paper. Their algorithms uses different graphs of the simplicial complex: the adjacency graph of the $d$-simplices and the incidence graph of the $(d{-}1)$- and $d$-simplices. Our algorithms differ in the specifics but use the same general strategy: use the tree decomposition to iteratively build the homologous or bounded chains. While the algorithms have seemingly different running times ($2^{O(k^d)}n$ for ours, $O(2^{2k}n)$ for theirs), the parameter $k$ is the treewidth of different graphs associated with a simplicial complex, so it is not obvious how the runtimes of algorithms compare. In Section \ref{sec:comparison}, we compare the treewidth of these graphs and find that in some cases the algorithms have comparable running times.
\par 
For the rest of this section, we will review the other previous work on the problems OBCP and OHCP.

\paragraph{OBCP.}
Dunfield and Hirani showed that OBCP is NP-hard \cite{dunfield-harani}. Borradaile, Maxwell, and Nayyeri showed OBCP over coefficients in $\Z_2$ is hard to approximate within some constant factor if $P\neq NP$ and hard to approximate with any constant factor assuming the Unique Games Conjecture, even for embedded complexes \cite{borradaile_mbc}. Blaser et al.~showed that OBCP is W[1]-hard when parameterized by solution size \cite{blaser_mbc}.
\par 
On the positive side, there are algorithms to solve OBCP in embedded complexes \cite{borradaile_mbc}, $d$-manifolds \cite{top_dim_obc}, with coefficients in $\R$ \cite{chambers_min_area}, in special cases with coefficients in $\Z$ \cite{dunfield-harani}, and with norms other than the Hamming norm \cite{cohensteiner_mbc}.
\par 
Our algorithm for OBCP can also be used to test whether two chains are homologous and whether a cycle is null-homologous. The naive algorithm for both of these problems is to solve a system of linear equations. Dey gives algorithms to test whether a cycle embedded on a surface is null-homologous that runs in $O(n+l)$ time, where $l$ is the length of the cycle, or with a $O(n)$ time preprocessing step, $O(g+l)$ time, where $g$ is the genus of the surface \cite{dey_null_homologous}. There is an algorithm for testing the homology class of a cycle that runs in $O(gl)$ time with a $O(n^\omega)$ preprocessing step, where $g$ is the rank of the homology group and $l$ is the number of simplices in the cycle ~\cite{busaryev_annotations}.

\paragraph{OHCP and Homology Localization.} Homology Localization is a special case of OHCP where the input chain is a cycle; thus, hardness results for homology localization hold for OHCP as well. Chen and Freedman showed that Homology Localization over coefficients in $\Z_{2}$ is NP-hard to approximate within any constant factor \cite{chen_freedman}. Chambers, Erickson, and Nayyeri showed that Homology Localization is NP-hard even for surface-embedded graphs \cite{chambers_min_cut}. Borradaile, Maxwell, and Nayyeri showed that OHCP is hard to approximate within any constant factor assuming the Unique Games Conjecture, even for embedded complexes \cite{borradaile_mbc}. Blaser and V\.agset showed Homology Localization is $W[1]$-hard when parameterized by solution size \cite{blaser_hl}.
\par
On the positive side, there are parameterized algorithms to solve homology localization for 1-cycles in surface embedded graphs \cite{chambers_min_cut, erickson_nayyeri_min_cut}, 1-cycles in simplicial complexes \cite{busaryev_annotations}, and to solve OHCP in $(d{+}1)$-manifolds \cite{borradaile_mbc}. There is an algorithm to solve OHCP with a norm other than the Hamming norm \cite{cohensteiner_mbc}. There are also linear programming based approaches to solve OHCP in special cases with coefficients in $\Z$ \cite{dey_optimal_homologous}.

\section{Background}
\label{sec:prelim}

In this section, we present the mathematical background needed for the rest of this paper.

\subsection{Simplicial complexes}

A \textit{\textbf{simplicial complex}} is a set $K$ such that (1) each element $\sigma\in K$ is a finite set and (2) for each $\sigma\in K$, if $\tau\subset\sigma$, then $\tau\in K$. An element $\sigma\in K$ is a \textit{\textbf{simplex}}. A simplex $\tau$ is a \textit{\textbf{face}} of a simplex $\sigma$ if $\tau\subset\sigma$. Likewise, $\sigma$ is a \textit{\textbf{coface}} of $\tau$. The simplices $\sigma$ and $\tau$ are \textit{\textbf{incident}}. 
\par
A simplex $\sigma$ with $|\sigma|=d+1$ is a \textit{\textbf{d-simplex}}. The set of all $d$-simplices in $K$ is denoted $K_d$. The \textit{\textbf{d-skeleton}} of $K$ is $K^{d}=\cup_{i=0}^{d}K_d$. In particular, the 1-skeleton of $K$ is a graph.  The \textit{\textbf{dimension}} of a simplicial complex is the largest integer $d$ such that $K$ contains a $d$-simplex. 
\par 
The union $V=\cup_{\sigma\in K}\sigma$ is the set of \textit{\textbf{vertices}} of a simplicial complex. Each simplex in $K$ is a subset of $V$. A subset of vertices $U\subset V$ defines a subcomplex of $K$. The \textit{\textbf{subcomplex induced by U}} is  $K[U]=\{\sigma\in K\suchthat\sigma\subset U\}$.

\subsection{Homology}

Let $K$ be a simplicial complex. The \textit{\textbf{d\textsuperscript{th} chain group}} $C_d(K)$ of $K$ is the free abelian group with coefficients over $\Z_2$ generated by $K_d$. An element $\gamma\in C_d(K)$ is a \textit{\textbf{d-chain}}. A $d$-chain $\gamma$ with coefficients over $\Z_2$ naturally corresponds to a set of $d$-simplices in $K$, where the set contains a $d$-simplex $\sigma$ if and only if the coefficient on $\sigma$ is 1. We overload notation and use $\gamma$ to also refer to both the chain and the set defined by the chain. The \textit{\textbf{weight}} of a chain $\gamma$ is the number of simplices the set $\gamma$ contains and is denote $\|\gamma\|$; alternatively, $\|\gamma\|$ is the \textit{\textbf{Hamming norm}} of $\gamma$.  Addition of two $d$-chains takes the symmetric difference of the sets corresponding to these chains.
\par
Let $\sigma$ be a $d$-simplex. The \textit{\textbf{boundary}} of $\sigma$ is the $(d-1)$-chain $\boundary\sigma=\sum_{v\in\sigma}\sigma\setminus\{v\}.$
The boundary map linearly extends the notion of boundary from $d$-simplices to $d$-chains. The \textit{\textbf{boundary map}} is the homomorphism $\boundary_d:C_d(K)\to C_{d{-}1}(K)$ such that $\boundary_d\gamma=\sum_{\sigma\in\gamma}\boundary\sigma$. When obvious, we drop the $d$ and simply write $\boundary_d$ as $\boundary$. A ($d{-}1$)-chain $\beta$ \textit{\textbf{bounds}} a $d$-chain $\gamma$ if $\boundary\gamma=\beta$, and the chain $\gamma$ \textit{\textbf{spans}} $\beta$. 
\par
The composition of boundary maps $\boundary_{d{-}1}\circ\boundary_d=0$, the zero map. As $C_d(K)$ is abelian, then $\im\boundary_{d}\subset\ker\boundary_{d{-}1}$ is a normal subgroup. The \textit{\textbf{(d-1)st homology group}} is the quotient group $H_{d{-}1}(K)=\ker\boundary_{d{-}1}/\im\boundary_{d}$. An element $\beta\in\ker\boundary_{d{-}1}$ is a \textit{\textbf{(d{-}1)-cycle}}. A $(d{-}1)$-cycle $\beta$ is \textit{\textbf{null-homologous}} if $\beta=\boundary\gamma$ for some $d$-chain $\gamma$. Two $(d{-}1)$-chains $b$ and $h$ are \textit{\textbf{homologous}} if $b+h$ is null-homologous.

\subsection{Tree decompositions}

Let $G=(V,E)$ be a graph. A \textit{\textbf{tree decomposition}} of $G$ is a tuple $(T,X)$, where $T=(I,F)$ is a tree with nodes $I$ and edges $F$, and $X=\{X_t\subset V\,|\,t\in I\}$ such that (1) $\cup_{t\in I}X_t=V$, (2) for any $\{v_1,v_2\}\in E$, $\{v_1,v_2\}\subset X_t$ for some $t\in I$, and (3) for any $v\in V$, the subtree of $T$ induced by the nodes $\{t\in I\mid v\in X_t\}$ is connected. A set $X_t$ is the \textit{\textbf{bag}} of T. The \textit{\textbf{width}} of $(T,X)$ is ${\max}_{t\in I}|X_t|+1$. The \textit{\textbf{treewidth}} of a graph $G$ is $\tw(G)$, the minimum width of any tree decomposition of $G$. Computing the treewidth of a graph is NP-hard \cite{arnborg_treewidth_hardness}, but there are algorithms to compute tree decompositions that are within a constant factor of the treewidth, e.g. see \cite{bodlaender2013ock}. 
\par
A \textit{\textbf{nice tree decomposition}} is a tree decomposition with a specified root $r\in I$ such that (1) $X_r=\emptyset$, (2) $X_l=\emptyset$ for all leaves $l\in I$, and  (3) all non-leaf nodes are either an introduce node, a forget node, or a join node, which are defined as follows. An \textit{\textbf{introduce node}} is a node $t\in I$ with exactly one child $t'$ such that for some $w\in V$, $w\notin X_{t'}$ and $X_t=X_{t'}\cup\{w\}$. We say $t$ \textit{introduces} $w$. A \textit{\textbf{forget node}} is a node $t\in I$ with exactly one child $t'$ such that for some $w\in V$, $w\notin X_t$ and $X_t\cup\{w\}=X_{t'}$. We say $t$ \textit{forgets} $w$. A \textit{\textbf{join node}} is a node $t\in I$ with two children $t',t''$ such that $X_{t}=X_{t'}=X_{t''}$. 
\begin{lemma}[Lemma 7.4 of \cite{Cygan_2015}]
\label{lem:kn-nodes}
Given a tree decomposition $(T,X)$ of width $k$ of a graph $G=(V,E)$, a nice tree decomposition of width $k$ with $O(kn)$ nodes can be computed in $O(k^2\max\{|V|,|I|\})$ time.
\end{lemma}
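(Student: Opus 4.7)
The plan is to construct the nice tree decomposition by rooting $T$ at an arbitrary node and then applying a sequence of local surgeries to each edge and each high-degree node of $T$. For each parent--child edge $(t, t')$ I would replace the edge by a path of intermediate bags that differ from one another by exactly one element: first introducing the elements of $X_t \setminus X_{t'}$ one at a time moving up from $t'$, and then forgetting the elements of $X_{t'} \setminus X_t$ one at a time. Because $|X_t|$ and $|X_{t'}|$ are both $O(k)$, this inserts $O(k)$ new nodes per edge. Next, I would replace any node $t$ with $d > 2$ children by a binary tree of $d - 1$ join nodes, each a copy of $X_t$; combined with the edge step this guarantees that every join node has two children carrying exactly the same bag. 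Finally, at each leaf $l$ with $X_l \neq \emptyset$ I would attach a chain of $|X_l|$ introduce nodes below $l$ ending at a new empty-bag leaf, and symmetrically attach a chain of forget nodes above the root to obtain an empty-bag root.

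For the size bound I would first preprocess $(T, X)$ to ensure $|I| = O(n)$ by repeatedly contracting any edge whose endpoints have comparable bags, replacing the two nodes by a single node carrying the larger bag; the resulting tree decomposition still has width $k$, and a standard argument (each ``non-redundant'' bag introduces a new vertex) bounds its node count by $O(n)$. This preprocessing takes $O(k^2 |I|)$ time. After preprocessing, each of $O(n)$ edges contributes $O(k)$ new nodes, the join-node insertions sum telescopically to $O(n)$ nodes across the whole tree, and the leaf and root adjustments contribute $O(k)$ additional nodes each, for $O(kn)$ nodes total. Every new bag can be computed in $O(k)$ time by copying and editing its predecessor, so the construction after preprocessing runs in $O(k \cdot kn) = O(k^2 n)$ time; together with preprocessing this matches the claimed $O(k^2 \max\{|V|, |I|\})$ running time.

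The main step that requires care is verifying that properties (1)--(3) of a tree decomposition are preserved by each surgery, especially that every vertex $v \in V$ still induces a connected subtree. For edge-subdivision this holds because the inserted bags form a monotone ascending-then-descending chain, so the set of bags containing $v$ remains contiguous; for join-node duplication the bags are literal copies, so connectedness is unchanged; and the leaf- and root-extension paths only extend, never split, the set of bags containing any vertex. The remaining bookkeeping to track intermediate bags in $O(k)$ time per node is routine, so the hard part is really just the preprocessing bound that reduces $|I|$ to $O(n)$ without increasing width.
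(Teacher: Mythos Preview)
The paper does not prove this lemma; it is quoted directly from Cygan et al.\ without argument. Your sketch follows the standard textbook construction and is essentially what appears there, so in that sense there is nothing to compare.

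There is, however, one step in your write-up that would fail as stated. Along an edge $(t,t')$ you propose to start at $X_{t'}$ and first \emph{introduce} the elements of $X_t\setminus X_{t'}$, then forget the elements of $X_{t'}\setminus X_t$. After the introduce phase the intermediate bag is $X_{t'}\cup(X_t\setminus X_{t'})=X_t\cup X_{t'}$, which may contain up to $2(k+1)$ vertices, so the resulting decomposition no longer has width $k$. The fix is to reverse the order: first forget the elements of $X_{t'}\setminus X_t$, shrinking the bag down to $X_t\cap X_{t'}$, and only then introduce the elements of $X_t\setminus X_{t'}$. With this order every intermediate bag is a subset of either $X_t$ or $X_{t'}$, so the width is preserved. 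Your connectedness argument still applies (each vertex occupies a contiguous sub-path of the inserted chain), and the node-count and running-time bounds are unaffected.
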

\begin{figure}
    \centering
    \begin{subfigure}{0.3\textwidth}
        \centering
        \includegraphics[height=1in]{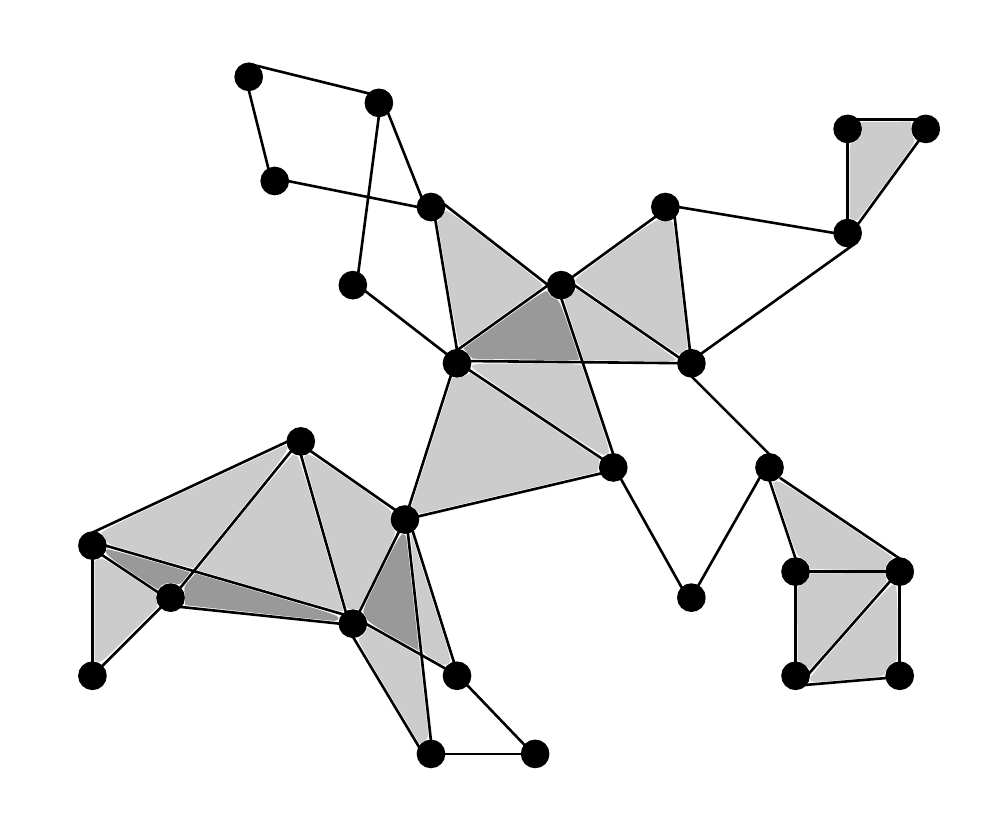}
    \end{subfigure}
    \begin{subfigure}{0.3\textwidth}
        \centering
        \includegraphics[height=1in]{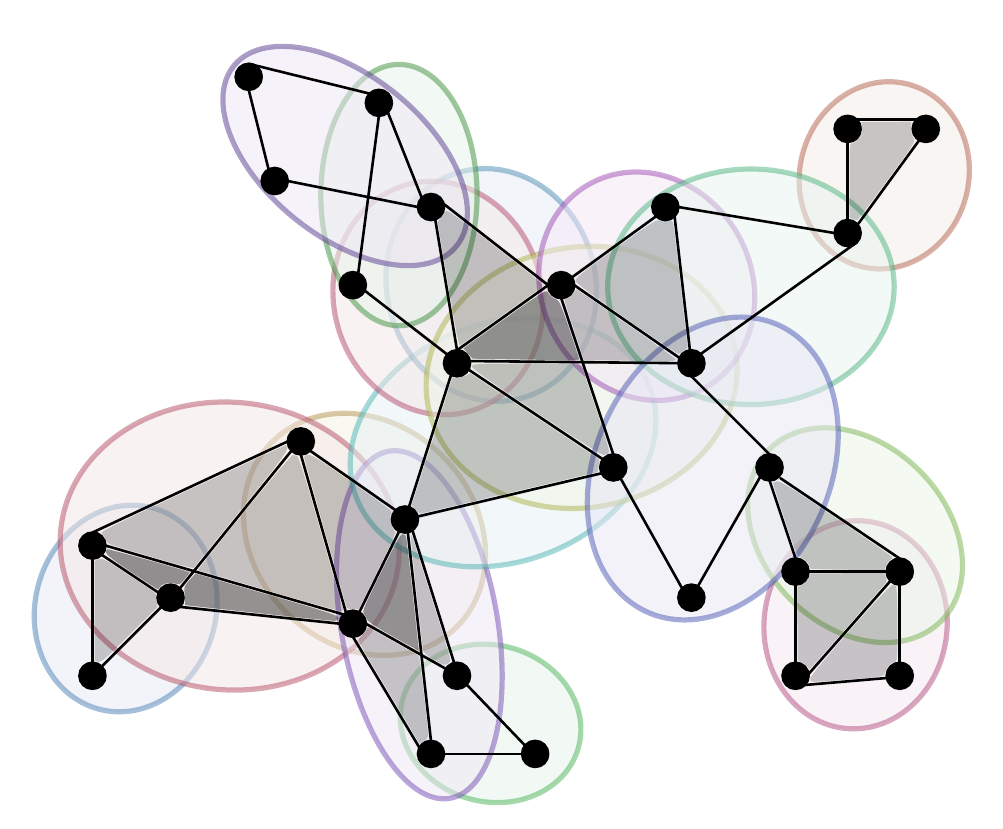}
    \end{subfigure}
    \begin{subfigure}{0.3\textwidth}
        \centering
        \includegraphics[height=1in]{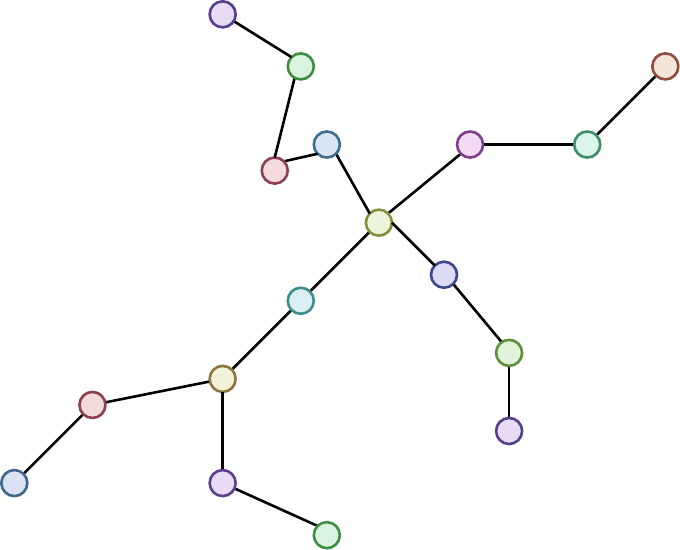}
    \end{subfigure}
    \caption{Left: A simplicial complex. Right and Center: A (not nice) tree decomposition of the simplicial complex. Each node of the tree corresponds to a subset of the vertices of the simplicial complex}
    \label{fig:td}
\end{figure}  
Let $K$ be a simplicial complex. Let $(T,X)$ be a tree decomposition of the 1-skeleton of $K$. Each vertex and edge of $K$ is contained in a bag of the tree decomposition by definition. In general, all simplices of $K$ are contained in a bag of the tree decomposition, which we prove in Corollary \ref{cor:simplex_bag}. 

\begin{lemma}
\label{lem:clique-bag}
Let $(T,X)$ be a tree decomposition of a graph $G=(V,E)$. If $Q\subset V$ forms a clique in $G$, then there is a node $t\in I$ such that $Q\subset X_t$. Moreover, the set of nodes $\{t\in I\suchthat Q\subset X_t\}$ form a connected subtree of $T$.
\end{lemma}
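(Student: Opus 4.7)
The plan is to reduce the statement to the Helly property for subtrees of a tree: any finite family of pairwise intersecting subtrees of a tree has a common node. For each $v \in Q$, let $T_v = \{t \in I \suchthat v \in X_t\}$. By property (3) of a tree decomposition, each $T_v$ is a (nonempty) connected subtree of $T$. Property (1) handles the trivial case $|Q|=1$, so assume $|Q| \ge 2$.

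For any two vertices $u,v \in Q$, the pair $\{u,v\}$ is an edge of $G$ (since $Q$ is a clique), and by property (2) this edge lies in some bag $X_t$; this $t$ belongs to $T_u \cap T_v$, so the subtrees $\{T_v\}_{v \in Q}$ pairwise intersect. Applying the Helly property gives a node $t^{\ast} \in \bigcap_{v \in Q} T_v$, which is precisely a node with $Q \subset X_{t^{\ast}}$. For the "moreover" part, observe that
\[
\{t \in I \suchthat Q \subset X_t\} \;=\; \bigcap_{v \in Q} T_v,
\]
and a nonempty intersection of connected subtrees of a tree is itself a connected subtree (the intersection is closed under the unique path property), giving the required connected subtree of $T$.

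The main obstacle is justifying the Helly property, since the rest of the argument is a direct translation of the tree-decomposition axioms. I would prove it by induction on $|I|$: pick a leaf $\ell$ of $T$; if every subtree in the family contains $\ell$ we are done, otherwise fix some subtree $T_i$ of the family that avoids $\ell$, and note that any $T_j$ that contains $\ell$ must also contain the unique neighbor of $\ell$ in $T$ in order to reach $T_i$ by connectedness. Hence deleting $\ell$ from $T$ and from every subtree that contained it leaves a family of pairwise intersecting subtrees on a strictly smaller tree (pairwise intersections are preserved because any $T_a \cap T_b$ equal to $\{\ell\}$ would contradict the neighbor observation applied to both $T_a$ and $T_b$), and induction finishes the proof.
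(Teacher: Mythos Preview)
Your proof is correct. The argument via the Helly property for subtrees is a standard and valid route, and your inductive sketch of Helly is sound (in particular, the key observation that any subtree containing the leaf $\ell$ and meeting a subtree avoiding $\ell$ must also contain the neighbor of $\ell$ is exactly what makes the induction go through).

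Compared with the paper: the paper does not prove the existence claim but simply cites Diestel's textbook, whereas you supply a self-contained argument via Helly. For the ``moreover'' part, the two proofs are really the same idea phrased differently: the paper shows directly that any node on the $t_1$--$t_2$ path lies in every $T_v$ (using property~(3)), which is precisely your statement that $\bigcap_{v\in Q} T_v$ is closed under paths and hence connected. So your approach is slightly more work but entirely self-contained, while the paper's is shorter by outsourcing the first half to a reference.
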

\begin{proof}
    A proof that $Q$ is contained in $X_t$ for some node $t$ can be found in \cite[Lemma 12.3.5]{diestel}. For the second claim, let $t_1$ and $t_2$ be nodes such $Q\subset X_{t_1}\cap X_{t_2}$. Let $t_3$ be a node on the unique path connecting $t_1$ and $t_2$. For any vertex $v\in Q$, $v\in X_{t_3}$ as $v\in X_{t_1}\cap X_{t_2}$ and the set of nodes whose bags contain $v$ form a connected subtree of $T$. So $Q\subset X_{t_3}$.
\end{proof}

\begin{corollary}
\label{cor:simplex_bag}
Let $(T,X)$ be a tree decomposition of the 1-skeleton $K^{1}$ of a simplicial complex $K$. Let $\sigma\in K$ be a simplex. There is a node $t\in I$ such that $\sigma\subset X_t$. Moreover, the set of nodes $\{t\in I\suchthat \sigma\subset X_t\}$ form a connected subtree of $T$.
\end{corollary}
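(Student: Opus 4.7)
The plan is to reduce this to Lemma \ref{lem:clique-bag} by showing that the vertex set of any simplex $\sigma \in K$ forms a clique in the 1-skeleton $K^1$. Once that reduction is established, both conclusions — existence of a bag containing $\sigma$, and the connectivity of the subtree of bags containing $\sigma$ — follow immediately from Lemma \ref{lem:clique-bag} applied with $Q = \sigma$.

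The first step is to verify the clique property. Let $\sigma \in K$ be a simplex, and let $u,v \in \sigma$ be two distinct vertices. Since $K$ is a simplicial complex, every subset of $\sigma$ belongs to $K$; in particular $\{u,v\} \subset \sigma$ implies $\{u,v\} \in K$, so $\{u,v\}$ is a 1-simplex and hence an edge of $K^1$. Thus every pair of vertices of $\sigma$ is joined by an edge in $K^1$, which means $\sigma$ is a clique in $K^1$.

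Having established this, I apply Lemma \ref{lem:clique-bag} to the tree decomposition $(T,X)$ of $K^1$ and the clique $Q = \sigma$. The lemma directly yields a node $t \in I$ with $\sigma \subset X_t$, and it also gives that the set $\{t \in I \suchthat \sigma \subset X_t\}$ is a connected subtree of $T$, which are exactly the two claims of the corollary.

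There is no real obstacle here: the only substantive point is noticing that a simplex induces a clique in the 1-skeleton, which is an immediate consequence of downward-closure. The argument is essentially a one-line reduction to the previously proven lemma.
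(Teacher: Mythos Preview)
Your proposal is correct and follows essentially the same approach as the paper: observe that downward-closure of $K$ forces every pair of vertices of $\sigma$ to span an edge in $K^1$, so $\sigma$ is a clique, and then invoke Lemma~\ref{lem:clique-bag}. The paper's proof is the same one-line reduction.
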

\begin{proof}
All subsets of $\sigma$ are simplices in $K$. In particular, for any vertices $u,v\in\sigma$, the edge $\{u,v\}\in K$. Viewed as a set of vertices, $\sigma$ forms a clique in $K^{1}$.
\end{proof}
As all simplices of $K$ are contained in some bag in the tree decomposition, then the dimension of $K$ gives a lower bound on the treewidth of the 1-skeleton.
\begin{corollary}
\label{cor:tw_dim}
Let $K$ be a $d$-dimensional simplicial complex. Then $\tw(K^1)\geq d$.
\end{corollary}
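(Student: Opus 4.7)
The plan is to deduce the bound directly from Corollary~\ref{cor:simplex_bag}, which has already been established. Since $K$ is $d$-dimensional, by definition $K$ contains at least one $d$-simplex $\sigma$, and $\sigma$ has exactly $d+1$ vertices. Corollary~\ref{cor:simplex_bag} tells us that there is some node $t\in I$ of any tree decomposition $(T,X)$ of $K^{1}$ such that $\sigma\subset X_{t}$, so the bag $X_{t}$ contains at least $d+1$ vertices.

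From this, every tree decomposition of $K^{1}$ has a bag of size at least $d+1$, and hence width at least $d$ under the standard convention. Taking the minimum over all tree decompositions gives $\tw(K^{1})\geq d$, as claimed.

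There is no real obstacle here: the work has already been done in Lemma~\ref{lem:clique-bag} and its corollary. The only subtle step is recognizing that the vertex set of a $d$-simplex forms a $(d{+}1)$-clique in $K^{1}$, which is exactly the content of Corollary~\ref{cor:simplex_bag}, so the proof reduces to a one-line application of that result together with the definitions of $d$-simplex and tree decomposition width.
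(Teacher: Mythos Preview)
Your proof is correct and follows exactly the approach the paper intends: the corollary is stated immediately after Corollary~\ref{cor:simplex_bag} with only the remark that ``all simplices of $K$ are contained in some bag,'' and your argument spells out precisely this---a $d$-simplex has $d+1$ vertices, so some bag has size at least $d+1$, forcing width at least $d$.
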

In light of Corollary \ref{cor:simplex_bag}, we define a \textit{\textbf{tree decomposition of a simplicial complex K}} as a tree decomposition of the 1-skeleton of $K$. See Figure \ref{fig:td}. The terms nice tree decomposition, width of a tree decomposition, and treewidth of a simplicial complex are all defined analogously.

\section{The Generic Algorithm}
\label{sec:overview}

In this section, we present a generic version of the algorithm for our two problems. Later sections discuss the specific algorithms for each problem.
\par 
A tree decomposition $(T,X)$ of a simplicial complex $K$ defines a recursively nested series of subcomplexes of the complex. We can use this series of subcomplexes to recursively build solutions to our problems.
\par 
Specifically, subtrees of the tree $T$ define subcomplexes of $K$. For a node $t$ in the tree, let $V_t$ be the union of the bags of each of $t$'s descendants, including $t$ itself. If $K$ is a $d$-dimensional simplicial complex, we define the \textit{\textbf{subcomplex rooted at t}} to be $K_t:=K[V_t]\setminus (K[X_t])_d$. See Figure \ref{fig:Kt}. If $t'$ is a descendant of $t$, then $K_{t'}\subset K_{t}$. We can therefore recurse onto subcomplexes of $K$ by recursing onto subtrees $T$. 
\begin{figure}[H]
    \centering
    \begin{subfigure}{0.28\textwidth}
        \centering
        \includegraphics[height=1in]{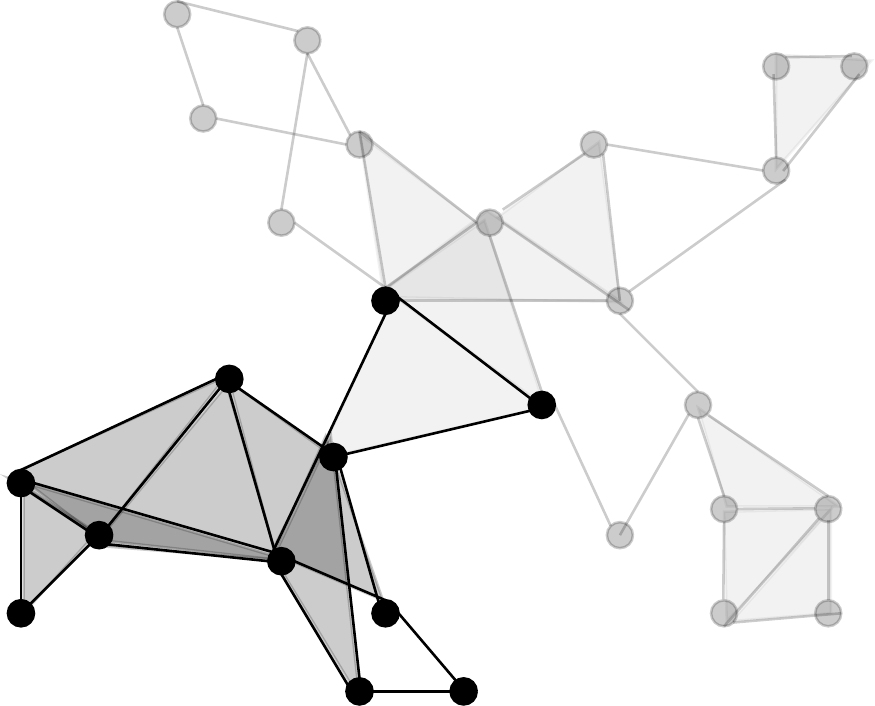}
    \end{subfigure}
    \begin{subfigure}{0.28\textwidth}
        \centering
        \includegraphics[height=1in]{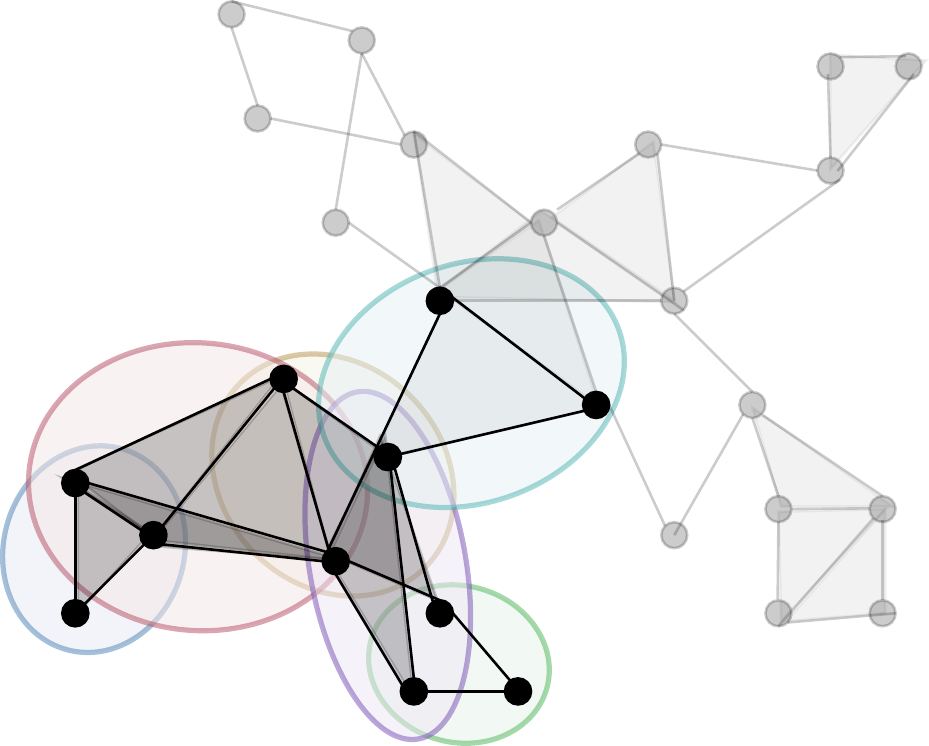}
    \end{subfigure}
    \begin{subfigure}{0.28\textwidth}
        \centering
        \includegraphics[height=1in]{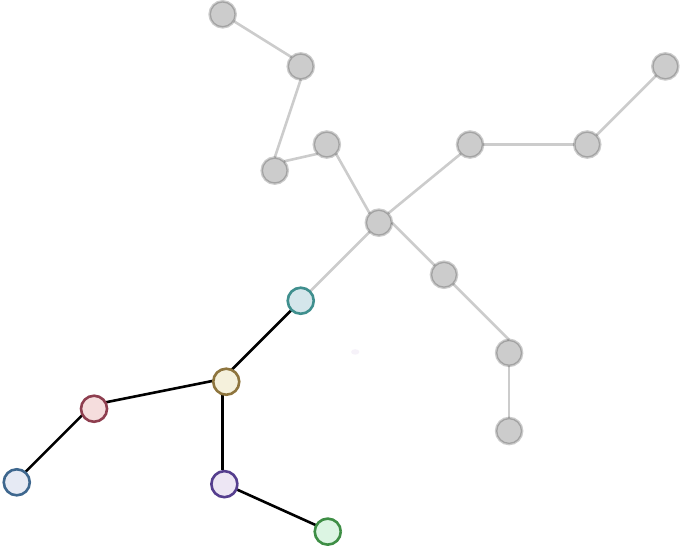}
    \end{subfigure}
    \caption{Right: A subtree rooted at $t$. Left: the corresponding subcomplex $K_t$.}
    \label{fig:Kt}
\end{figure}

\par
Each of our algorithms compute a set of \textit{candidate solutions} at each node $t$. The exact definition of candidate solution varies between problems, but intuitively, a candidate solution at a node $t$ is a $d$-chain in $K_t$ that could be a subset of an optimal solution. In each of our algorithms, we are able to define our candidate solutions recursively: if $\gamma$ is a candidate solution at $t$, then for each child $t'$ of $t$, $\gamma\cap K_{t'}$ is a candidate solution at $t'$. This is the key fact our algorithm uses to find candidate solutions at $t$. Our algorithm attempts to build candidate solutions at $t$ by adding $d$-simplices in $K_t\setminus K_{t'}$ to candidate solutions at $t'$.
\par
We compute the set of candidate solutions at the nodes in a bottom-up fashion, starting at the leaves and moving towards the root. Our general approach for computing the set of candidate solutions at $t$ is as follows. Take one candidate solution from each of $t$'s children. Take the union of these candidate solutions. Add a subset of $d$-simplices in $K_t$ that are not in $K_{t'}$ for any of $t$'s children $t'$ to this union. Check if this union is a candidate solution at $t$. Repeat for all sets of candidate solutions at the children and all subset of $d$-simplices.
\par
The above algorithm for computing the set of candidate solutions at a node is generally correct, but because we use a \textit{nice} tree decomposition, we won't perform every step of this algorithm at any one node. Instead, certain steps of the algorithm are performed only at specialized nodes. For example, we only take the union of candidate solutions from multiple children at join nodes, as join nodes are the only type of node with multiple children. Also, we only add a subset of $d$-simplices in $K_t\setminus K_{t'}$ to the candidate solutions at forget nodes. 
\par 
It might be counter-intuitive that forgetting a vertex adds $d$-simplices to $K_t$, so let's see why this is the case. Let $t$ be a forget node and $t'$ its unique child. The set of vertices $V_t=V_{t'}$, so $K[V_t]=K[V_{t'}]$. However, as the bags $X_{t'}\subsetneq X_t$, there may be $d$-simplices in $K[X_{t'}]$ that are not in $K[X_{t}]$. As we exclude the $d$-simplices in $K[X_{t'}]$ from $K_{t'}$, the $d$-simplices in $K[X_{t'}]\setminus K[X_t]$ will be in $K_t$ but not $K_{t'}$. See Figure \ref{fig:forget}. Alternatively, for introduce and join nodes, no new $d$-simplices are added to $K_t$ that do not appear in some $K_{t'}$ for a child $t'$ of $t$. If $t$ is an introduce node and $t'$ is its unique child, we can prove $(K_{t})_d=(K_{t'})_d$. While introducing a vertex $v$ may add $d$-simplices to $K[V_t]$, we can prove these simplices are always contained in $K[X_t]$; see Lemma \ref{lem:introduce_v}. Similarly, if $t$ is a join node and $t'$ and $t''$ are its two children, we can prove that $(K_t)_d=(K_{t'})_d\sqcup (K_{t''})_d$; see Lemma \ref{lem:mbc_join_chain}.

\begin{figure}
    \centering
    \begin{subfigure}{0.3\textwidth}
        \centering
        \includegraphics[width=0.6\linewidth]{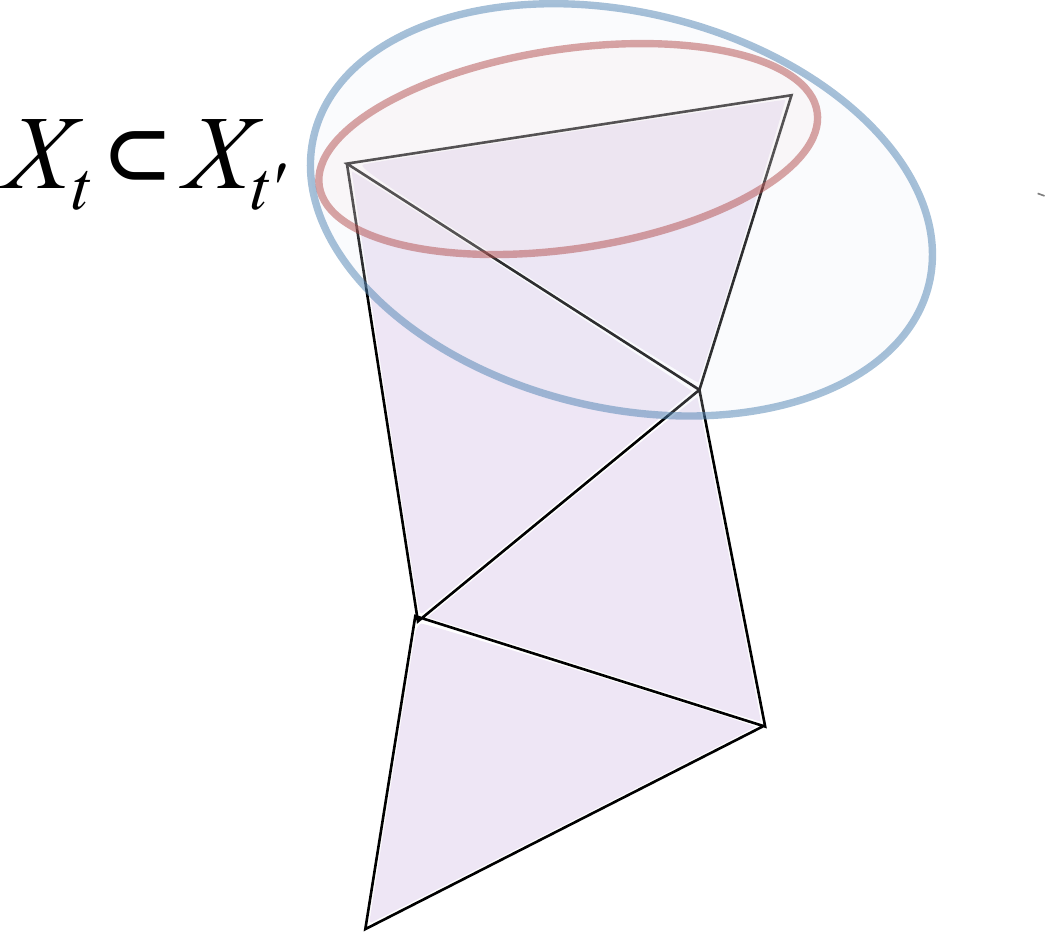}
    \end{subfigure}
    \begin{subfigure}{0.3\textwidth}
        \centering
        \includegraphics[width=0.6\linewidth]{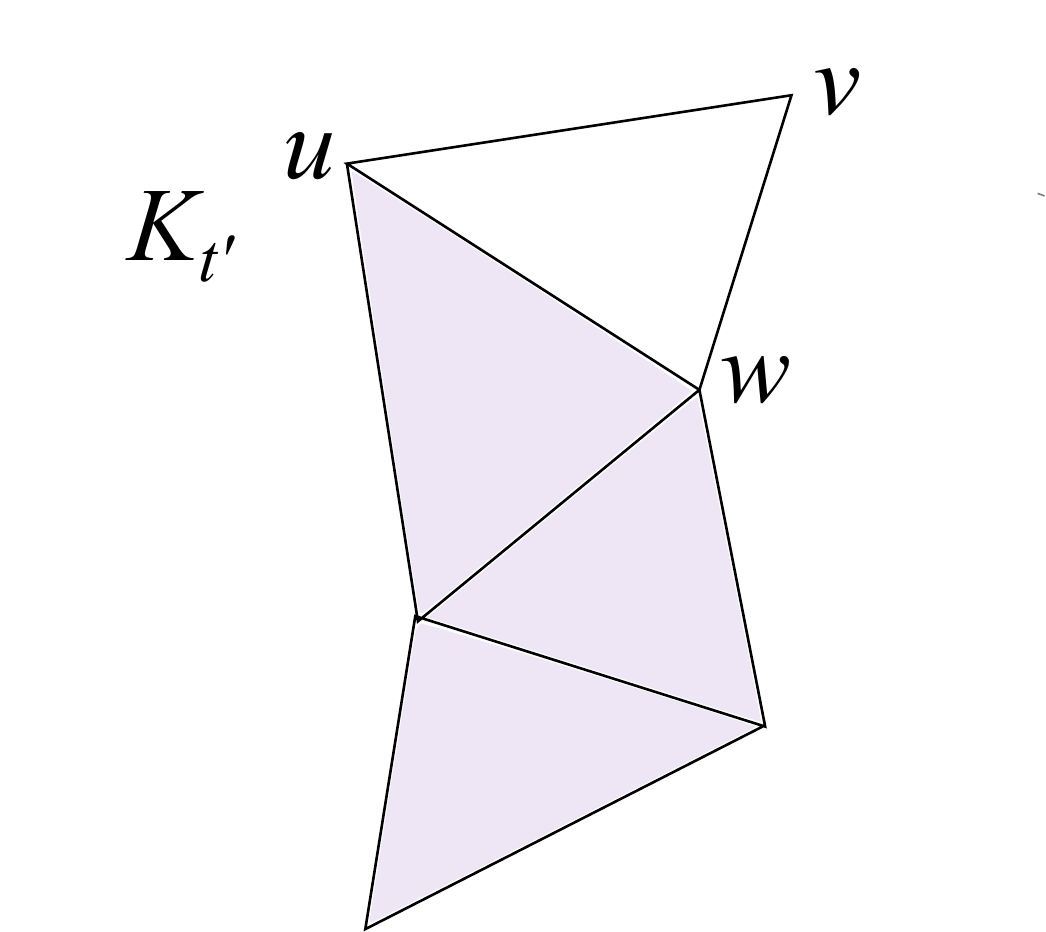}
    \end{subfigure}
    \begin{subfigure}{0.3\textwidth}
        \centering
        \includegraphics[width=0.6\linewidth]{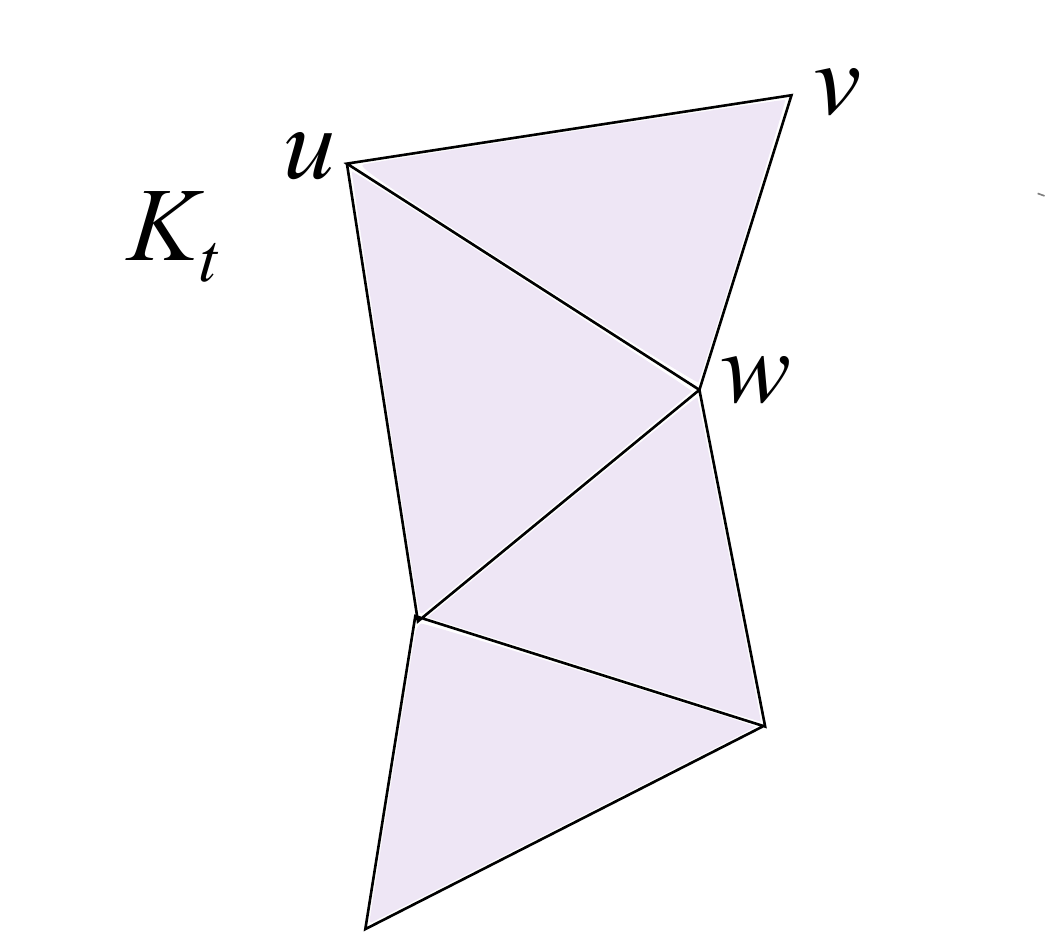}
    \end{subfigure}
    \caption{An example of a forget node $t$ and its child $t'$. The triangle $\{u,v,w\}$ is contained in $K[X_{t'}]$ but not $K[X_t]$. As $K_t$ excludes triangles in $K[X_t]$, then $\{u,v,w\}\in K_{t}\setminus K_{t'}$.}
    \label{fig:forget}
\end{figure}

%
%

\section{Optimal Bounded Chain Problem}
\label{section:mbc}

Let $K$ be a simplicial complex and $b\in C_{d{-}1}(K)$ a $(d{-}1)$-chain. The \textit{\textbf{Optimal Bounded Chain Problem}} asks to find the minimum weight $d$-chain $c\in C_{d}(K)$ bounded by $b$, if such a chain $c$ exists. Let $(T,X)$ a nice tree decomposition of $K$ with root $r$. We will use a dynamic program on $T$ to compute $c$.
\par
At a node $t\in T$, we are interested in the portion of $b$ contained in $K_t$; we define the \textit{\textbf{partial boundary at t}} as $b_t:=b\cap (K_t\setminus K[X_t])$. Our key observation is that any $d$-simplex that is incident to $b_t$ is contained in $K_t$.  This means that after the iteration in the algorithm where we process $t$, the portion of a candidate solution's boundary in $K_t$ but outside of $K[X_t]$ cannot be changed by adding more $d$-simplices. Therefore, a candidate solution $\gamma$ at $t$ should be a $d$-chain in $K_t$ that satisfies $\boundary\gamma\setminus K[X_t]=b_t$. Otherwise, we place no restriction on the rest of $\gamma$'s boundary, $\boundary\gamma\cap K[X_t]$. There might be other simplices in $K\setminus K_t$ incident to $\boundary\gamma\cap K[X_t]$, so even if $\boundary\gamma\cap K[X_t]\neq b\cap K[X_t]$, this portion of $\boundary\gamma$ might change later in the algorithm. We therefore let $\boundary\gamma\cap K[X_t]$ be any $(d{-}1)$-chain of $K[X_t]$. We call this chain $\beta=\boundary\gamma\cap K[X_t]$. A $d$-chain $\gamma\in C_d(K_t)$ is said to \textit{\textbf{span $\boldsymbol\beta$ at t}} if $\boundary\gamma\setminus K[X_t]=b_t$ and $\boundary\gamma\cap K[X_t]=\beta$. See Figure \ref{fig:obcp_example}.
 \par   
 \begin{figure}[H]
    \centering
    \begin{subfigure}{0.12\textwidth}
        \centering
        \includegraphics[height=0.7in]{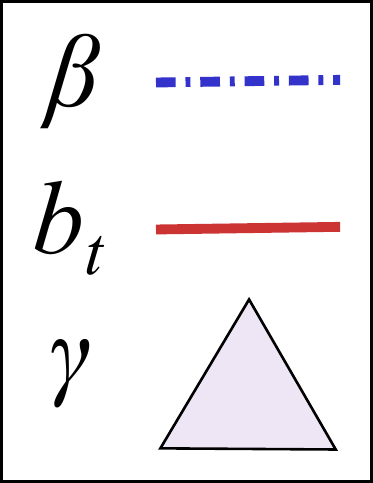}
    \end{subfigure}
    \begin{subfigure}{0.2\textwidth}
        \centering
        \includegraphics[height=1.3in]{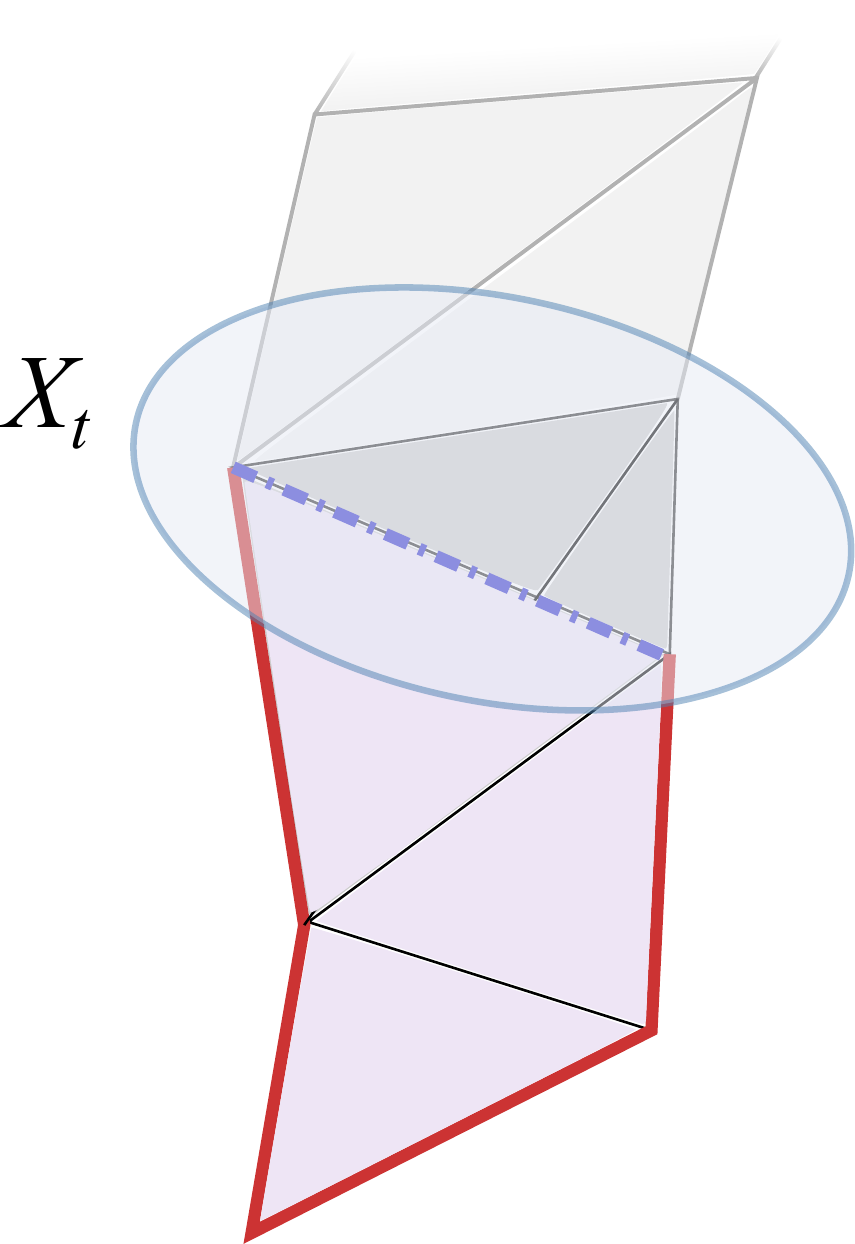}
    \end{subfigure}
    \begin{subfigure}{0.2\textwidth}
        \centering
        \includegraphics[height=1.3in]{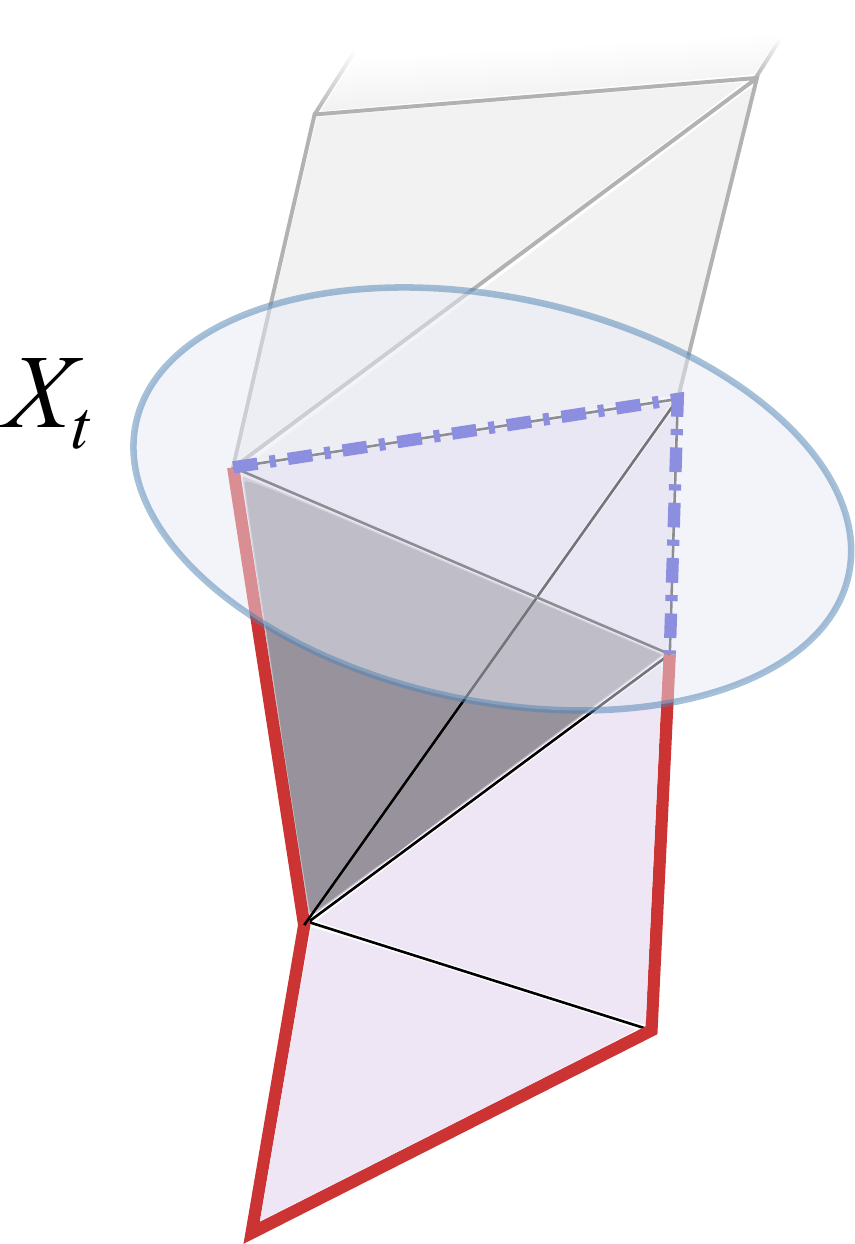}
    \end{subfigure}
    \caption{Two examples of candidate solutions for OBCP. Notice that $\boundary\gamma \cap K[X_t] = \beta$ is different for the two candidate solutions, but $\boundary\gamma\setminus K[X_t]=b_t$ is the same. In general, $\boundary\gamma \cap K[X_t]$ can be any $(d{-}1)$-chain in $K[X_t]$, but $\boundary\gamma\setminus K[X_t]$ must be $b_t$.}
    \label{fig:obcp_example}
\end{figure}
 
 For each $(d{-}1)$-chain $\beta$ in $K[X_t]$ such that $\beta \cup b_t$ is a boundary cycle, we store the minimum weight of a $d$-chain that spans $\beta$ at $t$ in the dynamic programming table entry $\V[\beta,t]$. The bounded width of the decomposition implies a bounded number of $(d{-}1)$-chains in $K[X_t]$, so the table $\V$ is of bounded size. Lemma \ref{lem:mbc_root} tells us that the table $\V$ contains the weight of the optimal solution.
\par
\begin{lemma}
\label{lem:mbc_root}
    The entry $\mathbf{V}[\emptyset,r]$ is the weight of the minimum weight chain $c$ such that $\boundary c=b$, if such a chain $c$ exists. 
\end{lemma}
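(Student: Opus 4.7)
The plan is to unpack the definitions of $K_r$, $b_r$, and the "spans $\emptyset$ at $r$" condition at the root of the nice tree decomposition, and observe that they collapse to exactly the condition $\boundary \gamma = b$. This is essentially a base-case sanity check that the dynamic programming table entry at the root encodes the quantity we want.

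First I would record the following simplifications at $r$. By definition of a nice tree decomposition, $X_r = \emptyset$, so $K[X_r]$ contains no simplex of positive dimension; in particular it has no $(d{-}1)$-simplices and no $d$-simplices. Next, since every vertex of $K$ appears in some bag and (by the subtree condition of a tree decomposition, cf.\ Corollary \ref{cor:simplex_bag}) the bags containing a fixed vertex form a connected subtree of $T$, every vertex of $K$ lies in the subtree descending from $r$. Hence $V_r = V$, so $K[V_r] = K$, and therefore $K_r = K[V_r] \setminus (K[X_r])_d = K$. Combining these, the partial boundary reduces to $b_r = b \cap (K_r \setminus K[X_r]) = b$, since $b$ is a $(d{-}1)$-chain and $K[X_r]$ contributes no $(d{-}1)$-simplices to subtract.

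Second, I would apply these reductions to the spanning condition. A $d$-chain $\gamma \in C_d(K_r) = C_d(K)$ spans $\emptyset$ at $r$ exactly when $\boundary \gamma \setminus K[X_r] = b_r$ and $\boundary \gamma \cap K[X_r] = \emptyset$. The second equation is vacuous since $K[X_r]$ has no $(d{-}1)$-simplices, and the first equation becomes $\boundary \gamma = b$ for the same reason. Thus the chains spanning $\emptyset$ at $r$ are precisely the chains with boundary $b$, and by definition $\mathbf{V}[\emptyset, r]$ is the minimum weight of such a chain. The conditional "if such a chain $c$ exists" in the statement aligns with the requirement in the definition of $\mathbf{V}$ that $\emptyset \cup b_r = b$ be a boundary cycle: if no spanning chain exists at $r$, then no entry is stored and the lemma is vacuously true in that case.

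There is no substantive obstacle here; the lemma is purely a definitional unpacking. The only points that require care are verifying $V_r = V$ from the covering axiom of tree decompositions and noting that $X_r = \emptyset$ forces $K[X_r]$ to contribute nothing to either the $(d{-}1)$-simplex removal in $b_r$ or the intersection in the spanning condition.
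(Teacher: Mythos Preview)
Your proof is correct and follows essentially the same approach as the paper: unpack the definitions at the root $r$ using $X_r=\emptyset$ and $V_r=V$ to see that spanning $\emptyset$ at $r$ is exactly the condition $\boundary\gamma=b$. The only minor difference is that the paper adopts the convention $\mathbf{V}[\emptyset,r]=\infty$ when no bounding chain exists, rather than leaving the entry unstored, but this does not affect the substance of the argument.
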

\begin{proof}
     Recall that the bag $X_r$ of the root $r$ in our tree decomposition is empty, so $K[X_r]=\emptyset$. Also, each node in the tree decomposition is a descendant of $r$, so $V_r=V$, $K_r=K$, and $b_r=b$. There is a unique $(d{-}1)$-chain $\emptyset\in C_{d{-}1}(K[X_r])$. The entry $\mathbf{V}[\emptyset,r]$ is the minimum weight of a $d$-chain $c\in C_d(K_r)=C_d(K)$ such that (1) $\boundary c\cap X_r=\emptyset$ and, more importantly, (2) $\boundary c\setminus K[X_r]=\boundary c=b_r=b$. If no such chain $c$ exists, then $\mathbf{V}[\emptyset, r]=\infty$.
\end{proof}
\par\noindent
We are almost ready to present our algorithm for OBCP, but first, we need a lemma.
\begin{lemma}
\label{lem:simplex_in_subcomplex}
     If $\sigma\in K[V_t]$, then there is a descendant $t_{\sigma}$ of $t$ such that $\sigma\in K[X_{t_{\sigma}}]$. In particular, if $\sigma\in K[V_t]$, then there is a descendant $t_{\sigma}$ of $t$ such that $\sigma\in K[X_{t_{\sigma}}]$
\end{lemma}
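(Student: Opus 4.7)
The plan is to combine Corollary~\ref{cor:simplex_bag} with the connectivity axiom of tree decompositions. Corollary~\ref{cor:simplex_bag} already guarantees that $S := \{s \in I : \sigma \subset X_s\}$ is a nonempty connected subtree of $T$; what remains is to show that at least one node of $S$ is a descendant of $t$ (interpreting $t$ itself as a descendant so that the edge case $\sigma \subset X_t$ is covered).

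I would argue by contradiction. Suppose no descendant of $t$ lies in $S$, and pick any $s \in S$; then $s$ sits outside the subtree $T_t$ rooted at $t$. For each $v \in \sigma$, the hypothesis $\sigma \in K[V_t]$ gives $v \in V_t$, so some descendant $d_v$ of $t$ satisfies $v \in X_{d_v}$. The unique path from $s$ to $d_v$ in $T$ must traverse $t$, since $d_v$ lies in $T_t$ while $s$ does not, and $t$ separates $T_t$ from the rest of the tree. The connectivity axiom of tree decompositions, applied to the vertex $v$, then forces $v \in X_u$ for every node $u$ on this path; in particular, $v \in X_t$. Repeating this argument for every $v \in \sigma$ yields $\sigma \subset X_t$, so $t \in S$, contradicting the assumption that no descendant of $t$ lies in $S$.

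I do not foresee a serious obstacle. The only step that warrants care is the justification that the path from $s$ to $d_v$ must pass through $t$; this rests on the standard observation that the root of a subtree separates it from its complement in the ambient tree. Everything else amounts to bookkeeping of which vertices lie in which bags, guided by the connectivity axiom.
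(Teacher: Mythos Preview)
Your proposal is correct and follows essentially the same route as the paper: pick a node whose bag contains $\sigma$ (via Corollary~\ref{cor:simplex_bag}, which is just Lemma~\ref{lem:clique-bag} specialized to simplices), and if it lies outside the subtree rooted at $t$, use the per-vertex connectivity axiom along the path through $t$ to conclude $\sigma \subset X_t$. The only cosmetic difference is that you phrase the second case as a contradiction while the paper simply observes that $t$ itself then serves as the required descendant.
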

\begin{proof}
    Recall that $K_t=K[V_t]\setminus (K[X_t])_d$, so $K_t\subset K[V_t]$. We prove the more general statement that if $\sigma\in K[V_t]$, then there is a descendant $t_{\sigma}$ of $t$ such that $\sigma\in K[X_{t_{\sigma}}]$. Let $\sigma\in K[V_t]$. By Lemma \ref{lem:clique-bag} there is a node $t_{\sigma}$ in $T$ such that $\sigma\subset X_{t_{\sigma}}$. If $t_{\sigma}$ is in the subtree rooted at $t$, we are done. So assume $t_{\sigma}$ is not in the subtree rooted at $t$. For every $v\in\sigma$, there is a node $t_v$ in the subtree rooted at $t$ such that $v\in X_{t_v}$. As well, $v\in X_{t_{\sigma}}$ as $v\in\sigma\subset X_{t_{\sigma}}$. The nodes containing $v$ form a connected subtree of $T$. So $v$ is contained in the bag of each node on the unique path connecting $t_v$ and $t_{\sigma}$. In particular, $v\in X_{t}$. Therefore, the bag $X_{t}$ contains every $v\in\sigma$, so $\sigma\in K[X_{t}]$.
\end{proof}
\par 
We now give our dynamic program to compute $\mathbf{V}$ on a nice tree decomposition. We compute the table $\V$ starting at the leaves of $T$ and moving towards the root. At each node $t$ in our tree decomposition, we calculate the entries $\mathbf{V}[\beta,t]$ using the entries of $\mathbf{V}$ of $t$'s children. Therefore, to specify our dynamic program, it suffices to specify how to calculate $\mathbf{V}$ at each type of node in a nice tree decomposition.

\subsection{Leaf Nodes}

Let $t$ be a leaf node. Recall that $X_t=\emptyset$, so $K[X_t]=\emptyset$. Moreover, $t$ has no children, so $V_t=K_t=b_t=\emptyset$. There is a unique $(d{-}1)$-chain $\beta=\emptyset\in C_{d{-}1}(K[X_t])$. There is also a unique $d$-chain $\gamma=\emptyset\in C_{d}(K_t)$. The chain $\gamma$ spans $\beta$ at $t$, so the unique table entry $\mathbf{V}[\emptyset,t]=0$.

\subsection{Introduce Nodes}

Let $t$ be an introduce node and $t'$ the unique child of $t$. Recall that $X_t=X_{t'}\sqcup\{w\}$ for some vertex $w\in K$. Our first observation is that any simplex in the subcomplex $K[V_t]$ that contains $w$ is only contained in the subcomplex at the bag $K[X_t]$.
\begin{lemma}
\label{lem:introduce_v}
Let $\sigma\in K[V_t]$ such that $w\in\sigma$. Then $\sigma\in K[X_t]\setminus K[V_{t'}]$.
\end{lemma}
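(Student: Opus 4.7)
The plan is to establish the two containments separately: first $\sigma \notin K[V_{t'}]$, and then $\sigma \in K[X_t]$.

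For the first containment, the key observation is that $w \notin V_{t'}$. To see this, I would use the connectivity property (3) of tree decompositions: the set of nodes whose bags contain $w$ forms a connected subtree of $T$. Since $w \in X_t$ but $w \notin X_{t'}$ (by definition of an introduce node), and $t'$ lies on the unique path from $t$ to any of its descendants, $w$ cannot appear in the bag of any descendant of $t'$ — otherwise the subtree of nodes containing $w$ would be disconnected. Hence $w \notin V_{t'}$. Because $w \in \sigma$, we conclude $\sigma \not\subset V_{t'}$, and therefore $\sigma \notin K[V_{t'}]$.

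For the second containment, I would apply Lemma \ref{lem:simplex_in_subcomplex} to $\sigma$ (which lies in $K[V_t]$) to obtain a descendant $t_\sigma$ of $t$ with $\sigma \subset X_{t_\sigma}$. If $t_\sigma = t$ we are done, since then $\sigma \in K[X_t]$. Otherwise $t_\sigma$ lies in the subtree rooted at $t'$, which would give $\sigma \subset X_{t_\sigma} \subset V_{t'}$. But this contradicts the previous paragraph, where we showed $\sigma \not\subset V_{t'}$. Hence $t_\sigma = t$ and $\sigma \in K[X_t]$.

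The only subtle step is showing $w \notin V_{t'}$, and this follows cleanly from the connected-subtree axiom of the tree decomposition. The rest is bookkeeping: one containment is immediate from $w \in \sigma$, and the other reduces, via Lemma \ref{lem:simplex_in_subcomplex}, to the same connectivity argument. I do not anticipate any real obstacle; the statement is essentially a structural observation about what an introduce node can add to $K[V_t]$.
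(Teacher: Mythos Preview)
Your proposal is correct and uses essentially the same ingredients as the paper's proof: Lemma~\ref{lem:simplex_in_subcomplex} together with the connected-subtree axiom for $w$. The only difference is cosmetic ordering---you first isolate the vertex-level fact $w\notin V_{t'}$ and deduce $\sigma\notin K[V_{t'}]$ before invoking Lemma~\ref{lem:simplex_in_subcomplex}, whereas the paper runs the contradiction directly on the simplex $\sigma$ and derives both conclusions from $t_\sigma=t$; the underlying argument is identical.
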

\begin{proof}
 As $\sigma\in K[V_t]$, then by Lemma \ref{lem:simplex_in_subcomplex}, there is a node $t_\sigma$ in the subtree rooted at $t$ such that $\sigma\in K[X_{t_\sigma}]$. Suppose for the purposes of contradiction that $t\neq t_\sigma$. Then $t_\sigma$ is in the subtree rooted at $t'$. As $w\in\sigma$, then $w\in X_{t_\sigma}$. The set of nodes whose bags contain $w$ form a connected subtree of $T$, so each node on the unique path connecting $t$ and $t_\sigma$ must contain $w$ in its bag. The node $t'$ is on this path, but $w\notin X_{t'}$, a contradiction. Thus, $t=t_\sigma$. As no node in the subtree rooted at $t'$ contains $\sigma$, then by Lemma \ref{lem:simplex_in_subcomplex}, $\sigma\notin K[V_{t'}]$.
\end{proof}

As each simplex that contains $w$ is contained in $K[X_t]$, then introducing $w$ does not change the complexes $K_t$ outside of $K[X_t]$. We prove this in the following lemma.
\begin{lemma}
\label{lem:mbc_introduce_complex}
The complexes $K_{t'}\setminus K[X_{t'}]=K_{t}\setminus K[X_{t}]$.
\end{lemma}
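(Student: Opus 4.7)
The plan is to unfold the definition $K_t = K[V_t]\setminus(K[X_t])_d$, observe that the set-minus by $K[X_t]$ absorbs the set-minus by $(K[X_t])_d$, and reduce the claim to the purely combinatorial identity
\[
K[V_t]\setminus K[X_t] \;=\; K[V_{t'}]\setminus K[X_{t'}].
\]
The two standing facts I will use are (i) $X_t = X_{t'}\sqcup\{w\}$, so in particular $X_{t'}\subset X_t$ and $V_t = V_{t'}\cup\{w\}$, and (ii) the vital point that $w\notin V_{t'}$. Fact (ii) is the one nontrivial ingredient and I expect it to be the main obstacle: it follows from the tree-decomposition connectedness property, since the set of nodes whose bag contains $w$ forms a connected subtree of $T$, which at the node $t$ already excludes $t'$ (as $w\notin X_{t'}$), so no descendant of $t'$ can contain $w$ in its bag.

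For the inclusion $\subseteq$, I would take $\sigma\in K[V_t]\setminus K[X_t]$ and invoke Lemma~\ref{lem:introduce_v}: if $w\in\sigma$ then $\sigma\in K[X_t]$, a contradiction. Hence $w\notin\sigma$, and combined with $\sigma\subset V_t = V_{t'}\cup\{w\}$ this yields $\sigma\subset V_{t'}$, i.e.\ $\sigma\in K[V_{t'}]$. Since $X_{t'}\subset X_t$ and $\sigma\not\subset X_t$, also $\sigma\notin K[X_{t'}]$.

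For the inclusion $\supseteq$, take $\sigma\in K[V_{t'}]\setminus K[X_{t'}]$. Then $\sigma\subset V_{t'}\subset V_t$, so $\sigma\in K[V_t]$. Suppose for contradiction that $\sigma\in K[X_t]$, so $\sigma\subset X_{t'}\cup\{w\}$. By fact (ii), $w\notin V_{t'}$, hence $w\notin\sigma$, which forces $\sigma\subset X_{t'}$, contradicting $\sigma\notin K[X_{t'}]$. Therefore $\sigma\notin K[X_t]$.

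Finally, to turn the identity $K[V_t]\setminus K[X_t]=K[V_{t'}]\setminus K[X_{t'}]$ into the stated claim, I observe
\[
K_t\setminus K[X_t] \;=\; \bigl(K[V_t]\setminus (K[X_t])_d\bigr)\setminus K[X_t] \;=\; K[V_t]\setminus K[X_t],
\]
and identically $K_{t'}\setminus K[X_{t'}] = K[V_{t'}]\setminus K[X_{t'}]$, so the two sides coincide. No deeper machinery is needed; the only subtle point is the use of the connected-subtree property to obtain $w\notin V_{t'}$.
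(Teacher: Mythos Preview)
Your proof is correct and follows essentially the same approach as the paper: both reduce to the identity $K[V_t]\setminus K[X_t]=K[V_{t'}]\setminus K[X_{t'}]$ via the absorption $K_t\setminus K[X_t]=K[V_t]\setminus K[X_t]$, and then prove the two inclusions by the same case analysis on whether $w\in\sigma$, invoking Lemma~\ref{lem:introduce_v} (or equivalently the connected-subtree fact $w\notin V_{t'}$) for the contradiction. The only cosmetic difference is that in the $\supseteq$ direction you appeal directly to $w\notin V_{t'}$, whereas the paper routes the same contradiction through Lemma~\ref{lem:introduce_v}.
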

\begin{proof}
    Observe that $K_t\setminus K[X_t]=(K[V_t]\setminus (K[X_t])_d)\setminus K[X_t]=K[V_t]\setminus K[X_t]$. We will therefore prove that $K[V_t]\setminus K[X_t]=K[V_{t'}]\setminus K[X_{t'}]$.
    \par
    Let $\sigma\in K[V_{t'}]\setminus K[X_{t'}]$. It follows that $\sigma\in K[V_{t}]$ as $V_{t'}\subset V_t$. We need to show that $\sigma\notin K[X_t]$. Suppose $\sigma\in K[X_t]$. As $X_t\setminus X_{t'}=\{w\}$, then $w\in\sigma$. This cannot be the case, as any simplex containing $w$ is not in $K_{t'}\subset  K[V_{t'}]$ by Lemma \ref{lem:introduce_v}. So $\sigma\notin K[X_t]$ and $\sigma\in K_t\setminus K[X_t]$
    \par
    Now let $\sigma\in K[V_{t}]\setminus K[X_{t}]$. We conclude that $\sigma\not\in K[X_{t'}]$ as $X_{t'}\subset X_t$. We need to show that $\sigma\in K[V_{t'}]$. Suppose $\sigma\not\in K[V_{t'}]$. Each descendant of $t$ is a descendant of $t'$ except $t$ itself. As $X_{t}\setminus X_{t'}=\{w\}$, then $V_{t}\setminus V_{t'}$ can only contain $w$. We conclude $w\in\sigma$. This is a contradiction, as any simplex containing $w$ in $K_t$ is contained in $K[X_t]$ by Lemma \ref{lem:introduce_v}. Thus $\sigma\in K[V_{t'}]\setminus K[X_{t'}]$.
\end{proof}

As introducing $w$ does not change $K_t$ outside of the bag $X_t$, then neither the chain group $C_d(K_t)$ nor the partial boundary $b_t$ are changed by introducing $w$. We prove this in the following two lemmas.

\begin{lemma}\label{lem:mbc_introduce_chain}
The chain groups $C_{d}(K_{t'})=C_d(K_{t})$.
\end{lemma}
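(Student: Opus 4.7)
The plan is to reduce the statement about chain groups to the equality of underlying sets of $d$-simplices, and then invoke the previous Lemma \ref{lem:mbc_introduce_complex} to conclude. Since $C_d(K_t)$ is by definition the free abelian group generated by the $d$-simplices $(K_t)_d$ (and similarly for $K_{t'}$), showing $C_d(K_{t'}) = C_d(K_t)$ is equivalent to showing $(K_{t'})_d = (K_t)_d$.

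The key observation is that by the definition $K_t = K[V_t] \setminus (K[X_t])_d$, every $d$-simplex of $K_t$ automatically lies outside $K[X_t]$. In other words, $(K_t)_d = (K_t \setminus K[X_t])_d$, and analogously $(K_{t'})_d = (K_{t'} \setminus K[X_{t'}])_d$. So I would first make this observation explicit, being careful to note that while $K_t$ itself can contain lower-dimensional simplices from $K[X_t]$, its $d$-simplices cannot.

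With that reduction in hand, I would invoke Lemma \ref{lem:mbc_introduce_complex}, which gives $K_{t'} \setminus K[X_{t'}] = K_t \setminus K[X_t]$. Taking $d$-simplices on both sides yields $(K_t)_d = (K_{t'})_d$, and consequently $C_d(K_t) = C_d(K_{t'})$.

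The main obstacle is essentially nonexistent here, since the heavy lifting was already done in Lemma \ref{lem:mbc_introduce_complex} via Lemma \ref{lem:introduce_v}. The only subtlety to watch for is the distinction between the complex $K_t$ (which may share lower-dimensional simplices with $K[X_t]$) and its set of $d$-simplices $(K_t)_d$ (which is disjoint from $(K[X_t])_d$ by construction); this is what lets the equality of the ``outside'' parts $K_t \setminus K[X_t]$ propagate to an equality of the full $d$-simplex sets.
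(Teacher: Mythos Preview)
Your proposal is correct and follows essentially the same route as the paper: reduce to $(K_{t'})_d=(K_t)_d$, observe from the definition $K_t=K[V_t]\setminus(K[X_t])_d$ that $(K_t)_d=(K_t\setminus K[X_t])_d$, and then apply Lemma~\ref{lem:mbc_introduce_complex}. The paper's proof is just a terser version of what you wrote.
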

\begin{proof}
    We prove this by showing that $(K_{t'})_d=(K_{t})_d$. This follows from Lemma \ref{lem:mbc_introduce_complex} as $(K_{t})_d=(K[V_{t}]\setminus (K[X_{t}])_d)_d=(K_t \setminus K[X_{t}])_d$.
\end{proof}
\begin{lemma}
\label{lem:mbc_introduce_boundary}
    The chain $b_t=b_{t'}$.
\end{lemma}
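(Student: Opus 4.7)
The plan is essentially to invoke Lemma \ref{lem:mbc_introduce_complex} and then unwind the definition of the partial boundary. Recall that by definition $b_t = b \cap (K_t \setminus K[X_t])$ and $b_{t'} = b \cap (K_{t'} \setminus K[X_{t'}])$, where the intersection is taken at the level of the underlying sets of $(d{-}1)$-simplices identified with their chain representatives. Lemma \ref{lem:mbc_introduce_complex} gives the set equality $K_{t'} \setminus K[X_{t'}] = K_t \setminus K[X_t]$, and in particular this equality holds for the $(d{-}1)$-simplices contained in each side. Thus intersecting $b$ with either side produces the same subset of $(d{-}1)$-simplices.

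So I would simply write $b_t = b \cap (K_t \setminus K[X_t]) = b \cap (K_{t'} \setminus K[X_{t'}]) = b_{t'}$, citing Lemma \ref{lem:mbc_introduce_complex} for the middle equality. There is no real obstacle here; the content of the claim was already packaged into Lemma \ref{lem:mbc_introduce_complex}, which in turn rested on Lemma \ref{lem:introduce_v} (any simplex containing the newly introduced vertex $w$ lives only in $K[X_t]$). The present lemma is just the corresponding statement transported from complexes to chains, analogous to how Lemma \ref{lem:mbc_introduce_chain} transports the same equality to the $d$-chain groups.

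The only subtlety worth flagging is purely notational: the paper overloads $b$ to denote both a chain and its underlying set of simplices, so one should be careful that the operation $b \cap (K_t \setminus K[X_t])$ is well-defined and gives a $(d{-}1)$-chain. This is fine since we work over $\Z_2$, so a chain is just a set of simplices and intersection with a subcomplex picks out a sub-chain. With that convention in place, the equality is immediate and the proof is a one-liner.
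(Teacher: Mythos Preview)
Your proposal is correct and matches the paper's own proof essentially verbatim: the paper also just writes $b_t = b\cap K_t\setminus K[X_t] = b\cap K_{t'}\setminus K[X_{t'}] = b_{t'}$, citing Lemma~\ref{lem:mbc_introduce_complex} for the middle equality. Your remark about the $\Z_2$ set/chain overloading is accurate and the only thing worth noting.
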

\begin{proof}
    This follows from Lemma \ref{lem:mbc_introduce_complex}, as 
    $
    b_t=b\cap K_t\setminus K[X_t] =b\cap K_{t'}\setminus K[X_{t'}]=b_{t'}.
    $
\end{proof}
\par\noindent
Introducing $w$ to the bag $X_t$ does not change $C_d(X_t)$ or $b_t$, so unsurprisingly, the values in the dynamic programming table don't change either. The following lemma proves this and gives a formula for computing $\V[\beta,t]$.
\begin{lemma}
\label{eqn:mbc-introduce} 
    Let $t$ be an introduce node and let $t'$ be the unique child of $t$. Let $\beta\in C_{d-1}(K[X_t])$. Then
    $$\mathbf{V}[\beta,t]=
    \begin{cases}
        \mathbf{V}[\beta,t']&\text{ if }\beta\in C_{d{-}1}(K[X_{t'}])\\
        \infty & \text{ otherwise.}
    \end{cases}$$
\end{lemma}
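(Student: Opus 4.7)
The plan is to reduce the claim directly to the three preceding lemmas (\ref{lem:mbc_introduce_complex}, \ref{lem:mbc_introduce_chain}, \ref{lem:mbc_introduce_boundary}) together with Lemma~\ref{lem:introduce_v}. These together show that the universe of candidate $d$-chains at $t$ and at $t'$ is literally the same, so the only thing that can change is which $\beta$'s are feasible and how $\beta$ interacts with the enlarged bag $K[X_t]$.

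First I would establish the following key structural fact: if $\gamma\in C_d(K_t)$, then no $(d{-}1)$-simplex of $\boundary\gamma$ contains $w$. Indeed, by construction $K_t$ excludes the $d$-simplices of $K[X_t]$, and by Lemma~\ref{lem:introduce_v} every simplex of $K[V_t]$ containing $w$ lies in $K[X_t]$; taking the contrapositive for $d$-simplices, no $d$-simplex of $\gamma$ contains $w$, hence no face of any such simplex contains $w$. Since $X_t\setminus X_{t'}=\{w\}$, the subcomplex $K[X_t]\setminus K[X_{t'}]$ consists precisely of simplices containing $w$, and so $\boundary\gamma\cap K[X_t]=\boundary\gamma\cap K[X_{t'}]$ for every $\gamma\in C_d(K_t)$.

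With this in hand, the two cases fall out. For the first case, assume $\beta\in C_{d{-}1}(K[X_{t'}])$. By Lemma~\ref{lem:mbc_introduce_chain} the candidate chains $C_d(K_t)=C_d(K_{t'})$ agree as sets, and by Lemma~\ref{lem:mbc_introduce_boundary} the partial boundaries $b_t=b_{t'}$ agree. The structural fact above gives $\boundary\gamma\cap K[X_t]=\boundary\gamma\cap K[X_{t'}]$, and a symmetric argument using $K_t\setminus K[X_t]=K_{t'}\setminus K[X_{t'}]$ (Lemma~\ref{lem:mbc_introduce_complex}) gives $\boundary\gamma\setminus K[X_t]=\boundary\gamma\setminus K[X_{t'}]$. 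So ``$\gamma$ spans $\beta$ at $t$'' and ``$\gamma$ spans $\beta$ at $t'$'' are the same condition on the same chain, and the two minima coincide.

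For the second case, assume $\beta\notin C_{d{-}1}(K[X_{t'}])$; then $\beta$ contains some $(d{-}1)$-simplex $\tau$ with $w\in\tau$. For any $\gamma\in C_d(K_t)$, the structural fact shows $\tau\notin\boundary\gamma$, so $\boundary\gamma\cap K[X_t]\neq\beta$; no chain spans $\beta$ at $t$, forcing $\V[\beta,t]=\infty$. I expect no serious obstacle; the only subtlety is being careful that the ``restriction to $K[X_t]$'' on the boundary is genuinely invariant under enlarging the bag by $w$, which is exactly what the structural fact above records.
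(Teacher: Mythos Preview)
Your proposal is correct and follows essentially the same approach as the paper: both arguments rest on Lemmas~\ref{lem:introduce_v}--\ref{lem:mbc_introduce_boundary} to show that the candidate $d$-chains, the partial boundary, and the splitting of $\boundary\gamma$ into the ``inside $K[X_t]$'' and ``outside $K[X_t]$'' parts are unchanged when passing from $t'$ to $t$. Your isolation of the structural fact that $\boundary\gamma$ contains no simplex through $w$ is slightly more explicit than the paper's treatment (which deduces $\boundary\gamma\cap K[X_t]=\beta$ from the disjoint-union decomposition $\boundary\gamma=\beta\sqcup b_t$ in the first case, and argues more tersely in the second), but the content is the same.
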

\begin{proof}
    Let $\beta\in C_{d{-}1}(K[X_{t'}])$. We claim that any chain $\gamma\in C_{d}(K_{t'})$ spanning $\beta$ at $t'$ spans $\beta$ at $t$ and vice versa. Let $\gamma$ be a $d$-chain that spans $\beta$ at $t'$. As $C_{d}(K_{t'})=C_{d}(K_{t})$, then $\gamma$ is a $d$-chain in both $K_t$ and $K_{t'}$; moreover, $\boundary\gamma\subset K_{t}\cap K_{t'}$, so $\boundary\gamma=\boundary\gamma\cap K_{t}=\boundary \gamma\cap K_{t'}$. Using this fact and Lemma \ref{lem:mbc_introduce_complex}, we see
    $$
        \boundary\gamma\setminus K[X_{t'}]=(\boundary\gamma\cap K_{t'})\setminus K[X_{t'}]=(\boundary\gamma\cap K_{t})\setminus K[X_t]=\boundary\gamma\setminus K[X_t].
    $$
    We use the fact that $\boundary\gamma\cap K[X_{t'}]=\boundary\gamma\cap K[X_t]$ to prove that $\boundary\gamma\cap K[X_t]=b_t$. Indeed, as $\boundary\gamma\setminus K[X_{t'}]=b_{t'}$ and $b_{t'}=b_{t}$ from Lemma \ref{lem:mbc_introduce_boundary}, then $\boundary\gamma\setminus K[X_{t}]=b_t$. Moreover, as $\boundary\gamma=\beta\sqcup b_{t}$, then the rest of the $\boundary\gamma$ is $\boundary\gamma\cap K[X_{t}]=\beta$. This completes the proof that $\gamma$ spans $\beta$ at $t$. By the same argument, any chain spanning $\beta$ at $t$ spans $\beta$ at $t'$. As the same set of chains span $\beta$ at $t$ and $t'$, then $\V[\beta,t]=\V[\beta,t']$
    \par
    Now let $\beta\in C_{d{-}1}(K[X_{t}])\setminus C_{d{-}1}(K[X_{t'}])$. As $\beta\notin C_{d{-}1}(K[X_{t'}])$, there must be a simplex $\sigma\in \beta$ such that $w\in\sigma$. As any simplex $\sigma$ such that $w\in\sigma$ is not contained in $K[X_{t'}]$, there is no chain $\gamma\in C_d(K_t)$ that spans $\beta$ at $t$ as $C_d(K_t)=C_d(K_{t'})$. Thus, $\V[\beta,t]=\infty$.
\end{proof}

\subsection{Forget Nodes}

Let $t$ be a forget node and $t'$ the unique child of $t$. Recall that $X_t\sqcup\{w\}=X_{t'}$. Forget nodes add new chains to the chain group $C_d(K_t)$. In particular, any $d$-simplex $\sigma\in (K[X_{t'}])_d$ that contains $w$ is not contained in $K_{t'}$ but will be contained in $K_t$. We prove this in our first lemma. Let $(K[X_{t'}])_d^{w}=\{\sigma\in (K[ X_{t'}])_d\suchthat w\in\sigma\}$ and let $C_d^{w}(K[X_{t'}])$ be the free abelian group on $(K[X_{t'}])_d^{w}$ with coefficients in $\Z_2$. 
\begin{lemma}
\label{lem:mbc_forget_chain}
    The chain group $C_d(K_t)=C_d(K_{t'})\oplus C_{d}^{w}(K[X_{t'}])$.
\end{lemma}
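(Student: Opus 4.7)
The plan is to prove the direct-sum decomposition by first establishing the corresponding set-theoretic decomposition of $d$-simplices, namely $(K_t)_d = (K_{t'})_d \sqcup (K[X_{t'}])_d^w$, and then taking free abelian $\Z_2$-groups termwise.

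First I would establish $V_t = V_{t'}$. The containment $V_{t'} \subset V_t$ is immediate since the descendants of $t'$ are among the descendants of $t$. For the reverse, note that $X_t \subset X_{t'} \subset V_{t'}$, and every other descendant of $t$ is already a descendant of $t'$, so $V_t \subset V_{t'}$. This gives $K[V_t] = K[V_{t'}]$, and therefore $(K[V_t])_d = (K[V_{t'}])_d$.

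Next, from the definition $K_s = K[V_s] \setminus (K[X_s])_d$, I would write
\[
    (K_t)_d = (K[V_t])_d \setminus (K[X_t])_d, \qquad (K_{t'})_d = (K[V_{t'}])_d \setminus (K[X_{t'}])_d.
\]
Since $X_t \subset X_{t'}$ we have $(K[X_t])_d \subset (K[X_{t'}])_d$, so $(K_{t'})_d \subset (K_t)_d$. The set-theoretic difference is
\[
    (K_t)_d \setminus (K_{t'})_d = (K[X_{t'}])_d \setminus (K[X_t])_d.
\]
A $d$-simplex $\sigma$ lies in the right-hand side iff $\sigma \subset X_{t'}$ but $\sigma \not\subset X_t$; since $X_{t'} = X_t \sqcup \{w\}$, this happens iff $\sigma \subset X_{t'}$ and $w \in \sigma$, which is exactly the defining condition of $(K[X_{t'}])_d^w$. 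Hence
\[
    (K_t)_d = (K_{t'})_d \;\sqcup\; (K[X_{t'}])_d^w.
\]

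Finally, since $C_d$ is the free abelian group with coefficients in $\Z_2$ on the relevant set of $d$-simplices, a disjoint union of generating sets yields an internal direct sum of chain groups, giving $C_d(K_t) = C_d(K_{t'}) \oplus C_d^w(K[X_{t'}])$. There is no real obstacle here — the only subtlety to be careful about is verifying that the generating sets are truly disjoint (which is why the argument for the set difference goes through the $X_{t'}$ side rather than the $V_t$ side), so that the sum is direct rather than merely a sum.
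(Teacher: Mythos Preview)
Your proof is correct and follows essentially the same approach as the paper: reduce to the set-level decomposition $(K_t)_d = (K_{t'})_d \sqcup (K[X_{t'}])_d^w$ via the key facts $V_t = V_{t'}$ and $X_t \subset X_{t'}$. Your computation of the set difference $(K_t)_d \setminus (K_{t'})_d = (K[X_{t'}])_d \setminus (K[X_t])_d$ is slightly more streamlined than the paper's element-chasing of both inclusions, but the underlying argument is identical.
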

\begin{proof}
    As $C_d(K_t)$ is generated by $(K_t)_d$, we can prove the lemma by showing that $(K_t)_d=(K_{t'})_d\sqcup (K[X_{t'}])_{d}^{w}$.
    \par
    We first prove that $(K_{t'})_d\subset (K_t)_d$. Let $\sigma\in (K_{t'})_d$. By the definition of $K_{t'}$, $\sigma\in K[V_{t'}]$ and $\sigma\not\in K[X_{t'}]$. We know that $K[V_{t}]=K[V_{t'}]$ as each descendant of $t'$ is a descendant of $t$ and $X_t\subset X_{t'}$, so $\sigma\in K[V_{t}]$ as well. Furthermore, $\sigma\not\in K[X_{t}]$ as $K[X_{t}]\subset K[X_{t'}]$ and $\sigma\not\in K[X_{t'}]$. As $\sigma\in K[V_t]$ and $\sigma\not\in K[X_t]$, then $\sigma\in (K_{t})_d$ as $(K_t)_d=K[V_t]\setminus K[X_t]$. This proves that $(K_{t'})_d\subset (K_{t})_d$.
    \par
    As $(K_{t'})_d\subset (K_t)_d$, the next step to proving $(K_t)_d=(K_{t'})_d\sqcup (K[X_{t'}])_{d}^{w}$  is to show that $(K_t)_d\setminus (K_{t'})_d=(K[X_{t'}])_{d}^{w}$. Let $\sigma\in (K_t)_d\setminus (K_{t'})_d$. As $\sigma\in (K_t)_d$, then $\sigma\in K[V_t]$ and $\sigma\not\in K[X_t]$. As $\sigma\notin (K_t)_d$, then either $\sigma\not\in K[V_{t'}]$ or $\sigma\in K[X_{t'}]$. We know that $\sigma\in K[V_{t'}]$, as we proved in the previous paragraph that $K[V_{t'}]=K[V_t]$, so we conclude that $\sigma\in K[X_{t'}]$. It must be the case that $w\in\sigma$ as $\sigma\in K[X_{t'}] \subset K[X_t]$ and $X_{t'}\setminus X_t=\{w\}$, so $\sigma\in(K[X_{t'}])_{d}^{w}$. This proves that $(K_t)_d\setminus (K_{t'})_d \subset (K[X_{t'}])_{d}^{w}$
    \par 
    We now prove that $(K[X_{t'}])_{d}^{w}\subset (K_t)_d\setminus (K_{t'})_d$. Let $\sigma\in(K[X_{t'}])_{d}^{w}$. We know that $\sigma\in K[X_{t'}]$, which implies that $\sigma\in K[V_{t}]$ as $K[X_{t'}]\subset K[V_t]$. As $w\in\sigma$, we also have that $\sigma\not\in K[X_t]$, so $\sigma\in (K_t)_d$. Lastly, as $\sigma\in (K[X_{t'}])_d$, then $\sigma\notin (K_{t'})_d = K[V_{t'}]\setminus K[X_{t'}]$. Thus, $\sigma\in (K_{t})_d\setminus (K_{t'})_d$ and $(K[X_{t'}])_{d}^{w}\subset (K_t)_d\setminus (K_{t'})_d$.
\end{proof}
Intuitively, we can build the chain $\gamma$ that attains $\mathbf{V}[\beta, t]$ by composing a minimal chain $\gamma'\in C_d(K[X_{t'}])$ that attains $V[\beta', t']$ and a chain $\gamma_w\in C_d^w(K[X_{t'}])$ of $d$-simplices that all contain $w$. We need to enforce that $\boundary(\gamma'+\gamma_w)=\beta+b_t$. The chain $\gamma'$ includes $b_{t'}$ in the portion of its boundary outside $K[X_{t'}]$, so we want to choose $\gamma_w$ to help cover the rest of the boundary $b_t\setminus b_{t'}$. We can prove that $b_t\setminus b_{t'}$ is contained in $K[X_{t'}]$, i.e. $b_t\setminus b_{t'}=b_t\cap K[X_{t'}]$. So if $(\boundary(\gamma'+\gamma_w)\cap K[X_{t'}])\setminus K[X_t]=b_t\cap K[X_{t'}]$, then $\boundary(\gamma'+\gamma_w)\setminus K[X_t]$ will cover all of $b_t$. This is equivalently to requiring that $(\beta'+\boundary\gamma_w)\setminus K[X_{t}]=b_t\cap K[X_{t'}]$, as $\boundary\gamma'\cap K[X_t]=\beta'$ and $\boundary\gamma_w\cap K[X_t] = \gamma_w$. Indeed, this is the requirement we see in the formula for $\V[\beta,t]$ in Lemma \ref{lem:mbc_forget}.
\par 
We first give a characterization of $b_{t'}$.
\begin{lemma}
\label{lem:mbc_boundary_forget}
The chain $b_{t'}=b_{t}\setminus K[X_{t'}].$
\end{lemma}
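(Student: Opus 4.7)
The plan is to unwind the definitions carefully, using the fact that the forget operation does not change the vertex set $V_t$ and hence the induced subcomplex $K[V_t]$, while only enlarging the bag complex $K[X_t]$.

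First I would observe that at a forget node, $V_t = V_{t'}$. Indeed, the descendants of $t$ are exactly the descendants of $t'$ together with $t$ itself, and $X_t \subset X_{t'} \subset V_{t'}$, so adjoining $X_t$ to $V_{t'}$ adds nothing new. Consequently $K[V_t] = K[V_{t'}]$.

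Next, since $b$ is a $(d{-}1)$-chain, only $(d{-}1)$-simplices are relevant. For such a simplex $\sigma$, the condition $\sigma \in K_t = K[V_t] \setminus (K[X_t])_d$ is equivalent to $\sigma \in K[V_t]$, because the deletion of $d$-simplices in $K[X_t]$ does not touch $\sigma$. Likewise $\sigma \in K_{t'}$ iff $\sigma \in K[V_{t'}]$. Combined with $K[V_t] = K[V_{t'}]$, the simplicial complexes $K_t$ and $K_{t'}$ agree on all $(d{-}1)$-simplices.

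Finally, I would compute directly:
\[
b_t \setminus K[X_{t'}] \;=\; \bigl(b \cap (K_t \setminus K[X_t])\bigr) \setminus K[X_{t'}] \;=\; b \cap (K_t \setminus K[X_{t'}]),
\]
where the second equality uses $K[X_t] \subset K[X_{t'}]$. By the previous paragraph, restricting to $(d{-}1)$-simplices, this equals $b \cap (K_{t'} \setminus K[X_{t'}]) = b_{t'}$. No step here is really an obstacle; the only mild subtlety is confirming that taking complements with the larger bag $K[X_{t'}]$ absorbs the smaller $K[X_t]$, and that the $d$-simplex differences between $K_t$ and $K_{t'}$ documented in Lemma \ref{lem:mbc_forget_chain} are invisible to the $(d{-}1)$-chain $b$.
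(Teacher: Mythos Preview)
Your proof is correct and follows essentially the same route as the paper: both arguments use $V_t = V_{t'}$ and $K[X_t] \subset K[X_{t'}]$ to reduce $b_t \setminus K[X_{t'}]$ to $b \cap (K_{t'} \setminus K[X_{t'}]) = b_{t'}$. The only cosmetic difference is that the paper proves the full equality $K_t \setminus K[X_{t'}] = K_{t'} \setminus K[X_{t'}]$ for simplices of all dimensions via a direct set-theoretic computation, whereas you (equivalently and sufficiently) restrict attention to $(d{-}1)$-simplices by noting that the deletion of $(K[X_t])_d$ is invisible to $b$.
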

\begin{proof}
The chains $b_t$ and $b_{t'}$ are defined $b_{t'}=(b\cap K_{t'})\setminus K[X_{t'}]$ and $b_{t}=(b\cap K_{t})\setminus K[X_t]$. If we consider the difference $b_{t}\setminus K[X_{t'}]$, we find that 
$$
b_{t}\setminus K[X_{t'}]=((b\cap K_{t})\setminus K[X_t])\setminus K[X_{t'}]=(b\cap K_{t})\setminus K[X_{t'}]
$$
as $K[X_{t}]\subset K[X_{t'}]$. We will use this fact later.
\par 
The sets of vertices in the trees rooted at $t$ and $t'$ are equal, i.e. $V_t=V_{t'}$, but the vertices in the bags $X_t\subset X_{t'}$. We use these two facts to show that the complexes $K_t$ and $K_{t'}$ only differ in the complexes induced by their bags, namely $K_t\setminus K[X_{t'}]=K_{t'}\setminus K[X_{t'}]$:
$$
K_t\setminus K[X_{t'}]=K[V_t]\setminus (K[X_t])_d\setminus K[X_{t'}]=K[V_t]\setminus K[X_{t'}]=K[V_{t'}]\setminus (K[X_{t'}])_d\setminus K[X_{t'}]=K_{t'}\setminus K[X_{t'}].
$$
We have proved that $b_t=b\cap (K_t\setminus K[X_{t'}])$. We have also proved that $K_t\setminus K[X_{t'}]=K_{t'}\setminus K[X_{t'}]$. We use these facts to prove that $b_t\setminus K[X_{t'}]=b_{t'}$ as
\begin{equation*}
b_{t}\setminus K[X_{t'}]=b\cap K_{t}\setminus K[X_{t'}]=b\cap K_{t'}\setminus K[X_{t'}]=b_{t'}.\qedhere    
\end{equation*}
\end{proof}
\par\noindent
Lemma \ref{lem:mbc_boundary_forget} implies that $b_{t}\setminus b_{t'}=b_t\cap K[X_{t'}]$, as claimed above. We are now ready to give a formula for $\V[\beta,t]$.
\begin{lemma}
\label{lem:mbc_forget}
Let $\beta\in C_{d{-}1}(K[X_t])$. Then
$$\mathbf{V}[\beta,t]=\underset{\beta',\gamma_w}{\min}\;\mathbf{V}[\beta',t']+\|\gamma_w\|$$
where the minimization ranges over $\beta'$ and $\gamma_w$ such that
\begin{itemize}
    \item $\beta'\in C_{d{-}1}(K[X_{t'}])$
    \item $\gamma_w\in C_{d}^{w}(K[X_{t'}])$
    \item $(\beta'+\boundary \gamma_w)\cap K[X_t]=\beta$
    \item $(\beta'+\boundary\gamma_w)\setminus K[X_t]=b_t\cap K[X_{t'}]$
\end{itemize}
\end{lemma}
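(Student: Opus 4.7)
The plan is to prove the formula by a standard two-direction optimum-matching argument, where the key structural tool is the decomposition provided by Lemma \ref{lem:mbc_forget_chain}: every $\gamma\in C_d(K_t)$ splits uniquely as $\gamma=\gamma'+\gamma_w$ with $\gamma'\in C_d(K_{t'})$ and $\gamma_w\in C_d^{w}(K[X_{t'}])$. Because the two summands are supported on the disjoint simplex sets $(K_{t'})_d$ and $(K[X_{t'}])_d^{w}$, the weights split cleanly: $\|\gamma\|=\|\gamma'\|+\|\gamma_w\|$. A second fact I will lean on is that $\gamma_w\subset K[X_{t'}]$ implies $\boundary\gamma_w\subset K[X_{t'}]$, so $\boundary\gamma_w$ contributes nothing to $\boundary\gamma\setminus K[X_{t'}]$.

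For the $\le$ direction, suppose $\beta'$ and $\gamma_w$ satisfy the four listed conditions, and let $\gamma'\in C_d(K_{t'})$ be a minimum-weight chain spanning $\beta'$ at $t'$, so $\|\gamma'\|=\V[\beta',t']$. Set $\gamma=\gamma'+\gamma_w\in C_d(K_t)$. To show $\gamma$ spans $\beta$ at $t$, I compute $\boundary\gamma\cap K[X_t]=(\boundary\gamma'+\boundary\gamma_w)\cap K[X_t]=(\beta'+\boundary\gamma_w)\cap K[X_t]=\beta$, using the third hypothesis and the fact that $\boundary\gamma'\cap K[X_t]=(\boundary\gamma'\cap K[X_{t'}])\cap K[X_t]=\beta'\cap K[X_t]$. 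For the outside portion, split $\boundary\gamma\setminus K[X_t]$ into the piece inside $K[X_{t'}]\setminus K[X_t]$, which equals $(\beta'+\boundary\gamma_w)\setminus K[X_t]=b_t\cap K[X_{t'}]$ by the fourth hypothesis, and the piece outside $K[X_{t'}]$, which equals $\boundary\gamma'\setminus K[X_{t'}]=b_{t'}$ since $\boundary\gamma_w$ vanishes there. By Lemma \ref{lem:mbc_boundary_forget} these two pieces reunite to $b_t$. Hence $\V[\beta,t]\le\|\gamma\|=\V[\beta',t']+\|\gamma_w\|$.

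For the $\ge$ direction, take an optimal $\gamma$ spanning $\beta$ at $t$, decompose $\gamma=\gamma'+\gamma_w$ via Lemma \ref{lem:mbc_forget_chain}, and define $\beta':=\boundary\gamma'\cap K[X_{t'}]$. Then $\boundary\gamma'\setminus K[X_{t'}]=\boundary\gamma\setminus K[X_{t'}]=(b_t\cup\beta)\setminus K[X_{t'}]=b_t\setminus K[X_{t'}]=b_{t'}$ by Lemma \ref{lem:mbc_boundary_forget}, so $\gamma'$ spans $\beta'$ at $t'$ and $\|\gamma'\|\ge\V[\beta',t']$. Reversing the region-by-region computation of the previous paragraph verifies that this $(\beta',\gamma_w)$ satisfies the four constraints of the minimization, giving $\V[\beta,t]=\|\gamma'\|+\|\gamma_w\|\ge\V[\beta',t']+\|\gamma_w\|$.

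The main obstacle is the bookkeeping in splitting $\boundary\gamma$ over the three nested regions $K[X_t]\subset K[X_{t'}]\subset K_t$: one must carefully track which of $\boundary\gamma'$ and $\boundary\gamma_w$ contributes in each region, relying on $\boundary\gamma_w\subset K[X_{t'}]$ and on Lemma \ref{lem:mbc_boundary_forget} to reconstitute $b_t$ from its intersection with $K[X_{t'}]$ and its complement $b_{t'}$. Once this region decomposition is in hand, both directions fall out symmetrically.
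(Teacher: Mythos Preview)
Your proposal is correct and follows essentially the same approach as the paper: a two-direction optimum-matching argument built on the decomposition $\gamma=\gamma'+\gamma_w$ from Lemma~\ref{lem:mbc_forget_chain}, with the region-by-region boundary computation over $K[X_t]\subset K[X_{t'}]\subset K_t$ and Lemma~\ref{lem:mbc_boundary_forget} used to reconstitute $b_t$. The only cosmetic difference is that the paper closes the $\geq$ direction with an explicit exchange argument to conclude $\gamma'$ attains $\V[\beta',t']$, whereas you use the direct inequality $\|\gamma'\|\geq\V[\beta',t']$; both are equivalent here.
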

\begin{proof}
First, we show that each $d$-chain on the right hand side of the equation in the lemma spans $\beta$ at $t.$
Let $\gamma'\in C_d(K_{t'})$ that attains $\mathbf{V}[\beta', t']$, and let $\gamma_w \in C_d^{w}(K[X_{t'}])$ such that $(\beta'+\boundary \gamma_w)\cap K[X_t]=\beta$ and $(\beta+\boundary\gamma_w)\setminus K[X_t]=b_t\cap K[X_{t'}]$
Let $\gamma =\gamma' + \gamma_w$; we show that $\gamma$ spans $\beta$ at $t$. The boundary
$$
\boundary\gamma=\boundary\gamma'+\boundary \gamma_w=b_{t'}+\beta'+\boundary\gamma_w.
$$
If $\gamma$ were to span $\beta$ at $t$, then $\boundary\gamma\cap K[X_t]=\beta$ and $\boundary\gamma\setminus K[X_t]=b_t$. The intersection $\boundary\gamma\cap K[X_t]$ is
\begin{align*}
    \boundary\gamma\cap K[X_t] =& (b_{t'}+\beta'+\boundary\gamma_w)\cap K[X_t] \\
    =& (\beta'+\boundary \gamma_w)\cap K[X_t] \tag{as $b_{t'}\cap K[X_{t'}]=\emptyset$}\\
    =& \beta \tag{by assumption} 
\end{align*}
The difference $\boundary\gamma\setminus K[X_{t}]$ is 
\begin{align*}
\boundary\gamma\setminus K[X_{t}] =& (b_{t'}+\beta'+\boundary\gamma_w)\setminus K[X_t] \\ =& b_{t'}+(\beta'+\boundary\gamma_w)\setminus K[X_t] \tag{as $b_{t'}\cap K[X_t]=\emptyset$}\\
=& b_{t'}+b_{t}\cap K[X_{t'}] \tag{by assumption} \\
=& b_{t}\setminus K[X_{t'}]+b_{t}\cap K[X_{t'}] \tag{by Lemma \ref{lem:mbc_boundary_forget}} \\
=& b_t
\end{align*}
This proves that $\gamma$ indeed spans $\beta$ at $t$. Moreover, as $\gamma'$ and $\gamma_w$ are disjoint, then $\|\gamma\|=\|\gamma'\|+\|\gamma_w\|$. As $\gamma'$ achieves $\mathbf{V}[\beta',t']$ by assumption, then $\|\gamma\|=\mathbf{V}[\beta',t']+\|\gamma_w\|$.
\par
Next, we verify that the chain that achieves $\mathbf{V}[\beta,t]$ is included in the right hand side of the equation. By Lemma \ref{lem:mbc_forget_chain}, $\gamma=\gamma'+\gamma_w$ for some $\gamma'\in C_d(K_{t'})$ and $\gamma_w\in C_d^{w}(K[X_{t'}])$. Let $\beta'=\boundary \gamma'\cap K[X_{t'}]$. We need to verify that $\gamma'$ spans $\beta'$ at $t'$ and achieves $\mathbf{V}[\beta',t']$, that $(\beta'+\boundary \gamma_w)\cap K[X_{t}]=\beta$, and that $(\beta'+\boundary\gamma_w)\setminus K[X_{t}]=b_{t}\cap K[X_{t'}]$.
\par 
We first prove that $\gamma'$ spans $\beta'$ at $t'$. We already know that $\boundary \gamma'\cap K[X_{t'}] = \beta'$, so we only need to prove that $\boundary\gamma'\setminus K[X_{t'}] = b_{t'}$. Indeed,
\begin{align*}
\boundary \gamma'\setminus K[X_{t'}] =& (\boundary\gamma'+\boundary\gamma_w)\setminus K[X_{t'}] \tag{as $\boundary\gamma_w\subset K[X_{t'}]$} \\
=&\boundary\gamma\setminus K[X_{t'}] \tag{as $\boundary\gamma = \boundary\gamma'+\boundary\gamma_w$}\\ 
=& (\beta+b_{t})\setminus K[X_{t'}] \tag{as $\gamma$ bounds $\beta$ at $t$} \\
=& b_{t}\setminus K[X_{t'}] \tag{as $\beta\subset K[X_{t}]$ and $K[X_t]\subset K[X_{t'}]$}\\
=& b_{t'} \tag{by Lemma \ref{lem:mbc_boundary_forget}}
\end{align*}
This proves that $\gamma'$ spans $\beta'$ at $t'$. We delay proving that $\gamma'$ achieves $\mathbf{V}[\beta',t']$ until the end of the proof.
\par
We now prove that $(\beta'+\boundary\gamma_w)\cap K[X_{t}]=\beta$. We see that
\begin{align*}
\beta =& \boundary\gamma\cap K[X_t] \tag{by assumption}\\
=& (\boundary\gamma'+\boundary\gamma_w)\cap K[X_t] \tag{as $\boundary\gamma = \boundary\gamma'+\boundary\gamma_w$}\\
=& (b_{t'}+\beta'+\boundary\gamma_w)\cap K[X_t] \tag{as $\gamma'$ bounds $\beta'$ at $t'$} \\
=& (\beta'+\boundary\gamma_w)\cap K[X_t] \tag{as $b_{t'}\cap K[X_t]=\emptyset$.}
\end{align*}
Finally, we prove that $(\beta'+\boundary\gamma_w)\setminus K[X_{t}]=b_{t}\cap K[X_{t'}]$. 
\begin{align*}
 b_t\cap K[X_{t'}] =& (\boundary\gamma\setminus K[X_t])\cap K[X_{t'}] \tag{as $\gamma$ spans $\beta$ at $t$} \\
 = & (\boundary\gamma'+\boundary\gamma_w)\setminus K[X_t]\cap K[X_{t'}] \tag{as $\boundary\gamma = \boundary\gamma' + \boundary\gamma_w$}\\ 
 = & (b_{t'}+\beta'+\boundary\gamma_w)\setminus K[X_t]\cap K[X_{t'}] \tag{as $\gamma'$ bounds $\beta'$ at $t'$} \\ 
 = & (\beta'+\boundary\gamma_w)\setminus K[X_t] \tag{as $b_{t'}\cap K[X_{t'}]=\emptyset$}
\end{align*}
\par 
We are now ready to prove that $\gamma'$ achieves $\mathbf{V}[\beta',t']$. Suppose not, and let $\gamma'_{o}\in C_d(K_{t'})$ be a chain that achieves $V[\beta',t']$. We showed in the previous paragraph that $(\beta'+\boundary\gamma_w)\cap K[X_{t}]=\beta$ and $(\beta'+\boundary\gamma_w)\setminus K[X_{t}]=b_{t}\cap K[X_{t'}]$. This implies that $\gamma'_{o}+\gamma_w$ spans $\beta$ at $t$ by the first half of this proof. Moreover, $\|\gamma'_{o}+\gamma_w\|<\|\gamma'+\gamma_w\|=\|\gamma\|$, which contradicts the assumed optimality of $\gamma$. Thus, $\gamma'$ achieves $V[\beta',t']$ and $\gamma$ is included in the right hand side of the equation.
\end{proof}

\subsection{Join Nodes}

Let $t$ be a join node, and let $t'$ and $t''$ be the two children of $t$. Recall that $X_t=X_{t'}=X_{t''}$. We first observe that $K_{t'}$ and $K_{t''}$ only overlap in $K[X_t]$. 
\begin{lemma}
\label{lem:mbc_join_complex}
The complex $K_t\setminus K[X_t]=(K_{t'}\setminus K[X_t'])\sqcup(K_{t''}\setminus K[X_{t''}]).$
\end{lemma}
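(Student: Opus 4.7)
The plan is to reduce the claim to a set-theoretic statement about the simplex sets, then exploit the tree-decomposition property that the nodes whose bags contain a given vertex form a connected subtree. First I would simplify the left hand side: since $(K[X_t])_d \subset K[X_t]$, we have $K_t \setminus K[X_t] = K[V_t] \setminus K[X_t]$, and analogously $K_{t'} \setminus K[X_{t'}] = K[V_{t'}] \setminus K[X_{t'}]$ and likewise for $t''$. Using $X_t = X_{t'} = X_{t''}$, the desired equality becomes
\[
K[V_t] \setminus K[X_t] = \bigl(K[V_{t'}] \setminus K[X_t]\bigr) \sqcup \bigl(K[V_{t''}] \setminus K[X_t]\bigr).
\]

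The crucial intermediate identity I would establish is $V_{t'} \cap V_{t''} = X_t$. Any vertex $v$ lying in both $V_{t'}$ and $V_{t''}$ appears in some bag in the subtree rooted at $t'$ and in some bag in the subtree rooted at $t''$; since the set of nodes whose bag contains $v$ forms a connected subtree of $T$, the node $t$ lies on the path between these bags, so $v \in X_t$. The reverse inclusion is immediate from $X_t = X_{t'} = X_{t''} \subset V_{t'} \cap V_{t''}$. From this I would deduce $K[V_{t'}] \cap K[V_{t''}] = K[X_t]$, since a simplex contained in both vertex sets is a subset of $V_{t'} \cap V_{t''} = X_t$.

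Next I would prove $K[V_t] = K[V_{t'}] \cup K[V_{t''}]$. The inclusion $\supset$ is clear from $V_{t'}, V_{t''} \subset V_t$. For the non-trivial inclusion, given $\sigma \in K[V_t]$, Lemma \ref{lem:simplex_in_subcomplex} produces a descendant $t_\sigma$ of $t$ with $\sigma \subset X_{t_\sigma}$. The node $t_\sigma$ is either $t$ itself, a descendant of $t'$, or a descendant of $t''$, and in each case $\sigma$ lies in $K[V_{t'}]$ or $K[V_{t''}]$ (using $X_t \subset V_{t'}$ when $t_\sigma = t$). This is the step I expect to be the main subtlety, because a priori a simplex in $K[V_t]$ could have vertices drawn from both subtrees; the single-bag containment given by Corollary \ref{cor:simplex_bag} is what rules this out.

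Finally I would finish by routine set algebra. The union part follows from
\[
\bigl(K[V_{t'}] \cup K[V_{t''}]\bigr) \setminus K[X_t] = \bigl(K[V_{t'}] \setminus K[X_t]\bigr) \cup \bigl(K[V_{t''}] \setminus K[X_t]\bigr),
\]
and disjointness from
\[
\bigl(K[V_{t'}] \setminus K[X_t]\bigr) \cap \bigl(K[V_{t''}] \setminus K[X_t]\bigr) = \bigl(K[V_{t'}] \cap K[V_{t''}]\bigr) \setminus K[X_t] = K[X_t] \setminus K[X_t] = \emptyset,
\]
completing the proof.
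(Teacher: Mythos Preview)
Your proof is correct and follows essentially the same approach as the paper: both arguments hinge on the connectedness property of tree decompositions (via Lemma~\ref{lem:simplex_in_subcomplex} / Corollary~\ref{cor:simplex_bag}) to show that a simplex outside $K[X_t]$ must live entirely in one of the two subtrees. The only organizational difference is that you first isolate the vertex-level identity $V_{t'}\cap V_{t''}=X_t$ and then lift it to simplices, whereas the paper argues disjointness and the reverse inclusion directly at the simplex level; this is a cosmetic repackaging of the same idea.
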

\begin{proof}
    As $K[X_t]=K[X_{t'}]=K[X_{t''}]$, we see immediately that both 
    $K_{t'}\setminus K[X_{t'}]\subset K_{t}\setminus K[X_t]$ and $K_{t''}\setminus K[X_{t''}]\subset K_{t}\setminus K[X_{t}]$ as both $K_{t'}\subset K_{t}$ and $K_{t''}\subset K_{t}$. 
    \par 
    We now prove that $K_{t'}\setminus K[X_{t'}]$ and $K_{t''}\setminus K[X_{t''}]$ are disjoint. Suppose there is a simplex $\sigma\in (K_{t'}\setminus K[X_{t'}])\cap (K_{t''}\setminus K[X_{t''}])$. Then by Lemma \ref{lem:simplex_in_subcomplex}, there are nodes $t_{\sigma}'$ and $t_{\sigma}''$ in the subtrees rooted at $t'$ and $t''$ such that $\sigma\in K[X_{t_{\sigma}'}]$ and $\sigma\in K[X_{t_{\sigma}''}]$. The set of nodes containing $\sigma$ form a connected subtree. This is a contradiction, as $t$ lies on the unique path connecting $t_{\sigma}'$ and $t_{\sigma}''$ and $\sigma\notin K[X_t]$ by assumption. Hence $K_{t'}\setminus K[X_{t''}]\cap K_{t''}\setminus K[X_{t''}]=\emptyset$.
    \par
    We now prove that $K_t\setminus K[X_t]\subset (K_{t'}\setminus K[X_{t'}])\sqcup(K_{t''}\setminus K[X_{t''}])$. Let $\sigma\in K_t\setminus K[X_t]$. By Lemma \ref{lem:simplex_in_subcomplex}, there is a node $t_\sigma$ in the subtree rooted at $t$ such that $\sigma\in K[X_{t_\sigma}]$. We know $t_\sigma\neq t$ as $\sigma\notin K[X_t]$, so $t_\sigma$ is either in the subtree rooted at $t'$ or the subtree rooted at $t''$.
\end{proof}

Lemma \ref{lem:mbc_join_complex} implies that any $d$-chain in $K_t$ is the sum of a $d$-chain in $K_{t'}$ and a $d$-chain in $K_{t''}$. We prove this in the following lemma.

\begin{lemma}\label{lem:mbc_join_chain}
    The chain group $C_d(K_{t})=C_d(K_{t'})\oplus C_d(K_{t''})$.
\end{lemma}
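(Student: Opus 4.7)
The plan is to reduce the claim about chain groups to a claim about the generating sets of $d$-simplices, and then invoke Lemma \ref{lem:mbc_join_complex} (the join-complex decomposition) to get the result essentially for free. Since $C_d(\cdot)$ is the free $\mathbb{Z}_2$-module on the set of $d$-simplices, it suffices to show
\[
(K_t)_d \;=\; (K_{t'})_d \,\sqcup\, (K_{t''})_d,
\]
as a disjoint union of sets; the direct-sum statement for chain groups then follows immediately by freeness.

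First I would rewrite $(K_t)_d$ in a form amenable to the previous lemma. By definition $K_t = K[V_t]\setminus (K[X_t])_d$, so $(K_t)_d = (K[V_t])_d \setminus (K[X_t])_d$, which equals $(K_t\setminus K[X_t])_d$ since deleting simplices of $K[X_t]$ does not remove any $d$-simplex that was not already removed. The same identity holds at the children: $(K_{t'})_d = (K_{t'}\setminus K[X_{t'}])_d$ and $(K_{t''})_d = (K_{t''}\setminus K[X_{t''}])_d$.

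Next I would apply Lemma \ref{lem:mbc_join_complex}, which gives $K_t\setminus K[X_t] = (K_{t'}\setminus K[X_{t'}]) \sqcup (K_{t''}\setminus K[X_{t''}])$. Taking $d$-simplices on both sides of this disjoint union yields $(K_t)_d = (K_{t'})_d \sqcup (K_{t''})_d$, as desired. The only thing to check is that the disjointness at the level of complexes descends to disjointness at the level of $d$-simplices, but this is immediate because $(\cdot)_d$ is a restriction of a disjoint union.

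I do not anticipate a real obstacle here; essentially all the work has been done in Lemma \ref{lem:mbc_join_complex}. The only mild care needed is the bookkeeping to see that $(K_t)_d$, which by definition already avoids $(K[X_t])_d$, coincides with $(K_t\setminus K[X_t])_d$, so that the previous lemma can be applied without worrying about stray $d$-simplices in $K[X_t]$ (which a priori could occur in either $K_{t'}$ or $K_{t''}$ but in fact do not, by definition of $K_{t'}$ and $K_{t''}$).
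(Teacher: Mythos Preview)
Your proposal is correct and follows essentially the same approach as the paper: reduce to showing $(K_t)_d=(K_{t'})_d\sqcup(K_{t''})_d$, observe that $(K_t)_d=(K_t\setminus K[X_t])_d$ (equivalently $(K[V_t]\setminus K[X_t])_d$), and then apply Lemma~\ref{lem:mbc_join_complex} and take $d$-simplices on both sides. The bookkeeping you flag is exactly the one small step the paper also makes.
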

\begin{proof}
    As $C_d(K_{t})$ is generated by $(K_{t})_d$, we can prove this by showing that $(K_{t})_d=(K_{t'})_d\sqcup(K_{t''})_d$. We first note that 
    $$
        (K_{t})_d=(K[V_t]\setminus (K[X_t])_d)_d=(K[V_t]\setminus K[X_t])_d
    $$
    so it follows by Lemma \ref{lem:mbc_join_complex} that 
    \begin{equation*}
        (K_{t})_d=(K[V_t]\setminus K[X_t])_d=(K[V_{t'}]\setminus K[X_{t'}])_d\sqcup (K[V_{t''}]\setminus K[X_{t''}])_d=(K_{t'})_d\sqcup(K_{t''})_d.\qedhere
    \end{equation*}
\end{proof}
\par\noindent
Similarly, the boundary $b_t$ is composed of a portion in $K_{t'}$, namely $b_{t'}$, and a portion in $K_{t''}$, namely $b_{t''}$.

\begin{corollary}
\label{cor:mbc_join_boundary}
The chain $b_t=b_{t'}\sqcup b_{t''}.$
\end{corollary}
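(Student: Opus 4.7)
The plan is to deduce this corollary almost immediately from Lemma \ref{lem:mbc_join_complex} by intersecting both sides of that set identity with the chain $b$. Recall that by definition $b_s = b \cap (K_s \setminus K[X_s])$ for any node $s$, so the statement of the corollary is really a statement about set-theoretic partitions of $b$ into subsets of simplices.

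First I would unfold the definition of $b_t$ to write $b_t = b \cap (K_t \setminus K[X_t])$. Then I would substitute the decomposition from Lemma \ref{lem:mbc_join_complex}, namely $K_t \setminus K[X_t] = (K_{t'} \setminus K[X_{t'}]) \sqcup (K_{t''} \setminus K[X_{t''}])$, to conclude
\[
b_t = \bigl(b \cap (K_{t'} \setminus K[X_{t'}])\bigr) \sqcup \bigl(b \cap (K_{t''} \setminus K[X_{t''}])\bigr) = b_{t'} \sqcup b_{t''}.
\]
The disjointness of the union on the right is inherited directly from the disjointness in Lemma \ref{lem:mbc_join_complex}, so no additional argument is needed there.

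Since Lemma \ref{lem:mbc_join_complex} has already done the real combinatorial work of separating the two subtrees, there is no significant obstacle; the only thing to be careful about is that addition/union of chains in $\Z_2$ coefficients corresponds to symmetric difference of the underlying simplex sets, and here the two sets are disjoint, which is why we may write the symbol $\sqcup$ (equivalently $+$ in $C_{d-1}$) rather than just $\cup$. Thus the corollary follows in a couple of lines directly from the preceding lemma.
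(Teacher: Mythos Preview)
Your proposal is correct and mirrors the paper's own proof essentially line for line: both unfold the definition $b_t = b \cap (K_t \setminus K[X_t])$, invoke Lemma~\ref{lem:mbc_join_complex} to split $K_t \setminus K[X_t]$ into the disjoint pieces over $t'$ and $t''$, and then recognize the two intersections with $b$ as $b_{t'}$ and $b_{t''}$.
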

\begin{proof}
     The boundary $b_t=b\cap K_t\setminus K[X_t]$. Using Lemma \ref{lem:mbc_join_complex},
     \begin{equation*}
    b_t=b\cap K_t\setminus K[X_{t}]=(b\cap K_{t'}\setminus K[X_t])\sqcup(b\cap K_{t''}\setminus K[X_t])=b_{t'}\sqcup b_{t''}.\qedhere     
     \end{equation*}
\end{proof}
These two lemmas tell us that a chain $\gamma$ that spans $\beta$ at $t$ is the sum of chains $\gamma'$ and $\gamma''$. These chains satisfy three conditions: $\boundary\gamma'\setminus K[X_{t'}]=b_{t'}$, $\boundary\gamma''\setminus K[X_{t''}]=b_{t''}$, and  $(\boundary\gamma'+\boundary\gamma'')\cap K[X_t]=\beta$. To find the \textit{minimal} chain that spans $\beta$ at $t$, we only need to consider minimal chains $\gamma'$ and $\gamma''$ with properties. The following formula for $\mathbf{V}[\beta,t]$ confirms this.
\begin{lemma}
\label{eqn:mbc-join}
Let $\beta\in C_{d{-}1}(K[X_t])$. Then 
$$
\mathbf{V}[\beta,t]=\underset{\beta',\beta''}{\min} \;
\mathbf{V}[\beta',t']+\mathbf{V}[\beta'',t'']
$$
where the minimization ranges over $\beta'$ and $\beta''$ such that
\begin{itemize}
    \item $\beta'\in C_{d{-}1}(K[X_{t'}])$
    \item $\beta''\in C_{d{-}1}(K[X_{t''}])$
    \item $\beta'+ \beta''=\beta$
\end{itemize}
\end{lemma}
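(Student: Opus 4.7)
The proof will mirror the structure of Lemma \ref{lem:mbc_forget}, establishing a bijective-style correspondence between chains spanning $\beta$ at $t$ and pairs of chains spanning $\beta'$ at $t'$ and $\beta''$ at $t''$ with $\beta'+\beta''=\beta$. Concretely, I will prove two directions: (i) for any valid $\beta',\beta''$ with optimal spanning chains $\gamma',\gamma''$, the sum $\gamma=\gamma'+\gamma''$ spans $\beta$ at $t$ with weight $\|\gamma'\|+\|\gamma''\|$, showing $\V[\beta,t]\le$ the right-hand side; and (ii) any optimal $\gamma$ spanning $\beta$ at $t$ decomposes via Lemma \ref{lem:mbc_join_chain} as $\gamma=\gamma'+\gamma''$ with $\gamma'\in C_d(K_{t'})$, $\gamma''\in C_d(K_{t''})$, and the chains $\beta'=\boundary\gamma'\cap K[X_{t'}]$, $\beta''=\boundary\gamma''\cap K[X_{t''}]$ satisfy $\beta'+\beta''=\beta$ with $\gamma',\gamma''$ optimal.

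For direction (i), I compute $\boundary\gamma\cap K[X_t]=(\boundary\gamma'\cap K[X_{t'}])+(\boundary\gamma''\cap K[X_{t''}])=\beta'+\beta''=\beta$, using $K[X_t]=K[X_{t'}]=K[X_{t''}]$. For the boundary outside $K[X_t]$, I compute $\boundary\gamma\setminus K[X_t]=(\boundary\gamma'\setminus K[X_{t'}])+(\boundary\gamma''\setminus K[X_{t''}])=b_{t'}+b_{t''}=b_t$, where the last equality is Corollary \ref{cor:mbc_join_boundary}. Because $\gamma'$ and $\gamma''$ involve disjoint $d$-simplices by Lemma \ref{lem:mbc_join_chain}, their sum has weight $\|\gamma'\|+\|\gamma''\|$.

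For direction (ii), the main obstacle is verifying that the decomposition $\gamma=\gamma'+\gamma''$ inherited from Lemma \ref{lem:mbc_join_chain} really makes $\gamma'$ span $\beta'$ at $t'$, i.e., that $\boundary\gamma'\setminus K[X_{t'}]=b_{t'}$, rather than some mixture. The key fact I will invoke is Lemma \ref{lem:mbc_join_complex}: the complexes $K_{t'}\setminus K[X_{t'}]$ and $K_{t''}\setminus K[X_{t''}]$ are disjoint. Since $\boundary\gamma'\setminus K[X_{t'}]$ lies in $K[V_{t'}]\setminus K[X_{t'}]$ and $\boundary\gamma''\setminus K[X_{t''}]$ lies in $K[V_{t''}]\setminus K[X_{t''}]$, these two pieces of $\boundary\gamma\setminus K[X_t]$ are supported on disjoint sets of simplices. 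Combined with $b_t=b_{t'}\sqcup b_{t''}$ and the fact that $b_{t'}\subset K[V_{t'}]\setminus K[X_{t'}]$ and $b_{t''}\subset K[V_{t''}]\setminus K[X_{t''}]$, matching supports forces $\boundary\gamma'\setminus K[X_{t'}]=b_{t'}$ and symmetrically for $\gamma''$.

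The equation $\beta'+\beta''=\beta$ then follows by taking $\boundary\gamma\cap K[X_t]=\beta$ and splitting via $\boundary\gamma=\boundary\gamma'+\boundary\gamma''$. Finally, to show $\gamma'$ attains $\V[\beta',t']$ (and symmetrically $\gamma''$ attains $\V[\beta'',t'']$), I argue by contradiction: if some $\gamma'_o\in C_d(K_{t'})$ spanned $\beta'$ at $t'$ with $\|\gamma'_o\|<\|\gamma'\|$, then by direction (i) applied to $\gamma'_o$ and $\gamma''$, the chain $\gamma'_o+\gamma''$ spans $\beta$ at $t$ and has strictly smaller weight than $\gamma$, contradicting optimality of $\gamma$. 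This completes the equality $\V[\beta,t]=\min_{\beta',\beta''}\V[\beta',t']+\V[\beta'',t'']$.
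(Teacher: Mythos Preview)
Your proposal is correct and follows essentially the same approach as the paper's proof: both directions use Lemma~\ref{lem:mbc_join_chain} for the direct-sum decomposition of $C_d(K_t)$, Lemma~\ref{lem:mbc_join_complex} for the disjointness of $K_{t'}\setminus K[X_{t'}]$ and $K_{t''}\setminus K[X_{t''}]$, and Corollary~\ref{cor:mbc_join_boundary} for $b_t=b_{t'}\sqcup b_{t''}$, followed by the same optimality-by-contradiction argument. The only cosmetic difference is that the paper verifies $\boundary\gamma'\setminus K[X_{t'}]=b_{t'}$ by computing $(\boundary\gamma\setminus K[X_t])\cap K_{t'}$ two ways, whereas you phrase it as ``matching supports'' across the disjoint pieces; these are the same argument.
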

\begin{proof}
    We first show that each chain on the right hand side of the equation spans $\beta$ at $t$. Let $\gamma'\in C_d(K_{t'})$ and $\gamma''\in C_d(K_{t''})$ such that $\gamma'$ spans $\beta'$ at $t'$, $\gamma''$ spans $\beta''$ at $t''$, and $\beta'+ \beta''=\beta$. We claim that $\gamma=\gamma'+ \gamma''$ spans $\beta$ at $t$. This follows as 
    \begin{align*}
        \boundary\gamma\cap K[X_{t}] =& (\boundary\gamma'\cap K[X_t])+(\boundary \gamma''\cap K[X_{t}]) \\ =& \beta'+\beta'' \\
        =& \beta.
    \end{align*}
    and
    \begin{align*}
        \boundary\gamma\setminus K[X_{t}] =& (\boundary\gamma'\setminus K[X_{t}])+ (\boundary\gamma''\setminus K[X_{t}]) \\ 
        =& b_{t'}+b_{t''} \\
        =& b_t \tag{by Corollary \ref{cor:mbc_join_boundary}}
    \end{align*}
     Moreover, $\|\gamma\|=\|\gamma'\|+\|\gamma''\|$  as $\gamma'$ and $\gamma''$ are disjoint. If $\gamma'$ and $\gamma''$ achieve $\mathbf{V}[\beta',t']$ and $\mathbf{V}[\beta'',t'']$, then $\|\gamma\|=\mathbf{V}[\beta',t]+\mathbf{V}[\beta'',t'']$.
    \par
    Now let $\gamma\in C_d(K_t)$ be the minimum weight chain that spans $\beta$ at $t$. By Lemma \ref{lem:mbc_join_chain}, $\gamma=\gamma'+\gamma''$ for some $\gamma'\in C_d(K_{t'})$ and $\gamma''\in C_d(K_{t''})$. Let $\beta'=\boundary\gamma'\cap K[X_{t'}]$ and $\beta''=\boundary\gamma''\cap K[X_{t''}]$. We will show that $\gamma'$ and $\gamma''$ achieve $\mathbf{V}[\beta',t']$ and $\mathbf{V}[\beta'',t'']$ respectively. 
    \par 
    We first show that $\gamma'$ spans $\beta'$ at $t'$. We already know that $\boundary\gamma'\cap K[X_{t'}]=\beta'$, so we only need to prove that $\boundary\gamma'\setminus K[X_{t'}]=b_{t'}$. We consider the intersection $(\boundary\gamma\setminus K[X_t])\cap K_{t'}$. We see that 
    \begin{align*}
    (\boundary\gamma\setminus K[X_t])\cap K_{t'} =& (\boundary\gamma'\setminus K[X_t])\cap K_{t'}+(\boundary\gamma''\setminus K[X_t])\cap K_{t'} \\
    =& (\boundary\gamma'\setminus K[X_t])\cap K_{t'} \tag{*} \\
    =& \boundary\gamma'\setminus K[X_t] \tag{as $\boundary\gamma'\subset K_{t'}$.}
    \end{align*}
    where line (*) follows from the fact that $(\boundary\gamma''\setminus K[X_t])\cap K_{t'}=\emptyset$. This is the case as $\boundary\gamma''\setminus K[X_t]\subset K_{t''}\setminus K[X_t]$ and $K_t'\setminus K[X_{t'}]$ and $K_{t''}\setminus K[X_{t''}]$ are disjoint by Lemma \ref{lem:mbc_join_complex}.
    \par 
     We can alternatively express $(\boundary\gamma\setminus K[X_t])\cap K_{t'}$ as
    \begin{align*}
        (\boundary\gamma\setminus K[X_t])\cap K_{t'} =& b_t\cap K_{t'} \\
        =& b_{t'}\cap K_{t'}+b_{t''}\cap K_{t''} \tag{by Corollary \ref{cor:mbc_join_boundary}} \\
        =& b_{t'}\cap K_{t'} \tag{*}\\
        =& b_{t'} \tag{as $b_{t'}\subset K_{t'}$.} 
    \end{align*}
     where (*) follows as $b_{t''}\cap K_{t'}=\emptyset$. This is true as $b_{t''}\subset K_{t''}\setminus K[X_t]$ and $K_t'\setminus K[X_{t'}]$ and $K_{t''}\setminus K[X_{t''}]$ are disjoint by Lemma \ref{lem:mbc_join_complex}. Thus $\boundary\gamma'\setminus K[X_t]=(\boundary\gamma\setminus K[X_t])\cap K_{t'}=b_{t'}$ and $\gamma'$ spans $\beta'$ at $t'$. The proof that $\gamma''$ spans $\beta''$ at $t''$ is symmetrical. 
    \par
    As $\gamma$ is the minimum weight chain that spans $\beta$ at $t$, then $\gamma'$ and $\gamma''$ must be the chains that achieve $\mathbf{V}[\beta',t']$ and $\mathbf{V}[\beta'',t'']$ respectively. We saw in the first paragraph that any two chains $\gamma'$ and $\gamma''$ that span $\beta'$ at $t'$ and $\beta''$ at $t''$ sum to span $\beta$ at $t$. So if (say) $\gamma'$ did not achieve $\mathbf{V}[\beta',t']$, we could replace $\gamma'$ with the chain $\gamma'_{o}$ that achieves $\mathbf{V}[\beta',t']$ and $\gamma'_{o}+\gamma''$ would be a chain with strictly lower weight that spans $\beta$ at $t$, contradicting the assumed optimality of $\gamma$.
\end{proof}

\subsection{Analysis}

We now analyze the running time of the dynamic program, thereby proving Theorem~\ref{thm:obc}.

\begin{proof}[Proof of Theorem~\ref{thm:obc}]
We first show that we can compute the dynamic programming table entry for each type of node in $T$ in $2^{O\left(k^{d}\right)}$ time.
\par
 We can compute the entry at the leaf node in constant time as we are entering in a single table entry.
 \par 
 We can compute the value $V[\beta,t]$ of each $(d{-}1)$-chain $\beta$ at an introduce node in constant time. As there are $O(\binom{k+1}{d})=O\left((k+1)^{d}\right)$ $(d{-}1)$-simplices in $K[X_t]$, then there are $2^{O\left(k^{d}\right)}$ $(d{-}1)$-chains in $C_{d{-}1}(K[X_t])$. Processing an introduce node thus takes $2^{O\left(k^{d}\right)}$ time in total. 
 \par 
 We compute the entry at a forget node by performing nested iterations over $C_{d{-}1}(K[X_t])$ and $C_d^w(K[X_t])$. There are $O(\binom{k+1}{d})=O\left((k+1)^{d}\right)$ $(d{-}1)$-simplices in $K[X_t]$, so there are $2^{O\left(k^{d}\right)}$ $(d{-}1)$-chains $\beta\in C_{d{-}1}(K[X_t])$.  There are likewise $O(\binom{k+1}{d})=O(k^{d})$ $d$-simplices (we choose $d$ vertices in addition to $w$) in $(K[X_{t'}])_{d}^{w}$ and $2^{O(k^{d})}$ chains $\gamma_{w}\in C_{d}^{w}(K[X_{t'}])$. The size of both of the chain groups $C_{d{-}1}(K[X_{t'}])$ and $C_{d}^{w}(K[X_{t'}])$ is at most $2^{O\left(k^{d}\right)}$, so the nested iteration over both of these groups takes $2^{O\left(k^{d}\right)}$ time. 
 \par
 We compute the entry at a join node by iterating over the chain groups $C_{d{-}1}(K[X_{t'}])$ and $C_{d{-}1}(K[X_{t''}])$, which takes $2^{O\left(k^{d}\right)}$ time. 
 \par
 There are $O(kn)$ nodes in a nice tree decomposition, and the dynamic programming table entry at each node can be computed in $2^{O\left(k^{d}\right)}$ time. The algorithm therefore takes $2^{O\left(k^{d}\right)}n$ time in total. As $d\leq k$ by Corollary \ref{cor:tw_dim}, the running time of our algorithm is also $2^{O\left(k^{k}\right)}n$.
\end{proof}

As a corollary, we can test whether two $d{-}1$ chains $b$ and $h$ are homologous by running our algorithm on $b+h$. The chains $b$ and $h$ are homologous if and only if $\mathbf{V}[\emptyset,r]<\infty$. Hence, we obtain Corollary~\ref{cor:homology_test}. This implies we can use our algorithm to test whether or not a ($d{-}1$)-cycle $b$ is null-homologous, which is to say, whether $b$ is homologous to the empty chain.
\par
 Our algorithm uses the Hamming norm to measure the weight of a chain, but our algorithm can be easily adapted to solve OBCP in a simplicial complex with weight $d$-simplices. If $w:K_d\to\R^{+}$ is a weight function on the $d$-simplices, we define a weight function on $d$-chains $w:C_d(K)\to\R^{+}$ such that $w(\gamma)=\sum_{\sigma\in\gamma}w(\sigma)$ is the weight of the chain $\gamma$. If we substitute $w(\gamma)$ for $\|\gamma\|$ everywhere in our algorithm for OBCP, it is easy to see this adapted algorithm finds the minimum weight chain $c$ bounded by $b$.

%
%

\section{Optimal Homologous Chain Problem}

Let $K$ be a simplicial complex and let $b\in C_{d{-}1}(K)$. The \textbf{\textit{Optimal Homologous Chain Problem}} asks to find the minimum weight $(d{-}1)$-chain $h\in C_{d{-}1}(K)$ such that $b+h=\boundary c$ for some $c\in C_d(K)$. Let $(T,X)$ be a nice tree decomposition of $K$. We will find $h$ using a dynamic program on $T$.
\par 
Any chain $c\in C_d(K)$ defines a chain homologous to $b$, namely $h=\boundary c+b$. Instead of searching for the $(d{-}1)$-chain $h$, we can therefore search for the $d$-chain $c$ that minimizes the weight $\|\boundary c+b\|$. The chain $c$ also defines the minimum chain homologous to $b$ in a local sense. If we restrict this chain to a subcomplex $K_t$, we find that $c\cap K_t$ is the chain that minimizes $\|\boundary (c\cap K_t)\setminus K[X_t] +b_t\|$ over all chains with boundary satisfying $\boundary (c\cap K_t)\cap K[X_t]=\beta$. Accordingly, for each $(d{-}1)$-chain $\beta\in C_{d{-}1}(K[X_t])$, we store the dynamic programming table entry
$$
\mathbf{V}[\beta,t]=\underset{\underset{\boundary\gamma\cap K[X_t]=\beta}{\gamma\in C_{d}(K_t):}}{\min} \|\boundary\gamma\setminus K[X_t]+b_t\|.
$$
\par\noindent
We say a $d$-chain $\gamma$ \textbf{\textit{spans $\boldsymbol\beta$ at t}} if $\boundary\gamma\cap K[X_t]=\beta$. Lemma \ref{lem:ohcp_opt} shows that the table $\V$ contains the weight of the optimal solution.
\begin{lemma}
\label{lem:ohcp_opt}
    The entry $\V[\emptyset, r]$ is the minimum weight $\|\boundary c+b\|$ for any chain $c\in C_d(K)$.
\end{lemma}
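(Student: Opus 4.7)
The plan is to mimic the proof of Lemma~\ref{lem:mbc_root}: unfold the definition of $\V[\emptyset,r]$ at the root and use the special properties of the root bag in a nice tree decomposition to simplify everything away.

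First, I would record the three basic facts about the root: since $(T,X)$ is nice, the root bag $X_r = \emptyset$, so $K[X_r] = \emptyset$; since every node of $T$ is a descendant of $r$, $V_r = V$, and hence $K_r = K[V_r] \setminus (K[X_r])_d = K \setminus \emptyset = K$; and therefore the partial boundary $b_r = b \cap (K_r \setminus K[X_r]) = b \cap K = b$.

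Next, I would substitute $t = r$ and $\beta = \emptyset$ into the defining equation
\[
\V[\beta,t] = \min_{\substack{\gamma \in C_d(K_t):\\ \boundary\gamma \cap K[X_t] = \beta}} \|\boundary\gamma \setminus K[X_t] + b_t\|.
\]
Because $K[X_r] = \emptyset$, the constraint $\boundary\gamma \cap K[X_r] = \emptyset$ is vacuous, so the minimum ranges over all $\gamma \in C_d(K_r) = C_d(K)$. Similarly the quantity being minimized simplifies to $\|\boundary\gamma \setminus \emptyset + b\| = \|\boundary\gamma + b\|$. Hence
\[
\V[\emptyset, r] \;=\; \min_{\gamma \in C_d(K)} \|\boundary\gamma + b\|,
\]
which is exactly the minimum weight of $\|\boundary c + b\|$ over $c \in C_d(K)$.

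There is no real obstacle here: the lemma is a direct unwinding of the table's definition combined with the three identities $X_r = \emptyset$, $K_r = K$, and $b_r = b$. The only thing to be careful about is confirming that the constraint on $\boundary\gamma \cap K[X_r]$ becomes trivial, which is immediate once $K[X_r] = \emptyset$. The proof will be just a few lines, parallel in form to Lemma~\ref{lem:mbc_root}.
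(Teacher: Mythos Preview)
Your proposal is correct and follows essentially the same approach as the paper: unwind the definition of $\V[\emptyset,r]$ using $X_r=\emptyset$, $K_r=K$, and $b_r=b$, then observe that the constraint becomes vacuous and the objective simplifies to $\|\boundary c + b\|$. The paper's argument is identical in substance, just slightly terser.
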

\begin{proof}
    The unique entry at the root $\mathbf{V}[\emptyset,r]$ contains the weight of the minimum weight $(d{-}1)$-chain homologous to $b$. The bag at the root $X_t$ is empty, so $K[X_r]=\emptyset$. Moreover, $V_r=V$, so $K_r=K$ and $b_r=b$. Thus, $\mathbf{V}[\emptyset,r]$ contains the minimum weight $\|\boundary c\setminus K[X_r]+b\|=\|\boundary c+b\|$ for any chain $c\in C_d(K_r)=C_d(K)$.
\end{proof}
\par
Our algorithm for OHCP is similar to our algorithm for OBCP as we are searching for a $d$-chain $c$, except we are optimizing a different function. We perform a dynamic program on our tree decomposition, and at each node $t$, we store the weight of the chain $\gamma\in C_d(K_t)$ that spans each $(d{-}1)$-chain $\beta\in C_{d{-}1}(K[X_t])$ and minimizes the objective function. We construct the chain $\gamma$ using the orthogonal decompositions of the chain groups $C_d(K_t)$ from Section \ref{section:mbc}. At each node, the two algorithms will loop over the same chain groups. Accordingly, the running time for our algorithm for OHCP is the same as the running time for our algorithm for OBCP, and we obtain Theorem \ref{thm:ohc}. 
\par
We now present our dynamic program for OHCP. We compute the entries of $\V$ at a node $t$ using the entries of $\V$ at the children of $t$. Accordingly, we only need to specify how to compute $t$ at each type of node in a nice tree decomposition.

\subsection{Leaf Nodes}
Let $t$ be a leaf node. Recall that $X_t=\emptyset$, so $K[X_t]=\emptyset$. Moreover, $t$ has no children, so $V_t=K_t=b_t=\emptyset$. There is a unique $(d{-}1)$-chain $\beta=\emptyset\in C_{d{-}1}(K[X_t])$ and a unique $d$-chain $\gamma=\emptyset\in C_d(K_t)$. The chain $\gamma$ spans $\beta$ at $t$, so $\mathbf{V}[\emptyset, t]=0$.

\subsection{Introduce Nodes}
Let $t$ be an introduce node and $t'$ the unique child of $t$. Recall that $X_t=X_{t'}\sqcup\{w\}$. Let $\beta\in C_{d{-}1}(K[X_t])$. We claim the following formula for $\mathbf{V}[\beta,t]$.
\begin{lemma}
Let $t$ be an introduce node and $t'$ be its unique child. Let $\beta\in C_{d-1}(K[X_t])$. Then
$$
\mathbf{V}[\beta,t]=
    \begin{cases}
        \mathbf{V}[\beta,t']&\text{ if }\beta\in C_{d{-}1}(K[X_{t'}])\\
        \infty & \text{ otherwise}
    \end{cases}
$$
\end{lemma}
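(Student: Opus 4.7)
The plan is to mirror the proof of Lemma~\ref{eqn:mbc-introduce} from the OBCP section, leveraging the three introduce-node lemmas already established in Section~\ref{section:mbc}: Lemma~\ref{lem:mbc_introduce_complex} ($K_{t'}\setminus K[X_{t'}] = K_t\setminus K[X_t]$), Lemma~\ref{lem:mbc_introduce_chain} ($C_d(K_{t'}) = C_d(K_t)$), and Lemma~\ref{lem:mbc_introduce_boundary} ($b_{t'} = b_t$). These lemmas were stated purely about the combinatorial structure of $K_t$ and $b_t$ at an introduce node, so they apply verbatim in the OHCP context.

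First I would split into the two cases of the piecewise definition. For the case $\beta\in C_{d{-}1}(K[X_{t'}])$, the argument is that the set of chains spanning $\beta$ at $t$ (in the OHCP sense) is identical to the set of chains spanning $\beta$ at $t'$, and moreover that the value of the OHCP objective $\|\boundary\gamma\setminus K[X_t] + b_t\|$ is preserved. The key observation is that any $\gamma\in C_d(K_t) = C_d(K_{t'})$ is supported on $d$-simplices of $K_t$, and by Lemma~\ref{lem:introduce_v} no $d$-simplex of $K_t$ contains $w$; hence $\boundary\gamma$ contains no $(d{-}1)$-simplex incident to $w$, so $\boundary\gamma\cap K[X_t] = \boundary\gamma\cap K[X_{t'}]$ and $\boundary\gamma\setminus K[X_t] = \boundary\gamma\setminus K[X_{t'}]$. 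Combined with $b_t = b_{t'}$, this makes the spanning condition $\boundary\gamma\cap K[X_t]=\beta$ equivalent to $\boundary\gamma\cap K[X_{t'}]=\beta$, and the objectives equal term by term. Taking the minimum over the common set of candidates gives $\mathbf{V}[\beta,t]=\mathbf{V}[\beta,t']$.

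For the case $\beta\in C_{d{-}1}(K[X_t])\setminus C_{d{-}1}(K[X_{t'}])$, I would observe that $\beta$ must contain at least one $(d{-}1)$-simplex $\tau$ with $w\in\tau$ (since $X_t\setminus X_{t'}=\{w\}$). But the same argument as above shows that for every $\gamma\in C_d(K_t)$ the chain $\boundary\gamma\cap K[X_t]$ lies in $C_{d{-}1}(K[X_{t'}])$ and therefore cannot contain $\tau$. Hence no $\gamma$ spans $\beta$ at $t$, and by convention $\mathbf{V}[\beta,t]=\infty$.

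The main obstacle, if there is one, is just being careful that the OHCP notion of ``spanning'' is weaker than the OBCP one (only the inside-bag boundary is constrained; the outside-bag boundary appears in the objective rather than as a hard constraint). Once that is noted, the same structural facts from Section~\ref{section:mbc} transfer without modification, since all of them are statements about $K_t$ and $b_t$ themselves and do not depend on the problem being solved. I expect the proof to be short and to read almost identically to the OBCP introduce-node proof, with $b_t$ moved from a constraint into the objective.
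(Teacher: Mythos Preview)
Your proposal is correct and follows essentially the same route as the paper's proof: both arguments establish that for any $\gamma\in C_d(K_t)=C_d(K_{t'})$ one has $\boundary\gamma\cap K[X_t]=\boundary\gamma\cap K[X_{t'}]$ and $\boundary\gamma\setminus K[X_t]=\boundary\gamma\setminus K[X_{t'}]$, and then use $b_t=b_{t'}$ to equate the objectives. The only cosmetic difference is that the paper derives the boundary equalities by invoking Lemma~\ref{lem:mbc_introduce_complex} directly (via $\boundary\gamma\subset K_t\cap K_{t'}$), whereas you go one level deeper and argue from Lemma~\ref{lem:introduce_v} that no $d$-simplex of $K_t$ contains $w$; these are equivalent justifications of the same fact.
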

\begin{proof}
    We will show that any chain that spans $\beta$ at $t'$ spans $\beta$ at $t'$ and vice versa.  Let $\gamma\in C_{d}(K_{t'})$ such that $\gamma$ spans $\beta$ at $t'$; that is, $\boundary\gamma\cap K[X_{t'}] = \beta$. We need to show that $\boundary\gamma\cap K[X_t]=\beta=\boundary\gamma\cap K[X_{t'}]$ as well. We do this by proving the equivalent statement that $\boundary\gamma\setminus K[X_t] = \boundary\gamma\setminus K[X_{t'}].$ As $\boundary\gamma\in K_{t}\cap K_{t'}$, then 
    \begin{align*}
        \boundary\gamma\setminus K[X_{t'}] =& \boundary\gamma\cap (K_{t'}\setminus K[X_{t'}]) \tag{as $\boundary\gamma\subset K_{t'}$} \\ 
        =& \boundary\gamma\cap(K_{t}\setminus K[X_t]) \tag{as $K_t\setminus K[X_t] = K_{t'}\setminus K[X_{t'}]$ by Lemma \ref{lem:mbc_introduce_complex}} \\ 
        =& \boundary\gamma\setminus K[X_t]. \tag{as $\boundary\gamma\subset K_{t}$}
    \end{align*}
    As a corollary, we see that $\boundary\gamma\cap K[X_t]=\boundary\gamma\cap K[X_{t'}]=\beta$. So if $\gamma$ spans $\beta$ at $t$, then $\gamma$ spans $\beta$ at $t'$ and vice versa. Moreover, as $b_t=b_{t'}$ by Lemma \ref{lem:mbc_introduce_boundary}, then $\|\boundary\gamma\cap K[X_t]+b_t\|=\|\boundary\gamma\cap K[X_{t'}]+b_{t'}\|$. So $\mathbf{V}[\beta,t]=\mathbf{V}[\beta,t]$ for each $(d{-}1)$-chain $\beta$ in $C_{d{-}1}(K_{t'})$. 
    \par
    As $\boundary\gamma\cap K[X_t]=\boundary\gamma\cap K[X_{t'}]$ for each $\gamma\in C_{d}(K_t)$, then if $\beta\in C_d(K[X_t])\setminus C_{d}(K[X_{t'}])$, no chain spans $\beta$ at $t$. Accordingly, we set $\mathbf{V}[\beta,t]=\infty$.
\end{proof}

\subsection{Forget Nodes}
Let $t$ be a forget node and $t'$ the unique child of $t$. Recall that $X_t\sqcup\{w\}=X_{t'}$. Let $\gamma\in C_d(K_t)$. We can decompose $\gamma=\gamma'+\gamma_w$ where $\gamma'\in C_d(K_{t'})$ and $\gamma_w\in C_{d}^{w}(K[X_t])$. Let $\beta'=\boundary\gamma'\cap K[X_{t'}]$. 
\par 
We want to find $\gamma$ that minimizes the weight of the chain $\boundary\gamma\setminus K[X_t]+b_t$. We can decompose the chain $\boundary\gamma\setminus K[X_t]+b_t$ into $(\boundary \gamma\setminus K[X_t]+b_t)\cap K[X_{t'}]$ and $(\boundary\gamma\setminus K[X_{t}]+b_t)\setminus K[X_{t'}]$. We find that $(\boundary\gamma\setminus K[X_t]+b_t)\cap K[X_t]=(\beta'+\boundary\gamma_w)\setminus K[X_{t}]+b_t\cap K[X_{t'}]$ and $(\boundary\gamma\setminus K[X_{t}]+b_t)\setminus K[X_{t'}]=\boundary\gamma'\setminus K[X_{t'}]+b_{t'}$. Decomposing $\boundary\gamma\setminus K[X_t]+b_t$ this way gives us a quick way to calculate $\| \boundary\gamma\setminus K[X_t]+b_t \|$. The chain $(\beta'+\boundary\gamma_w)\setminus K[X_{t}]+b_t\cap K[X_{t'}]$ is contained entirely in $C_{d-1}(K[X_{t'}])$, so its weight can be calculated in $O(k^d)$ time. Likewise, if $\gamma'$ is the chain that minimizes $\boundary\gamma'\setminus K[X_{t'}] + b_{t'}$ (which we will prove is the case for $\gamma$ that minimizes $\|\boundary\gamma\setminus K[X_t]+b_t \|$,) then the weight of $(\boundary\gamma' + b_{t'})\setminus K[X_{t'}]$ has already been computed and is stored in $\V[\beta',t']$. This justifies the following formula for $\V[\beta, t]$.
\begin{lemma}
Let $t$ be a forget node and $t'$ the unique child of $t$. Let $\beta\in C_{d-1}(K[X_t])$. Then
$$
\label{eqn:ohc_forget}
V[\beta,t]=
\underset{\beta',\gamma_w}
{\min} \; \V[\beta',t']+\|(\beta'+\boundary\gamma_w)\setminus K[X_t]+b_t\cap K[X_{t'}]\|
$$
where the minimization ranges over $\beta'$ and $\gamma_w$ such that 
\begin{itemize}
    \item $\beta'\in C_{d{-}1}(K[X_{t'}]$
    \item $\gamma_w\in C_{d}^{w}(K[X_{t'}])$
    \item $(\beta'+\boundary \gamma_w)\cap K[X_t]=\beta$
\end{itemize}
\end{lemma}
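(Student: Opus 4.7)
The plan is to adapt the structure of the OBCP forget-node proof (Lemma~\ref{lem:mbc_forget}) to this optimization setting. The key insight, already previewed in the paragraph above the statement, is that $\boundary\gamma\setminus K[X_t]+b_t$ decomposes cleanly into a part contained in $K[X_{t'}]$ and a part disjoint from $K[X_{t'}]$, so its Hamming norm splits as a sum of the two. This lets us optimize the two pieces independently, which is exactly what the formula does.

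First I would prove the easy direction. Pick any $\beta'$ and $\gamma_w$ satisfying the three bullet conditions, together with some $\gamma'\in C_d(K_{t'})$ that achieves $\V[\beta',t']$ (so that $\boundary\gamma'\cap K[X_{t'}]=\beta'$). Set $\gamma=\gamma'+\gamma_w$; by Lemma~\ref{lem:mbc_forget_chain} this lies in $C_d(K_t)$. Using $\boundary\gamma_w\subset K[X_{t'}]$ and the third bullet, one checks $\boundary\gamma\cap K[X_t]=(\beta'+\boundary\gamma_w)\cap K[X_t]=\beta$, so $\gamma$ spans $\beta$ at $t$. I then split $\boundary\gamma\setminus K[X_t]+b_t$ according to membership in $K[X_{t'}]$: the portion inside $K[X_{t'}]$ works out, using $\boundary\gamma'\cap K[X_{t'}]=\beta'$ and $\boundary\gamma_w\subset K[X_{t'}]$, to $(\beta'+\boundary\gamma_w)\setminus K[X_t]+b_t\cap K[X_{t'}]$, while the portion outside $K[X_{t'}]$ equals $\boundary\gamma'\setminus K[X_{t'}]+b_{t'}$ by Lemma~\ref{lem:mbc_boundary_forget}. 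Since these two chains live in disjoint subcomplexes, their Hamming norms add, giving the upper bound $\V[\beta,t]\le\V[\beta',t']+\|(\beta'+\boundary\gamma_w)\setminus K[X_t]+b_t\cap K[X_{t'}]\|$.

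For the reverse direction, I take an optimal $\gamma\in C_d(K_t)$ spanning $\beta$ at $t$, decompose $\gamma=\gamma'+\gamma_w$ via Lemma~\ref{lem:mbc_forget_chain}, and set $\beta'=\boundary\gamma'\cap K[X_{t'}]$. Verifying the third bullet is a short computation: $\beta=\boundary\gamma\cap K[X_t]=(\boundary\gamma'+\boundary\gamma_w)\cap K[X_t]=(\beta'+\boundary\gamma_w)\cap K[X_t]$. The same disjoint-union split now gives $\|\boundary\gamma\setminus K[X_t]+b_t\|=\|\boundary\gamma'\setminus K[X_{t'}]+b_{t'}\|+\|(\beta'+\boundary\gamma_w)\setminus K[X_t]+b_t\cap K[X_{t'}]\|$. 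Finally, $\gamma'$ must achieve $\V[\beta',t']$: otherwise, replacing $\gamma'$ by an optimizer $\gamma'_o$ for $(\beta',t')$ yields, via the easy direction, a chain $\gamma'_o+\gamma_w$ spanning $\beta$ at $t$ with strictly smaller objective, contradicting optimality of $\gamma$. This matches the formula.

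The main obstacle is the disjoint-union weight decomposition in the first paragraph: one has to carefully verify that the two chains being summed live in genuinely disjoint portions of $K$, namely $K[X_{t'}]\setminus K[X_t]$ versus $K\setminus K[X_{t'}]$, using that $\boundary\gamma_w\subset K[X_{t'}]$, that $b_{t'}=b_t\setminus K[X_{t'}]$ by Lemma~\ref{lem:mbc_boundary_forget}, and that $\boundary\gamma'\cap (K[X_{t'}]\setminus K[X_t])$ contributes to the inside-$K[X_{t'}]$ piece via $\beta'$ rather than to the outside piece. Once this bookkeeping is pinned down, both directions of the minimum-formula proof reduce to symbol manipulation essentially identical to Lemma~\ref{lem:mbc_forget}.
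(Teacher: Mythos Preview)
Your proposal is correct and follows essentially the same approach as the paper: both proofs verify that $\gamma=\gamma'+\gamma_w$ spans $\beta$ at $t$, decompose $\boundary\gamma\setminus K[X_t]+b_t$ into its pieces inside and outside $K[X_{t'}]$ (using Lemma~\ref{lem:mbc_boundary_forget} for the outside piece), and use the standard swap argument to force $\gamma'$ to achieve $\V[\beta',t']$. Your framing as two inequalities is slightly cleaner than the paper's presentation, but the mathematical content is identical.
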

\begin{proof}
    We first show that any chain on the right hand side of the equation in the lemma spans $\beta$ at $t$.
    Let $\gamma'\in C_{d}(K_{t'})$ that achieves $\mathbf{V}[\beta',t]$ and let $\gamma_w\in C_d^{w}(K[X_{t'}])$ such that $(\beta'+\boundary\gamma_w)\cap K[X_t]=\beta$. Let $\gamma=\gamma'+\gamma_w$. Observe that $\boundary\gamma\cap K[X_t]\cap K[X_{t'}]=\boundary\gamma\cap K[X_t]$ as $K[X_t]\subset K[X_{t'}]$. We find that
    \begin{align*}
    \boundary\gamma\cap K[X_t] =& \boundary\gamma\cap K[X_{t'}]\cap K[X_t] \\ 
    =& (\boundary\gamma'\cap K[X_{t'}]+\boundary\gamma_w\cap K[X_{t'}])\cap K[X_t] \tag{as $\boundary\gamma = \boundary\gamma' + \boundary\gamma_w$}\\
    =& (\beta'+\boundary\gamma_w\cap K[X_{t'}])\cap K[X_t] \tag{as $\gamma'$ spans $\beta'$ at $t'$} \\
    =& (\beta'+\boundary\gamma_w)\cap K[X_t] \tag{as $\gamma'\subset K[X_{t'}]$} \\
    =& \beta \tag{by assumption}
    \end{align*}
     This proves that $\gamma$ indeed spans $\beta$ at $t$. This proof also works in the opposite direction. If $\gamma$ spans $\beta$ at $t$, and $\gamma$ decomposes $\gamma = \gamma' +\gamma_w$ for $\gamma'\in C_{d}(K_{t'})$, $\gamma_w \in C_d^w(K[X_{t'}])$, and $\boundary\gamma'\cap K[X_{t'}]=\beta'$, then this prove that $(\beta' + \boundary\gamma_w)\cap K[X_t] = \beta$.
    \par
    Next, we show that our algorithm calculates the weight of $\gamma$ correctly; that is, $\| \boundary\gamma\setminus K[X_t]+b_t \|=\mathbf{V}[\beta',t']+\|(\beta'+\boundary\gamma_w)\setminus K[X_t]+b_t\cap K[X_{t'}]\|$. As the table entry is defined $\V[\beta', t'] = \| \boundary\gamma'\setminus K[X_{t'}]+b_t  \|$, we can show the formula is correct by showing that $\boundary\gamma\setminus K[X_t] + b_t$ decomposes into
    $$
    \boundary\gamma\setminus K[X_t]+b_t=(\boundary\gamma'\setminus K[X_{t'}]+b_t)\sqcup ((\beta'+\boundary\gamma_w)\setminus K[X_t]+b_t\cap K[X_{t'}]).
    $$ 
    As a first step, we decompose $\boundary\gamma\setminus K[X_t]+b_t$ into the two disjoint sets $(\boundary\gamma\setminus K[X_t]+b_t)\cap K[X_{t'}]$ and $(\boundary\gamma\setminus K[X_t]+b_t)\setminus K[X_{t'}]$. In fact, these two sets will equal $(\beta'+\boundary\gamma_w)\setminus K[X_t]+b_t\cap K[X_{t'}]$ and $\boundary\gamma'\setminus K[X_{t'}]+b_{t'}$ respectively. If we intersect $\boundary\gamma\setminus K[X_t]+b_t$ with $K[X_{t'}]$, we find that
    \begin{align*}
        (\boundary\gamma\setminus K[X_t]+b_t)\cap K[X_{t'}] =& ((\boundary\gamma\setminus K[X_t])\cap K[X_{t'}])+(b_t\cap K[X_{t'}]) \\
        =& ((\boundary\gamma\cap K[X_{t'}])\setminus K[X_{t}])+(b_t\cap K[X_{t'}]) \tag{reordering} \\
        =& (\beta'+\boundary\gamma_w)\setminus K[X_t]+b_t\cap K[X_{t'}] \tag{*}
    \end{align*}
    where line (*) uses the fact $\boundary\gamma\cap K[X_{t'}]=\beta'+\boundary\gamma_w$ that was proved in the first paragraph of this proof. If we subtract $K[X_{t'}]$ from $\boundary\gamma\setminus K[X_{t'}]+b_t$, we find
    \begin{align*}
        (\boundary\gamma\setminus K[X_{t}]+b_t)\setminus K[X_{t'}] =& \boundary\gamma\setminus K[X_t]\setminus K[X_{t'}]+b_t\setminus K[X_{t'}] \\
        =& \boundary\gamma\setminus K[X_{t'}]+b_t\setminus K[X_{t'}] \tag{as $K[X_t]\subset K[X_{t'}]$}\\
        =& (\boundary\gamma'+\boundary\gamma_w)\setminus K[X_{t'}] +b_t\setminus K[X_{t'}] \tag{as $\boundary\gamma = \boundary\gamma' + \boundary\gamma_w$} \\
        =& \boundary\gamma'\setminus K[X_{t'}]+b_{t}\setminus K[X_{t'}] \tag{as $\boundary\gamma_w\subset K[X_{t'}]$}\\
        =& \boundary\gamma'\setminus K[X_{t'}]+b_{t'} \tag{as $b_t\setminus K[X_{t'}]=b_{t'}$ by Lemma \ref{lem:mbc_boundary_forget}}
    \end{align*}
    \par 
     Finally, we prove that the chain $\gamma$ that attains $\V[\beta, t]$ is included on the right hand side of the equation. We saw in the first paragraph that we can write $\gamma$ as $\gamma = \gamma' + \gamma_w$ where $\gamma' \in C_d(K_{t'})$ and $\gamma_w \in C_d^w(K[X_t])$. If $\boundary\gamma' \cap K[X_{t'}] = \beta'$, it is easy to see that $\gamma'$ must achieve $\V[\beta',t']$; otherwise, we could substitute $\gamma'$ for the chain that achieves $\V[\beta',t']$ and get a strictly smaller chain.
\end{proof}

\subsection{Join Nodes}

Let $t$ be a join node, and let $t'$ and $t''$ be the two children of $t$. Recall that $X_t=X_{t'}=X_{t''}$. Let $\beta\in C_{d{-}1}(K[X_t])$. We claim the following formula for $\mathbf{V}[\beta,t]$.

\begin{lemma}
\label{eqn:ohc-join}
Let $t$ be a join node, and let $t'$ and $t''$ be its two children. Let $\beta\in C_{d{-}1}(K[X_t])$. Then
$$
\mathbf{V}[\beta,t]=\underset{\beta', \beta''}{\min}
\mathbf{V}[\beta',t']+\mathbf{V}[\beta'',t'']
$$
where the minimization ranges over $\beta'$ and $\beta''$ such that
\begin{itemize}
    \item $\beta'\in C_{d{-}1}(K[X_{t'}])$
    \item $\beta''\in C_{d{-}1}(K[X_{t''}])$
    \item $\beta'+ \beta''=\beta$
\end{itemize}
\end{lemma}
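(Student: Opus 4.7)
The plan is to mirror the structure of the join-node proof for OBCP (Lemma \ref{eqn:mbc-join}), replacing the ``spans $\beta$ at $t$'' condition (which required $\boundary\gamma\setminus K[X_t]=b_t$) with the new OHCP version (only requiring $\boundary\gamma\cap K[X_t]=\beta$, and minimizing the Hamming norm of $\boundary\gamma\setminus K[X_t]+b_t$). The central ingredients are Lemma \ref{lem:mbc_join_chain} (every $\gamma\in C_d(K_t)$ decomposes uniquely as $\gamma'+\gamma''$ with $\gamma'\in C_d(K_{t'})$ and $\gamma''\in C_d(K_{t''})$), Lemma \ref{lem:mbc_join_complex} (the complements $K_{t'}\setminus K[X_{t'}]$ and $K_{t''}\setminus K[X_{t''}]$ are disjoint), and Corollary \ref{cor:mbc_join_boundary} ($b_t=b_{t'}\sqcup b_{t''}$).

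For the forward direction, I would fix $\beta',\beta''$ with $\beta'+\beta''=\beta$ and take $\gamma'\in C_d(K_{t'})$, $\gamma''\in C_d(K_{t''})$ that achieve $\V[\beta',t']$ and $\V[\beta'',t'']$. Letting $\gamma=\gamma'+\gamma''$, I would first check $\gamma$ spans $\beta$ at $t$: since $X_t=X_{t'}=X_{t''}$,
$$\boundary\gamma\cap K[X_t]=(\boundary\gamma'\cap K[X_{t'}])+(\boundary\gamma''\cap K[X_{t''}])=\beta'+\beta''=\beta.$$
Next, I would split the objective. Because $\boundary\gamma'\setminus K[X_{t'}]\subseteq K_{t'}\setminus K[X_{t'}]$ and $\boundary\gamma''\setminus K[X_{t''}]\subseteq K_{t''}\setminus K[X_{t''}]$, and likewise $b_{t'}\subseteq K_{t'}\setminus K[X_{t'}]$ and $b_{t''}\subseteq K_{t''}\setminus K[X_{t''}]$, Lemma \ref{lem:mbc_join_complex} and Corollary \ref{cor:mbc_join_boundary} give the disjoint decomposition
$$\boundary\gamma\setminus K[X_t]+b_t=\bigl(\boundary\gamma'\setminus K[X_{t'}]+b_{t'}\bigr)\sqcup\bigl(\boundary\gamma''\setminus K[X_{t''}]+b_{t''}\bigr).$$
Taking Hamming norms yields $\|\boundary\gamma\setminus K[X_t]+b_t\|=\V[\beta',t']+\V[\beta'',t'']$, which upper-bounds $\V[\beta,t]$ by the right-hand side of the lemma.

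For the reverse direction, I would take $\gamma\in C_d(K_t)$ attaining $\V[\beta,t]$ and apply Lemma \ref{lem:mbc_join_chain} to write $\gamma=\gamma'+\gamma''$ uniquely with $\gamma'\in C_d(K_{t'})$, $\gamma''\in C_d(K_{t''})$. Setting $\beta'=\boundary\gamma'\cap K[X_{t'}]$ and $\beta''=\boundary\gamma''\cap K[X_{t''}]$, the same computation as above shows $\beta'+\beta''=\beta$. The disjoint decomposition also gives
$$\V[\beta,t]=\|\boundary\gamma'\setminus K[X_{t'}]+b_{t'}\|+\|\boundary\gamma''\setminus K[X_{t''}]+b_{t''}\|\ge \V[\beta',t']+\V[\beta'',t''],$$
so the right-hand side lower-bounds $\V[\beta,t]$ as well. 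An exchange argument (identical in spirit to the one at the end of the OBCP join proof) shows that $\gamma'$ and $\gamma''$ must individually achieve $\V[\beta',t']$ and $\V[\beta'',t'']$, since otherwise substituting an optimal chain on one side would, by the forward direction, produce a $\tilde\gamma$ spanning $\beta$ at $t$ with strictly smaller objective, contradicting optimality of $\gamma$.

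I do not expect a genuine obstacle here: the only place that needs care is ensuring the disjointness used to split the Hamming norm, which follows cleanly because every simplex in $\boundary\gamma'\setminus K[X_{t'}]$ or $b_{t'}$ lies in $K_{t'}\setminus K[X_{t'}]$ while every simplex in $\boundary\gamma''\setminus K[X_{t''}]$ or $b_{t''}$ lies in $K_{t''}\setminus K[X_{t''}]$, and these two sets are disjoint by Lemma \ref{lem:mbc_join_complex}. Once this disjointness is in hand, both inequalities follow symmetrically and the exchange argument is routine.
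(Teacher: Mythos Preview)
Your proposal is correct and follows essentially the same approach as the paper: both use Lemma \ref{lem:mbc_join_chain} to decompose $\gamma=\gamma'+\gamma''$, verify $\beta=\beta'+\beta''$ via $X_t=X_{t'}=X_{t''}$, and then split the objective $\boundary\gamma\setminus K[X_t]+b_t$ into a disjoint union using Lemma \ref{lem:mbc_join_complex} and Corollary \ref{cor:mbc_join_boundary}, finishing with the same exchange argument. The only cosmetic difference is that you organize the argument as two inequalities (forward and reverse), whereas the paper first establishes the bijective correspondence and then the weight identity.
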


\begin{proof}
Let $\beta\in C_{d{-}1}(K_{t})$, and let $\gamma$ span $\beta$ at $t$. By Lemma \ref{lem:mbc_join_chain}, $\gamma=\gamma'+\gamma''$ where $\gamma'\in C_d(
K_{t'})$ and $\gamma''\in C_d(K_{t''})$. Let $\beta'=\boundary\gamma'\cap K[X_{t'}]$ and $\beta''=\boundary\gamma''\cap K[X_{t''}]$. Clearly, 
\begin{align*}
\beta =& \boundary\gamma\cap K[X_t] \\
=& (\boundary\gamma'+\boundary\gamma'')\cap K[X_t] \\
=& \boundary\gamma'\cap K[X_{t}]+\boundary\gamma''\cap K[X_{t}] \\
=& \boundary\gamma'\cap K[X_{t'}]+\boundary\gamma''\cap K[X_{t''}] \tag{as $K[X_t] = K[X_{t'}] = K[X_{t''}]$}\\
=& \beta'+\beta''.
\end{align*}
This proof works in the reverse direction as well. If $\gamma'$ spans $\beta'$ and $t'$, $\gamma''$ spans $\beta''$ at $t''$, and $\beta'+\beta''=\beta$, then $\gamma'+\gamma''$ spans $\beta$ at $t$.
\par
Next, we prove that the formula correctly calculates the weight of $\gamma$. We first prove that $\boundary\gamma\setminus K[X_t]+b_t=(\boundary\gamma'\setminus K[X_{t'}]+b_{t'})\sqcup (\boundary\gamma''\setminus K[X_{t''}]+b_{t''})$. We consider the intersection 
\begin{align*}
    \boundary\gamma\setminus K[X_t] =& \boundary\gamma\cap K_t\setminus K[X_t] \\
    =& (\boundary\gamma' + \boundary\gamma'')\cap K_t\setminus K[X_t] \tag{as $\boundary\gamma = \boundary\gamma' + \boundary\gamma''$} \\
    =& (\boundary\gamma'+\boundary\gamma'')\cap(K_{t'}\setminus K[X_{t'}]+K_{t''}\setminus K[X_{t''}]) \tag{by Lemma \ref{lem:mbc_join_complex}}
\end{align*}
As $\boundary\gamma'\subset K_{t'}$, $\boundary\gamma''\subset K_{t''}$, and $K_{t'}\setminus K[X_{t'}]$ and $K_{t''}\setminus K[X_{t''}]$ are disjoint, we distribute and see that
$$
\boundary\gamma\setminus K[X_t]=\boundary\gamma'\setminus K[X_{t'}]+\boundary\gamma''\setminus K[X_{t''}]
$$
Now if we consider the intersection
\begin{align*} 
    \boundary\gamma\setminus K[X_t]+b_t =& (\boundary\gamma'\setminus K[X_{t'}]+\boundary\gamma''\setminus K[X_{t''}])+(b_{t'}+b_{t''}) \tag{by Corollary \ref{cor:mbc_join_boundary} }\\
    =& (\boundary\gamma'\setminus K[X_{t'}]+b_{t'})\sqcup(\boundary\gamma'\setminus K[X_{t''}]+b_{t''})
\end{align*}
where the second line follows as $\boundary\gamma\setminus K[X_{t'}]+b_{t'}\subset K_{t'}\setminus K[X_{t'}]$, $\boundary\gamma\setminus K[X_{t''}]+b_{t''}\subset K_{t''}\setminus K[X_{t''}]$, and $K_{t'}\setminus K[X_{t'}]$ and $K_{t''} \setminus K[X_{t''}]$ are disjoint. Thus, $\|\boundary\gamma\setminus K[X_t]+b_t\|=\|\boundary\gamma'\setminus K[X_{t'}]+b_{t'}\|+\|\boundary\gamma''\setminus K[X_{t''}]+b_{t''}\|$.
\par
As the sum of any two chains $\gamma'$ and $\gamma''$ that span $\beta'$ at $t'$ and $\beta''$ at $t''$ spans $\beta$ at $t$, it is easy to see that the optimal chain that span $\beta$ at $t$ must be composed of optimal chains that span $\beta'$ at $t'$ and $\beta''$ at $t''$ respectively.
\end{proof}

\section{Comparison with Other Algorithms}
\label{sec:comparison}
\newcommand{\con}{Con_d(K)}
\newcommand{\hasse}{Hasse_d}

In this section, we compare our algorithms for OHCP and OBCP to the algorithms introduced by Blaser and V\.agset \cite{blaser_hl} and Blaser et al.~\cite{blaser_mbc}. Our algorithm and their algorithms are parameterized by the treewidth of different graphs associated with a simplicial complex, so we compare the treewidth of the 1-skeleton to the treewidth of these graphs. We start by defining these graphs.
\par 
The \textit{\textbf{d-connectivity graph}} is the graph $Con_d(K)$ with the $d$-simplices $K_d$ as vertices and pairs of $d$-simplices that share a $(d{-}1)$-face as edges. \textit{\textbf{Level d of the Hasse diagram}} is the graph $\hasse(K)$ with the $(d{-}1)$ and $d$-simplices $K_{d-1}\cup K_d$ as vertices and edges between $d$-simplices and their $(d{-}1)$-faces.
\par 
Our algorithms for OHCP and OBCP run in $2^{O(k^d)}n$ time, where $k$ is the treewidth of the 1-skeleton and $n$ is the number of vertices in the complex. Blaser and V\.agset introduce two algorithms for OHCP: one algorithm uses the $d$-connectivity graph, and the other algorithm uses level $d$ of the Hasse diagram of $K$. The algorithms of Blaser and V\.{a}gset run in $O(2^{l(2d+5)}m)$ and $O(2^{2l}m)$ time respectively, where $l$ is the treewidth and $m$ is the number of vertices in the given graph. Blaser et al.~introduce an algorithm for OBCP that runs in $O(4^{l}m)$ time where $l$ is the treewidth and $m$ is the number of vertices of the Hasse diagram. Note that for the $d$-connectivity graph and level $d$ of the Hasse diagram, the number of vertices $m$ are $O(\binom{n}{d+1})$ and $O(\binom{n}{d}+\binom{n}{d+1})$ respectively, and these bounds can be tight.
\par 
The goal of this section is to compare the treewidth of the graphs used by our algorithm and by the algorithms of Blaser and V\.agset and Blaser et al. The following theorem by Blaser and V\.agset compare the treewidth of $\con$ and $\hasse(K)$

\begin{theorem}[Proposition 7.1, Blaser and V\.agset~\cite{blaser_hl}]
    Let $K$ be a simplicial complex. The treewidth $\tw(\hasse(K))\leq \tw(\con)+1$; however, $\tw(\con)$ is not bounded by $\tw(\hasse(K))$.
\end{theorem}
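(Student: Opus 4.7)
For the upper bound $\tw(\hasse(K)) \leq \tw(\con) + 1$, the plan is to take any tree decomposition $(T, X)$ of $\con$ and extend it to a tree decomposition of $\hasse(K)$ by attaching new leaves. The key observation is that for any $(d{-}1)$-simplex $\tau \in K_{d-1}$, the cofaces of $\tau$ form a clique in $\con$: any two cofaces of $\tau$ share $\tau$ as a common $(d{-}1)$-face and so are adjacent in $\con$. By Lemma \ref{lem:clique-bag}, there is some node $t_\tau \in T$ whose bag $X_{t_\tau}$ contains every coface of $\tau$. I would then form $T'$ from $T$ by adding, for each $\tau \in K_{d-1}$ with at least one coface, a new leaf $t'_\tau$ adjacent to $t_\tau$ with bag $X_{t_\tau} \cup \{\tau\}$, and by attaching a singleton bag $\{\tau\}$ somewhere for each $(d{-}1)$-simplex with no coface. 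Verifying the tree decomposition axioms is then routine: every vertex of $\hasse(K)$ lies in some bag; every edge $\{\tau, \sigma\}$ of $\hasse(K)$ is witnessed inside $X'_{t'_\tau}$ since that bag holds $\tau$ together with all cofaces of $\tau$; each $\tau$ sits in a single bag, so its induced subtree is trivially connected; and for each $d$-simplex $\sigma$, its induced subtree is the old subtree together with leaves attached at nodes of that subtree, which remains connected. Every new bag copies an old bag and adds exactly one element, so the maximum bag size grows by at most one.

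For the other direction, the plan is to exhibit an explicit family witnessing that the connectivity-graph treewidth can be arbitrarily large while the Hasse treewidth remains bounded. Fix $d \geq 1$ and let $L_n$ be the simplicial complex on vertices $\{v_1, \ldots, v_d, u_1, \ldots, u_n\}$ generated by the $d$-simplices $\sigma_i = \{v_1, \ldots, v_d, u_i\}$ for $i = 1, \ldots, n$, all of which share the common $(d{-}1)$-face $\tau = \{v_1, \ldots, v_d\}$. Any two $\sigma_i, \sigma_j$ meet in $\tau$, so the connectivity graph $Con_d(L_n)$ is the complete graph on $n$ vertices and has treewidth $n - 1$. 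On the other hand, in $\hasse(L_n)$ the vertex $\tau$ is adjacent to every $\sigma_i$, each $\sigma_i$ is adjacent to $\tau$ and to its $d$ remaining $(d{-}1)$-faces $\sigma_i \setminus \{v_j\}$, and each such face is unique to $\sigma_i$ because it contains $u_i$. No other edges exist, so $\hasse(L_n)$ is a tree rooted at $\tau$ and has treewidth $1$ independent of $n$, ruling out any function bounding the connectivity-graph treewidth in terms of the Hasse treewidth.

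The main obstacle is securing the additive $+1$ in the upper bound rather than a larger blow-up. A naive approach—inserting $\tau$ directly into $X_{t_\tau}$—could grow the anchor bag unboundedly, since many different $(d{-}1)$-simplices may be forced to share the same anchor. The remedy, giving each $\tau$ its own private leaf hanging off $t_\tau$ rather than piling them all into one bag, is what keeps the bag-size increase to a single element, and it relies crucially on Lemma \ref{lem:clique-bag} to locate the anchor $t_\tau$ in the first place.
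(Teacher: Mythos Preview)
The paper does not give its own proof of this statement: it is quoted as Proposition~7.1 of Blaser and V\.agset~\cite{blaser_hl} and simply cited without argument, so there is nothing in the present paper to compare your proposal against.

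That said, your argument is correct and is essentially the natural proof one would expect. For the upper bound, the key step---observing that the $d$-cofaces of a fixed $(d{-}1)$-simplex $\tau$ form a clique in $\con$ and then invoking Lemma~\ref{lem:clique-bag} to anchor a new leaf bag $X_{t_\tau}\cup\{\tau\}$---is exactly right, and your verification of the three tree-decomposition axioms is complete. Your remark that one must hang each $\tau$ off its own private leaf (rather than stuffing many $(d{-}1)$-simplices into a shared bag) is the crucial point that keeps the additive overhead at $+1$. For the unboundedness direction, your family $L_n$ (a ``book'' of $n$ $d$-simplices glued along a single common $(d{-}1)$-face) is the standard counterexample: $\con$ is $K_n$ while $\hasse(L_n)$ is a tree, so the gap is unbounded.
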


As the treewidth of $\hasse(K)$ is $O(\tw(\con))$, we will spend the rest of this section comparing the treewidth of $\hasse(K)$ and the treewidth of the 1-skeleton $K^1$. We will show that $\tw(\hasse(K))\in O(\binom{\tw(K^1)+1}{d})$. We will then show that for a certain class of complexes, this upper bound is nearly tight, i.e. $\tw(\hasse(K))\in \Omega(\binom{\tw(K^1)+1}{d}\cdot \frac{1}{d+1})$. The first result implies that algorithms parameterized by $\tw(\hasse(K))$ having running time $O(2^{\tw(K^1)^{d}}m)$; in other words, these algorithm have (at worst) the same dependence on $\tw(K^1)$ as our algorithms. The second result implies that in some cases, the dependence on $\tw(K^1)$ is almost the same.

\subsection{Upper Bounds}

In this section, we provide an upper bound on the treewidth of level $d$ of the Hasse diagram in terms of the treewidth of the 1-skeleton. The proof of this theorem was developed with Nello Blaser and Erlend Raa V\.{a}gset.

\begin{theorem}
\label{thm:tw_hasse_upper_bound}
    Let $K$ be a simplicial complex. Then $\tw(\hasse(K))\in O(\binom{\tw(K^1)+1}{d})$.
\end{theorem}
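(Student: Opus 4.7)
The plan is to lift a tree decomposition of the 1-skeleton directly into one of $\hasse(K)$ by replacing each bag with the simplices it supports. Concretely, let $(T,X)$ be a tree decomposition of $K^{1}$ of width $k=\tw(K^1)$, so $|X_t|\leq k+1$ for every node $t\in I$. Define a new family of bags
\[
Y_t = \{\sigma \in K_{d{-}1}\cup K_d \suchthat \sigma\subset X_t\},
\]
and claim that $(T,Y)$ is a tree decomposition of $\hasse(K)$.

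Next I would verify the three tree decomposition axioms for $(T,Y)$. Coverage of vertices is immediate from Corollary \ref{cor:simplex_bag}: every $(d{-}1)$- and $d$-simplex $\sigma$ is contained in some bag $X_t$, hence $\sigma\in Y_t$. For the edge axiom, every edge of $\hasse(K)$ connects a $d$-simplex $\tau$ to one of its $(d{-}1)$-faces $\sigma\subset\tau$; applying Corollary \ref{cor:simplex_bag} to $\tau$ produces a node $t$ with $\tau\subset X_t$, and then $\sigma\subset\tau\subset X_t$ places both endpoints in $Y_t$. The connectivity axiom follows, again from Corollary \ref{cor:simplex_bag}, since the set $\{t\in I\suchthat \sigma\subset X_t\}=\{t\in I \suchthat \sigma\in Y_t\}$ is a connected subtree for every simplex $\sigma$.

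It remains to bound the width of $(T,Y)$. A bag $Y_t$ consists of subsets of $X_t$ of size $d$ (the $(d{-}1)$-simplices) and size $d+1$ (the $d$-simplices), so
\[
|Y_t|\leq \binom{|X_t|}{d}+\binom{|X_t|}{d+1}\leq \binom{k+1}{d}+\binom{k+1}{d+1}=\binom{k+2}{d+1},
\]
which is $O\!\left(\binom{\tw(K^1)+1}{d}\right)$ in the stated asymptotic form (the two binomial terms differ only by a factor of $(k+1-d)/(d+1)$, absorbed into the big-$O$). Since $\tw(\hasse(K))$ is at most the width of $(T,Y)$, this establishes the theorem.

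I do not expect any of the steps to be genuinely difficult; the only mild subtlety is keeping the conventions straight --- $\hasse(K)$ uses $K_{d{-}1}\cup K_d$ as vertices, so both binomial coefficients appear in the bag-size count, and one must make sure that Corollary \ref{cor:simplex_bag} is applied to both dimensions. The heart of the argument is simply that cliques in the 1-skeleton already corral all higher simplices into shared bags, so the Hasse-diagram decomposition is obtained essentially for free.
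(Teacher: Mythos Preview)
Your construction of $(T,Y)$ is a valid tree decomposition of $\hasse(K)$, and the verification of the three axioms via Corollary~\ref{cor:simplex_bag} is clean. The problem is in the final width estimate. You bound
\[
|Y_t|\leq \binom{k+1}{d}+\binom{k+1}{d+1},
\]
and then assert that the second term is ``absorbed into the big-$O$'' because the ratio is $(k+1-d)/(d+1)$. But that ratio is not a constant: with $d$ fixed and $k=\tw(K^1)$ growing, it is $\Theta(k)$. Concretely, for $d=1$ your bags can have $\binom{k+1}{2}=\Theta(k^2)$ elements, whereas the claimed bound is $O(\binom{k+1}{1})=O(k)$. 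So what you have actually proved is $\tw(\hasse(K))\in O(\binom{k+1}{d+1})$, one notch weaker than the theorem, and this difference matters for the downstream running-time comparison (it turns $2^{O(k^d)}$ into $2^{O(k^{d+1})}$).

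The paper fixes exactly this issue with a small extra step that you skipped. It puts only the $(d{-}1)$-simplices of $K[X_t]$ into $Y_t$, giving $|Y_t|\leq\binom{k+1}{d}$. Then, for each $d$-simplex $\tau$, it attaches a \emph{new leaf node} $t_\tau$ whose bag is $Y_t\cup\{\tau\}$ for some $t$ with $\tau\subset X_t$; this bag has size at most $\binom{k+1}{d}+1$. Because each $d$-simplex lives in a single leaf bag, connectivity for $d$-simplices is trivial, and the edge axiom holds since all $(d{-}1)$-faces of $\tau$ already sit in $Y_t\subset Y_{t_\tau}$. Your simpler ``put everything in at once'' construction is correct but pays for its simplicity with the extra $\binom{k+1}{d+1}$ term; the leaf-node trick is what recovers the stated $O(\binom{k+1}{d})$ bound.
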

\begin{proof}
     Let $(T,X)$ be any tree decomposition of $K^1$ of width $k$. We will use $(T,X)$ as a starting point for constructing a tree decomposition of $\hasse(K)$ of width $O(\binom{k+1}{d})$. For a node $t\in T$, we add all $(d-1)$-simplices in $K[X_t]$ to a new bag $Y_t$. The size of any bag $Y_t$ is $O(\binom{k+1}{d})$, as any set of $d$ vertices in $X_t$ may be a $(d-1)$-simplex in $K$. By Corollary \ref{cor:simplex_bag}, each $(d-1)$-simplex $\sigma\in Y_t$ for some $t\in T$ as $\sigma\subset X_t$ for some $t\in T$, and moreover, the set of nodes $\{t\in T:\sigma\in T\}$ are connected.
     \par 
     Next, for each $d$-simplex $\tau\in K$, there is a node $t\in T$ such that $\tau\subset X_t$ by Corollary \ref{cor:simplex_bag}. By construction, we conclude that each $(d-1)$-face $\sigma$ of $\tau$ must be contained in the new bag $Y_t$. We add a new node $t_\tau$ to the tree with bag $Y_{t_\tau} = Y_t\cup\{\tau\}$, and we connect $t_\tau$ to $t$. The size of $Y_{t_\tau}$ is $O(\binom{k+1}{d})$, as $Y_{t_\tau}$ contains one more vertex than $Y_t$.
     \par 
     We now verify that this new tree decomposition satisfies the conditions to be a tree decomposition. Each $(d-1)$-simplex $\sigma$ is contained in the bags of the nodes $\{t\in T:\sigma\subset X_t\}\cup\{t_\tau:\sigma\subset\tau\}$; the nodes $\{t\in T:\sigma\subset X_t\}$ form a connected subtree as $(T,X)$ is a tree decomposition, and each node $t_\tau$ is connected to some node in $\{t\in T:\sigma\subset X_t\}$. Each $d$-simplex $\tau$ is contained in the single bag $Y_{t_\tau}$. Finally, each edge in $\hasse(K)$ is a pair $\{\sigma,\tau\}$ of a $(d-1)$-simplex $\sigma$ and a $d$-simplex $\tau$ such that $\sigma\subset\tau$. By construction, $\{\sigma,\tau\}\subset Y_{t_\tau}$.
     \par 
     This proves there is a tree decomposition of $\hasse(K)$ of width $O(\binom{k+1}{d})$ when $k$ is the width of any tree decomposition of $K^1$. In particular, this proves that there is a tree decomposition of width $O(\binom{\tw(K^1)+1}{d})$.
\end{proof}

\subsection{Lower Bounds}

Let $\Delta^n$ be the simplicial complex that is the set of all subsets of a set of vertices $V=\{v_1,\ldots,v_n\}$. (Intuitively, $\Delta^n$ is a $(n-1)$-simplex and all its faces.) Our main theorem of this section is a lower bound of $\Omega(\binom{n}{d}\cdot\frac{1}{d+1})$ on $\tw(\hasse(\Delta^n))$. The treewidth of the 1-skeleton $\tw((\Delta^n)^1) = n-1$ as $(\Delta^n)^1$ is a complete graph, so this lower bound shows that the $\Delta^n$ is close to being a worst-case for the treewidth of the Hasse diagram. Specifically, the lower bound on $\tw(\hasse(\Delta^n))$ differs from the upper bound on $\tw(\hasse(\Delta^n))$ of Theorem \ref{thm:tw_hasse_upper_bound} by a multiplicative factor of $d+1$. 

\subsubsection{Background: Vertex and Edge Expansion}

We will obtain our lower bounds of $\tw(\hasse(\Delta^n))$ using the related notions of vertex and edge expansion. Let $G$ be a graph. The \textit{\textbf{vertex expansion}} of $G$ is 
$$
    VE(G) = \underset{S\subset V(G) : 1\leq |S| \leq \frac{1}{2}|V(G)|}{\min} \frac{|N(S)|}{|S|}
$$
where $N(S)$ is the set of vertices in $V(G)\setminus S$ with a neighbor in $S$. The vertex expansion of $G$ is closely tied to the treewidth of $G$, as evidenced by the following lemma from Chandran and Subramian \cite{chandran_subramian+2005}.\par

\begin{lemma}[Chandran and Subramian, Lemma 9\cite{chandran_subramian+2005}]
\label{lem:tw_lower_bound}
    Let $1\leq s\leq |V(G)|$. Define $N_{\min}(G,s)$ to     be
    $$
        N_{\min}(G,s) = \underset{S\subset V(G) : s/2 \leq |S| \leq s}{\min} |N(S)|.
    $$
    Then $\tw(G) \geq N_{\min}(G,s)-1$.
\end{lemma}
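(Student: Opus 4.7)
The plan is to show that any tree decomposition $(T,X)$ of $G$ contains a bag of size at least $N_{\min}(G,s)$, which gives $\tw(G)\geq N_{\min}(G,s)-1$. I would argue by contradiction: assume every bag $X_t$ satisfies $|X_t|<N_{\min}(G,s)$, and construct a subset $S\subseteq V(G)$ with $s/2\leq|S|\leq s$ and $|N(S)|<N_{\min}(G,s)$, contradicting the definition of $N_{\min}(G,s)$.

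The key structural tool: for adjacent nodes $t,u$ in $T$, let $A(t,u)$ be the union of the bags in the subtree of $T-\{t,u\}$ containing $u$, with $X_t$ removed. The tree-decomposition axiom that the bags containing any fixed vertex form a connected subtree of $T$ has two consequences I would use: (a) for each fixed $t$, the sets $\{A(t,u)\}_u$ are pairwise disjoint and partition $V(G)\setminus X_t$; (b) for any union $S=\bigcup_{u\in U}A(t,u)$ over a set $U$ of neighbors of $t$, the neighborhood satisfies $N(S)\subseteq X_t$, because an edge leaving $S$ would force a common bag to span two different subtrees of $T-t$, violating connectivity.

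To locate $t^{*}$ and $S$, I would run a centroid-style walk on $T$. Starting from an arbitrary node, while the current node $t$ has a neighbor $u$ with $|A(t,u)|>s$, step to such a $u$. Once no such step is possible, every $|A(t,u)|\leq s$ and one of three situations applies: (i) some $|A(t,u)|$ lies in $[s/2,s]$, so $S:=A(t,u)$ is valid; (ii) all $|A(t,u)|<s/2$ and $|V(G)\setminus X_t|\geq s/2$, so a greedy union of the $A(t,u)$'s, stopped at the first moment the running total reaches $s/2$, lands in $[s/2,s]$ since each summand contributes less than $s/2$; or (iii) $|V(G)\setminus X_t|<s/2$, in which case $|X_t|>|V(G)|-s/2\geq N_{\min}(G,s)$ immediately, because every valid $S$ satisfies $|N(S)|\leq|V(G)|-|S|\leq|V(G)|-s/2$. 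In the first two cases, $N(S)\subseteq X_{t^{*}}$ yields $|N(S)|\leq|X_{t^{*}}|<N_{\min}(G,s)$, the desired contradiction.

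The main obstacle will be establishing that the walk terminates. If the walk were to oscillate back along a just-traversed edge $\{t,u\}$, I would exploit the identity $|A(u,t)|=|V(G)|-|A(t,u)|-|X_t\cap X_u|$ together with a potential-decrease argument on a quantity like $|V_{tu}|$ to rule this out; alternatively, I would invoke a classical recursive balanced-separator theorem for tree decompositions that produces $t^{*}$ directly without a walk. Once $t^{*}$ is in hand, the partition and neighborhood containment in (a) and (b) finish the argument.
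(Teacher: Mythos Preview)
The paper does not prove this lemma; it is quoted verbatim from Chandran and Subramanian and used as a black box. So there is no ``paper's own proof'' to compare your attempt against.

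That said, your outline is the standard argument and is essentially sound: the partition property (a), the separator property (b), and the case split (i)/(ii)/(iii) at the final node $t^{*}$ are all correct. Case (iii) in particular is handled correctly, since any admissible $S$ satisfies $|N(S)|\leq |V(G)|-|S|\leq |V(G)|-s/2$, so $|X_{t^{*}}|>|V(G)|-s/2\geq N_{\min}(G,s)$ already contradicts the hypothesis.

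You are right that the walk's termination is the one genuine loose end. As written, the rule ``move to any neighbor $u$ with $|A(t,u)|>s$'' can oscillate when $s<|V(G)|/2$, because $|A(t,u)|+|A(u,t)|=|V(G)|-|X_t\cap X_u|$ does not by itself force $|A(u,t)|\leq s$. The cleanest fix is not a potential argument on the walk but a direct existence argument: orient each edge $\{t,u\}$ toward the side whose $A$-set is larger (breaking ties arbitrarily), observe that a tree with every edge singly oriented has a sink, and take $t^{*}$ to be that sink; then every $|A(t^{*},u)|\leq |V(G)|/2$, and a second pass (or your greedy union in case (ii)) produces the required $S$. Your stated fallback to a classical balanced-separator lemma for tree decompositions accomplishes exactly this, so invoking it is a legitimate way to close the gap.
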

\begin{corollary}
\label{cor:tw_lower_bound_ve}
    Let $G$ be a graph. Then $\tw(G) \geq \frac{VE(G)\cdot |V(G)|}{4}$.
\end{corollary}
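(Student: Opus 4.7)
The plan is to reduce the corollary to Lemma~\ref{lem:tw_lower_bound} by choosing the parameter $s$ so that every admissible set $S$ in the definition of $N_{\min}(G,s)$ also falls within the range governed by the definition of vertex expansion. Concretely, I would set $s = |V(G)|/2$, so that any $S$ with $s/2 \leq |S| \leq s$ satisfies $|V(G)|/4 \leq |S| \leq |V(G)|/2$. In particular, such $S$ is a legal candidate in the minimization defining $VE(G)$, so $|N(S)| \geq VE(G)\cdot |S|$.

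Next, I would lower bound $|S|$ by $|V(G)|/4$ to get $|N(S)| \geq VE(G)\cdot |V(G)|/4$ uniformly over all admissible $S$. Taking the minimum over $S$ yields $N_{\min}(G,|V(G)|/2) \geq VE(G)\cdot |V(G)|/4$, and then Lemma~\ref{lem:tw_lower_bound} immediately gives $\tw(G) \geq VE(G)\cdot |V(G)|/4 - 1$, which is the claimed inequality up to a negligible additive constant (irrelevant for the $\Omega(\cdot)$ application that follows).

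There is essentially no obstacle here beyond bookkeeping: the delicate point is just making sure the choice of $s$ is compatible with the constraint $|S| \leq |V(G)|/2$ in the definition of $VE(G)$, and that the lower bound $|S| \geq s/2$ is strong enough to absorb the factor of $4$ in the denominator. One minor care point is that $|V(G)|/2$ may be non-integral; I would silently take $s = \lfloor |V(G)|/2 \rfloor$, which does not change the asymptotic argument. Since the corollary is only ever invoked through $\Omega$-notation in the subsequent lower bound on $\tw(\hasse(\Delta^n))$, the resulting additive $-1$ loss relative to the stated bound is harmless.
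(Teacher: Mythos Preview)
Your proposal is correct and follows exactly the same route as the paper: set $s = |V(G)|/2$, observe that every admissible $S$ then lies in the range for the vertex-expansion minimization so that $|N(S)| \geq VE(G)\cdot|S| \geq VE(G)\cdot|V(G)|/4$, and invoke Lemma~\ref{lem:tw_lower_bound}. Your remarks about the stray $-1$ and the integrality of $s$ are valid bookkeeping points that the paper simply elides.
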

\begin{proof}[Proof of Corollary]
    We obtain the corollary by setting $s = |V(G)| / 2$. Using the definition of vertex expansion, any set of vertices $S$ with at least $s/2 = |V(G)|/4$ vertices satisfies $|N(S)| \geq VE(G)\cdot |S|\geq \frac{VE(G)\cdot |V(G)|}{4}$.
\end{proof}

We are able to get a lower bound on the vertex expansion (and by Corollary \ref{cor:tw_lower_bound_ve}, a lower bound on the treewidth of $G$) using the related notion of edge expansion. The \textit{\textbf{edge expansion}} of a graph $G$ is 
$$
    EE(G) = \underset{S\subset V(G) : 1\leq |S| \leq |V(G)|/2}{\min} \frac{|\delta(S)|}{|S|}
$$
where $\delta(S)$ is the set of edges with one endpoint in $S$ and one endpoint in $V(G)\setminus S$. The following lemma relates the notion of edge and vertex expansion. 
\begin{lemma}
\label{lem:ve_vs_ee}
    Let $d_{\max}$ be the maximum degree of a vertex in $G$. Then $VE(G) \geq \frac{EE(G)}{d_{\max}}$. 
\end{lemma}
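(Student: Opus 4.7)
The plan is to prove the inequality by a direct edge-counting argument applied to an arbitrary set $S$ witnessing (or at least being admissible for) the vertex-expansion minimum. The key observation is that every edge in the cut $\delta(S)$ has its non-$S$ endpoint in $N(S)$ by definition of $N(S)$, so the edges of $\delta(S)$ can be charged to the vertices of $N(S)$.

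Concretely, I would fix any $S \subseteq V(G)$ with $1 \leq |S| \leq |V(G)|/2$. Each edge $e \in \delta(S)$ has one endpoint in $S$ and one endpoint in $V(G) \setminus S$; by definition, that latter endpoint lies in $N(S)$. Since each vertex in $N(S)$ is incident to at most $d_{\max}$ edges in total, it can be incident to at most $d_{\max}$ edges of $\delta(S)$. Summing over $N(S)$ yields
\[
|\delta(S)| \;\leq\; d_{\max}\cdot |N(S)|.
\]
Dividing both sides by $d_{\max} \cdot |S|$ and rearranging gives $\frac{|N(S)|}{|S|} \geq \frac{1}{d_{\max}} \cdot \frac{|\delta(S)|}{|S|} \geq \frac{EE(G)}{d_{\max}}$, where the final inequality uses that $S$ is admissible in the definition of $EE(G)$.

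Finally, taking the minimum over all admissible $S$ on the left-hand side preserves the inequality, yielding $VE(G) \geq EE(G)/d_{\max}$, as required. I do not anticipate any real obstacle here: the proof is essentially the standard observation that a cut of size $|\delta(S)|$ must be ``absorbed'' by at least $|\delta(S)|/d_{\max}$ distinct boundary vertices, since no single boundary vertex can carry more than its degree. The only minor care needed is to verify that the range of $S$ appearing in the definitions of $VE(G)$ and $EE(G)$ agree (both quantify over $1 \leq |S| \leq |V(G)|/2$), which they do, so the same set $S$ can be used on both sides of the bound.
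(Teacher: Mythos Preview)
Your proposal is correct and follows essentially the same approach as the paper: both argue that every cut edge in $\delta(S)$ has its non-$S$ endpoint in $N(S)$, and since each such vertex is incident to at most $d_{\max}$ edges, one gets $|\delta(S)|\leq d_{\max}\,|N(S)|$, from which the bound follows by dividing by $|S|$ and taking the minimum. The only cosmetic difference is that the paper fixes the specific minimizer $S^*$ for $VE(G)$, while you prove the inequality for arbitrary admissible $S$ and then minimize; these are equivalent.
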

\begin{proof}
     Let $S\subset V(G)$ be a set of vertices such that $1\leq |S| \leq |V(G)|/2$. Then $|N(S)| \geq |\delta(S)| / d_{\max}$ as each vertex in $N(S)$ is incident to at most $d_{\max}$ edges in $\delta(S)$. If $S^* = \underset{S\subset V(G) : 1\leq |S|\leq |V(G)|/2}{\arg\min} \frac{|N(S)|}{|S|}$, then 
     $$
        VE(G) = \frac{|N(S^*)|}{|S^*|} \geq \frac{|\delta(S^*)|}{d_{\max}\cdot|S^*|} \geq \frac{EE(G)}{d_{\max}}
     $$
\end{proof}

We can now combine Corollary \ref{cor:tw_lower_bound_ve} and Lemma \ref{lem:ve_vs_ee} to get a lower bound on the treewidth in terms of edge expansion.

\begin{lemma}
\label{cor:tw_lower_bound_ee}
    Let $G$ be a graph. Then $\tw(G) \geq \frac{EE(G)\cdot |V(G)|}{4\cdot d_{\max}}$.
\end{lemma}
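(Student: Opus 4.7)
The plan is to chain together the two results that immediately precede the statement. Corollary \ref{cor:tw_lower_bound_ve} gives a lower bound on $\tw(G)$ in terms of the vertex expansion $VE(G)$ and the number of vertices $|V(G)|$, while Lemma \ref{lem:ve_vs_ee} lower bounds $VE(G)$ in terms of the edge expansion $EE(G)$ and the maximum degree $d_{\max}$. Substituting the second bound into the first yields the desired inequality.

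More concretely, I would first invoke Corollary \ref{cor:tw_lower_bound_ve} to write
\[
\tw(G) \;\geq\; \frac{VE(G)\cdot |V(G)|}{4}.
\]
Then I would apply Lemma \ref{lem:ve_vs_ee}, which asserts $VE(G) \geq EE(G)/d_{\max}$, to the right-hand side, obtaining
\[
\tw(G) \;\geq\; \frac{VE(G)\cdot |V(G)|}{4} \;\geq\; \frac{EE(G)\cdot |V(G)|}{4\cdot d_{\max}},
\]
which is exactly the claimed bound.

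There is essentially no obstacle here: the lemma is a direct composition of two already-proved inequalities, and the proof is a two-line substitution. The only thing to double-check is that both auxiliary results apply to the same graph $G$ without any additional hypotheses, which they do (Corollary \ref{cor:tw_lower_bound_ve} and Lemma \ref{lem:ve_vs_ee} are each stated for an arbitrary graph $G$). Thus the proof should be presented in a single short paragraph consisting of this chain of inequalities.
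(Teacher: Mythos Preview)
Your proposal is correct and matches the paper's approach exactly: the paper states this lemma immediately after remarking that one can ``combine Corollary~\ref{cor:tw_lower_bound_ve} and Lemma~\ref{lem:ve_vs_ee},'' and your two-line substitution is precisely that combination.
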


\subsubsection{Background: Edge Transitivity}

We are able to get a lower bound on the edge expansion of $\hasse(\Delta^n)$ as $\hasse(\Delta^n)$ is \textit{edge-transitive}. Let $G$ be a graph. A \textit{\textbf{graph automorphism}} is a bijection $\phi:V(G)\to V(G)$ such that $\{u,v\}\in E(G)$ if and only if $\{\phi(u),\phi(v)\}\in E(G)$. A graph $G$ is \textit{\textbf{edge-transitive}} if for any two edges $\{u_1,v_1\}$ and $\{u_2,v_2\}$ of $G$, there is an automorphism $\phi:V(G)\to V(G)$ such that $\{\phi(u_1),\phi(v_1)\} = \{u_2,v_2\}$.

\begin{lemma}
\label{lemma:edge_transitive}
    The graph $\hasse(\Delta^n)$ is edge-transitive.
\end{lemma}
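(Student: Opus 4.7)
The plan is to show that every permutation of the vertex set $V = \{v_1,\ldots,v_n\}$ of $\Delta^n$ lifts to an automorphism of $\hasse(\Delta^n)$, and then to construct such a permutation that sends any prescribed edge to any other prescribed edge.

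First I would observe that, because $\Delta^n$ contains every subset of $V$ as a simplex, any permutation $\pi$ of $V$ induces a bijection $\hat\pi$ on the simplices of $\Delta^n$ by $\hat\pi(\sigma) = \pi(\sigma) = \{\pi(v) : v \in \sigma\}$, and this bijection preserves dimension and inclusion. In particular, $\hat\pi$ restricts to a bijection on $K_{d-1}\cup K_d$, the vertex set of $\hasse(\Delta^n)$, and $\{\sigma,\tau\}$ is an edge of $\hasse(\Delta^n)$ (that is, $\sigma$ is a $(d-1)$-face of the $d$-simplex $\tau$) if and only if $\{\hat\pi(\sigma),\hat\pi(\tau)\}$ is an edge. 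So every $\pi \in S_n$ yields a graph automorphism $\hat\pi$ of $\hasse(\Delta^n)$.

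Next I would take two arbitrary edges $\{\sigma_1,\tau_1\}$ and $\{\sigma_2,\tau_2\}$ of $\hasse(\Delta^n)$, where each $\sigma_i$ is a $(d-1)$-simplex and each $\tau_i = \sigma_i \cup \{w_i\}$ is a $d$-simplex with $w_i \in V\setminus\sigma_i$. Since $|\sigma_1| = |\sigma_2| = d$, I can pick any bijection $\pi_0 : \sigma_1 \to \sigma_2$, then extend it by setting $\pi_0(w_1) = w_2$ (which is well-defined as $w_1\notin\sigma_1$ and $w_2\notin\sigma_2$), and finally extend $\pi_0$ to a full permutation $\pi$ of $V$ by choosing any bijection between $V\setminus(\sigma_1\cup\{w_1\})$ and $V\setminus(\sigma_2\cup\{w_2\})$, which have the same cardinality $n-d-1$.

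By construction, $\hat\pi(\sigma_1) = \sigma_2$ and $\hat\pi(\tau_1) = \hat\pi(\sigma_1)\cup\{\hat\pi(w_1)\} = \sigma_2\cup\{w_2\} = \tau_2$, so the induced automorphism $\hat\pi$ maps $\{\sigma_1,\tau_1\}$ to $\{\sigma_2,\tau_2\}$. This establishes edge-transitivity. There is no real obstacle here; the argument is essentially that $\hasse(\Delta^n)$ inherits the full symmetry of the symmetric group $S_n$, and any edge is specified up to this symmetry by the single combinatorial datum that one endpoint is a $d$-element subset and the other is that subset plus one more vertex.
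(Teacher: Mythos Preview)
Your proof is correct and follows essentially the same approach as the paper: both arguments first show that any permutation of the underlying vertex set $V$ lifts to a graph automorphism of $\hasse(\Delta^n)$, and then construct an explicit permutation carrying one edge to any other by matching the $(d{-}1)$-face vertices and the extra vertex separately. The only difference is cosmetic notation.
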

\begin{proof}
    Recall that $\Delta^n$ is the set of all subsets of a set $V=\{v_1,\ldots,v_n\}$, and $\hasse(\Delta^n)$ has vertices that are the set of all subsets of size $d$ and $d+1$ of $V$. As a stepping stone to the lemma, we claim that any bijection of $f:V\to V$ defines a graph automorphism $\phi_{f}:V(\hasse(\Delta^n))\to V(\hasse(\Delta^n))$. Specifically, if $\sigma=\{u_1,\ldots,u_d\}\in V(\hasse(\Delta^n))$ is a $(d-1)$-simplex, we define $\phi_f$ to map $\phi_f(\sigma) = \{f(u_1),\ldots,f(u_d)\}$. Note that $\phi_f(\sigma)$ is a set with $d$ distinct elements as $f$ is a bijection; or in other words, $\phi_f(\sigma)$ is a $(d-1)$-simplex. We define $\phi_f$ analogously on $d$-simplices in $V(\hasse(\Delta^n))$. 
    \par 
    We now must verify that $\phi_f$ is a graph automorphism. Any edge $\{\sigma,\tau\}$ of $\hasse(\Delta^n)$ is a pair of a $(d-1)$-simplex $\sigma$ and a $d$-simplex $\tau$ such that $\sigma\subset\tau$. It is straightforward to verify that if $\{\sigma,\tau\}$ is an edge of $\hasse(\Delta^n)$, then $\{\phi_f(\sigma),\phi_f(\tau)\}$ is also an edge; indeed, $\phi_f(\sigma)$ and $\phi_f(\tau)$ are a $(d-1)$ and $d$-simplex respectively, and $\phi_f(\sigma)\subset\phi_f(\tau)$ by definition. It is also straightforward to verify the converse.
    \par 
    We now use this fact to show that $\hasse(\Delta^n)$ is edge-transitive. Let $\{\sigma_1,\tau_1\}$ and $\{\sigma_2,\tau_2\}$ be any two edges of $\hasse(\Delta^n)$ such that $\sigma_1=\{u_0,\ldots,u_{d-1}\}$, $\tau_1 = \{u_0,\ldots,u_{d}\}$, $\sigma_2 = \{v_0,\ldots,v_{d-1}\}$, and $\tau_2 = \{v_0,\ldots,v_d\}$. Let $f:V\to V$ be any bijection such that $f(u_i)=v_i$ for $0\leq i\leq d$. From the previous argument, the map $f$ defines a graph automorphism $\phi_f$ on $\hasse(\Delta^n)$. It follows immediately from the definition of $\phi_f$ that $\phi_f(\sigma_1) = \sigma_2$ and $\phi_f(\tau_1) = \tau_2$.
\end{proof}
The following lemma by Babai and Szegedy \cite{babai_szegedy_1992} provides a lower bound on the edge expansion of edge-transitive graphs. Let $a$ and $b$ be non-zero real numbers. The \textit{\textbf{harmonic mean}} of $a$ and $b$ is $2\cdot\left(\frac{1}{a} + \frac{1}{b}\right)^{-1}$. 

\begin{lemma}[Babai and Szegedy, 1992]
\label{lem:expansion_lower_bound}
    Let $G$ be a simple, edge-transitive graph. Let $D$ be the diameter of $G$, and let $r$ be the harmonic mean of the minimum and maximum degree of $G$. The edge expansion of $G$ is $EE(G) \geq \frac{r}{2D}$. 
\end{lemma}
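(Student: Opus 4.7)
The plan is to imitate the classical flow-congestion argument of Babai and Szegedy. For each unordered pair $\{u,v\}$ of distinct vertices of $G$, fix a shortest $u$-$v$ path $P_{\{u,v\}}$ of length at most $D$; summing over all pairs, the total length is at most $\frac{D\,|V(G)|(|V(G)|-1)}{2}$. The key observation is that $\mathrm{Aut}(G)$ acts transitively on the edge set, so after replacing this path family by its average over $\mathrm{Aut}(G)$ (more precisely, by forming the uniform mixture of the $|\mathrm{Aut}(G)|$ translates), every edge carries exactly the same load. Hence each edge of $G$ is used by at most $\frac{D\,|V(G)|(|V(G)|-1)}{2|E(G)|}$ of the paths in the resulting (fractional) family.

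Next I would apply this congestion bound to a prescribed cut. Fix $S\subset V(G)$ with $1\le |S|\le |V(G)|/2$. For each unordered pair $\{u,v\}$ with $u\in S$ and $v\in V(G)\setminus S$, the path $P_{\{u,v\}}$ must contain at least one edge of $\delta(S)$, so double-counting gives
\[
|S|\,(|V(G)|-|S|) \ \le\ \sum_{e\in \delta(S)} \#\bigl\{\{u,v\}:e\in P_{\{u,v\}}\bigr\} \ \le\ |\delta(S)|\cdot \frac{D\,|V(G)|(|V(G)|-1)}{2|E(G)|}.
\]
Rearranging and using $|V(G)|-|S|\ge |V(G)|/2$ together with $|V(G)|-1\le |V(G)|$ then yields $\frac{|\delta(S)|}{|S|}\ge \frac{|E(G)|}{D\,|V(G)|}$.

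To finish, I identify $2|E(G)|/|V(G)|$ with $r$. Here I would invoke the classical structure theorem that an edge-transitive graph (with no isolated vertices) is either vertex-transitive or \emph{semi-regular bipartite}, meaning $V(G)=V_1\sqcup V_2$ with every vertex of $V_i$ having a common degree $d_i$. In the vertex-transitive case $d_{\min}=d_{\max}=r$ and $2|E(G)|/|V(G)|=r$ immediately. In the bipartite case, the identity $|V_i|d_i=|E(G)|$ gives $|V(G)|/|E(G)|=1/d_1+1/d_2=2/r$, so again $2|E(G)|/|V(G)|=r$. Substituting into the previous inequality produces $EE(G)\ge r/(2D)$, as claimed.

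The main obstacle I anticipate is justifying the symmetrization step rigorously: one cannot always pick a single integral path family in which every edge receives exactly the average integral load, so the argument must be phrased in terms of a distribution on path systems (or equivalently fractional multiplicities), and one must check that the congestion inequality above really does hold in that fractional setting. The second delicate point is the structural dichotomy between vertex-transitivity and semi-regular bipartiteness: it requires examining the orbits of $\mathrm{Aut}(G)$ on $V(G)$ and using edge-transitivity to conclude that there are at most two orbits with no intra-orbit edges, a standard but nontrivial argument. Beyond these two steps the proof is essentially a chain of inequalities, and getting the constant to match $r/(2D)$ exactly relies on using unordered rather than ordered pairs, which is why I set things up that way from the start.
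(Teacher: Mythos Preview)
The paper does not give its own proof of this lemma: it is quoted as a black-box result of Babai and Szegedy and is used only to feed into the subsequent treewidth lower bound. So there is no in-paper argument to compare against.

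That said, your proposal is a faithful reconstruction of the Babai--Szegedy method. The symmetrization step is fine once you phrase it, as you do, via the uniform mixture of $\mathrm{Aut}(G)$-translates: edge-transitivity and orbit--stabilizer give every edge the same fractional load $\tfrac{D\,|V|(|V|-1)}{2|E|}$, and the cut inequality then follows by linearity of expectation (each crossing pair contributes at least $1$ to the expected number of cut-edges on its random path). Your identification $2|E|/|V|=r$ via the dichotomy ``vertex-transitive or semi-regular bipartite'' is also correct; the argument you sketch---that the endpoints of a single edge lie in distinct vertex-orbits when $\mathrm{Aut}(G)$ is not vertex-transitive, forcing exactly two orbits with all edges between them---is the standard one. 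Both ``obstacles'' you flag are genuine but routine, and you have already indicated how to handle them; there is no missing idea.
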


\subsubsection{Putting Everything Together}

We now calculate the diameter $D$ of $\hasse(\Delta^n)$ and the parameter $r$ from the statement of Lemma \ref{lem:expansion_lower_bound}.

\begin{lemma}
\label{lemma:diameter}
    The diameter $D$ of $\hasse(\Delta^n)$ is bounded above by $2d+2$.
\end{lemma}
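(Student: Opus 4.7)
My plan is to prove the diameter bound by exhibiting an explicit short walk between any two vertices of $\hasse(\Delta^n)$. Recall that the vertex set of $\hasse(\Delta^n)$ is partitioned into $d$-element subsets of $V$ (the ``small'' vertices, i.e.~$(d-1)$-simplices) and $(d+1)$-element subsets (the ``large'' vertices, i.e.~$d$-simplices), with edges given by strict containment. Every edge changes the underlying set by exactly one vertex, so the graph is bipartite between small and large vertices. The key observation I would foreground is that two consecutive steps along a walk either leave the current subset unchanged, or modify it by removing one element and adding another, i.e.~they perform at most one ``swap''.

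The worst case is a pair of large vertices $\tau_1, \tau_2$, so I would handle that first. Order $\tau_1 = \{u_0, \ldots, u_d\}$ and $\tau_2 = \{v_0, \ldots, v_d\}$ so that any shared element $w \in \tau_1 \cap \tau_2$ occupies the same index in both orderings. Then I would construct a walk of length $2d+2$ consisting of $d+1$ ``swap blocks'': the $(i+1)$st block (steps $2i+1$ and $2i+2$) removes $u_i$ from the current large set and then adds $v_i$. After these $d+1$ blocks, each $u_i$ has been replaced by $v_i$ and the walk ends at $\tau_2$. The main bookkeeping check is that each intermediate small and large set is valid: when $u_i = v_i$ (which happens exactly when that position corresponds to a shared element) the block is trivial, and when $u_i \neq v_i$, one must check $v_i \notin \{v_0,\ldots,v_{i-1}, u_{i+1}, \ldots, u_d\}$, which follows because the $v_j$ are distinct and because in this regime $v_i \notin \tau_1$.

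For the other pairings the distance is strictly smaller. Two small sets $\sigma_1, \sigma_2$ are joined in $2\,|\sigma_1 \setminus \sigma_2| \leq 2d$ steps by an analogous swap walk that begins with an ``add'' step; a small $\sigma$ and a large $\tau$ are joined in at most $2d+1$ steps by walking first to a $d$-subset of $\tau$ that maximizes intersection with $\sigma$ (at most $2d$ steps as in the small-to-small case), then taking one final edge up to $\tau$. Since $2d$ and $2d+1$ are both at most $2d+2$, the diameter bound $D \leq 2d+2$ follows in all cases. There is no serious obstacle here: the argument is essentially an edit-distance computation on the subset lattice, and the only delicate point is the validity check for intermediate vertices in the swap-block construction.
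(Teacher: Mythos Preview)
Your proof is correct and follows essentially the same edit-distance idea as the paper's. The paper packages it slightly more uniformly: instead of splitting into large--large, small--small, and mixed cases, it shows directly that from any vertex $\sigma$ one can step to a neighbor $\sigma'$ with $|\sigma'\oplus\tau|=|\sigma\oplus\tau|-1$, so the distance is at most $|\sigma\oplus\tau|\le 2(d+1)$; your swap-block walk is exactly the trajectory this greedy rule produces.
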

\begin{proof}
    Let $\sigma$ and $\tau$ be distinct vertices of $\hasse(\Delta^n)$. We claim there is a $|\sigma\oplus\tau|$-length path between $\sigma$ and $\tau$ in $\hasse(\Delta^n)$, where $\sigma\oplus\tau$ is the symmetric difference of the sets $\sigma$ and $\tau$. To prove this, we will show there is a simplex $\sigma'$ adjacent to $\sigma$ such that $|\sigma'\oplus\tau| = |\sigma\oplus\tau|-1$. This will imply the lemma, as the largest symmetric difference between two simplices in $\hasse(\Delta^n)$ is between two disjoint $(d+1)$-simplices, where the symmetric difference is $2d+2$.  
    \par 
    We can prove the existence of $\sigma'$ by induction. First consider the case that $|\sigma\oplus\tau|=1$. Then we claim it must be the case that $\sigma\subset\tau$ or $\tau\subset\sigma$. Indeed, if $\sigma\not\subset\tau$ and $\tau\not\subset\sigma$, then $|\sigma\oplus\tau|\geq 2$. In this case, $\sigma'=\tau$.
    \par 
    In the case that $|\sigma\oplus\tau|>1$, there are two possibilities. If $\sigma$ is a $d$-simplex, then $\sigma' = \sigma\setminus\{v\}$ for any $v\in\sigma\setminus\tau$. If $\sigma$ is a $(d-1)$-simplex, then $\sigma' = \sigma\cup\{v\}$ for any $v\in\tau\setminus\sigma$. In both cases, the simplex $\sigma'$ is a vertex in $\hasse(\Delta^n)$ by the definition of $\hasse(\Delta^n)$, and $|\sigma'\oplus\tau| = |\sigma\oplus\tau|-1$.
\end{proof}

\begin{lemma}
\label{lem:harmonic_mean}
    Let $n \geq 2d+1$. The harmonic mean $r$ of the min and max degree of $\hasse(\Delta^n)$ of the $n$-simplex is bounded below by $d+1$.
\end{lemma}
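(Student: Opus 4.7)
The plan is to identify the two degrees appearing in $\hasse(\Delta^n)$ explicitly, plug them into the definition of harmonic mean, and then check the required inequality by a short algebraic manipulation. Recall that the vertices of $\hasse(\Delta^n)$ are the $(d-1)$- and $d$-simplices of $\Delta^n$, and that $\Delta^n$ contains every subset of $V=\{v_1,\ldots,v_n\}$ as a simplex.

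First I would compute the degree of a $(d-1)$-simplex $\sigma$, i.e.\ a $d$-element subset of $V$. Its neighbors in $\hasse(\Delta^n)$ are exactly the $d$-simplices containing $\sigma$; such a simplex is obtained by adjoining any single vertex of $V\setminus \sigma$ to $\sigma$, so there are $n-d$ of them. Next I would compute the degree of a $d$-simplex $\tau$, i.e.\ a $(d+1)$-element subset. Its neighbors are exactly the $(d-1)$-faces $\tau\setminus\{v\}$ for $v\in\tau$, so there are $d+1$ of them. Since $n\geq 2d+1$ implies $n-d\geq d+1$, the minimum degree of $\hasse(\Delta^n)$ is $d+1$ and the maximum is $n-d$.

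Now I would compute the harmonic mean directly:
\[
r \;=\; 2\left(\frac{1}{d+1}+\frac{1}{n-d}\right)^{-1} \;=\; \frac{2(d+1)(n-d)}{(d+1)+(n-d)} \;=\; \frac{2(d+1)(n-d)}{n+1}.
\]
The claim $r\geq d+1$ reduces to $2(n-d)\geq n+1$, i.e.\ $n\geq 2d+1$, which is precisely the hypothesis. Thus $r\geq d+1$, finishing the proof.

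There is no real obstacle here; the only thing to be careful about is that the hypothesis $n\geq 2d+1$ is used twice, first to identify which of $n-d$ and $d+1$ is the minimum degree (so that the harmonic mean formula uses the right pair of numbers), and second to get the final inequality $r\geq d+1$. In fact, the computation shows that $d+1$ is exactly the tight lower bound, attained when $n=2d+1$.
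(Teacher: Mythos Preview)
Your proof is correct and follows essentially the same approach as the paper: identify the two degrees as $d+1$ and $n-d$, then bound the harmonic mean using $n\geq 2d+1$. The only cosmetic difference is that the paper bounds the harmonic mean by replacing $\frac{1}{n-d}$ with the larger $\frac{1}{d+1}$ before inverting, whereas you compute $r$ exactly and then reduce $r\geq d+1$ to $n\geq 2d+1$; both arguments are one line. (As a minor aside, the harmonic mean is symmetric in its two arguments, so the first use of the hypothesis you mention---to decide which degree is the minimum---is not actually needed for the computation.)
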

\begin{proof}
    The degree of a vertex $\sigma\in V(\hasse(\Delta^n))$ is either $d+1$ and $n-d$ if $\sigma$ is a $d$ or $(d-1)$-simplex respectively, so $d+1$ and $n-d$ are the maximum and minimum degree of $\hasse(\Delta^n)$. We lower bound the harmonic mean as 
    \begin{align*}
        2\left(\frac{1}{d+1} + \frac{1}{n-d}\right)^{-1} &\geq 2\left(\frac{1}{d+1} + \frac{1}{d+1}\right)^{-1} \tag{as $n\geq 2d+1$} \\ 
        &= d+1
    \end{align*}
\end{proof}

We now have everything we need to lower bound the edge expansion of $\hasse(\Delta^n)$.

\begin{lemma}
    Let $n\geq 2d+1$. The edge expansion of $\hasse(\Delta^n)$ is bounded below by $\frac{1}{4}$.
\end{lemma}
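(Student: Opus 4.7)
The plan is to combine the four preceding results in a single calculation. By Lemma \ref{lemma:edge_transitive}, $\hasse(\Delta^n)$ is edge-transitive, so Lemma \ref{lem:expansion_lower_bound} applies and gives $EE(\hasse(\Delta^n)) \geq \frac{r}{2D}$, where $r$ is the harmonic mean of the minimum and maximum degree and $D$ is the diameter. It therefore suffices to plug in the bounds already established.

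First, I would invoke Lemma \ref{lemma:diameter} to replace $D$ with the upper bound $2d+2$. Since $D$ sits in the denominator, substituting an upper bound for $D$ yields a valid lower bound for $\frac{r}{2D}$. Next, since $n\geq 2d+1$, Lemma \ref{lem:harmonic_mean} provides the lower bound $r \geq d+1$, which can be substituted into the numerator. Putting these together gives
\[
EE(\hasse(\Delta^n)) \;\geq\; \frac{r}{2D} \;\geq\; \frac{d+1}{2(2d+2)} \;=\; \frac{d+1}{4(d+1)} \;=\; \frac{1}{4}.
\]

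There is no real obstacle here; the work has all been done in the preceding three lemmas, and the proof is just a chain of substitutions. The only thing to be slightly careful about is verifying that the hypothesis $n\geq 2d+1$ of Lemma \ref{lem:harmonic_mean} is indeed the same hypothesis being assumed in the present statement, so that the harmonic mean bound is available, and that Lemma \ref{lem:expansion_lower_bound} applies to $\hasse(\Delta^n)$ as a simple graph (which follows from its construction as a bipartite incidence graph and hence has no loops or multi-edges).
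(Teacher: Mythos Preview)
Your proposal is correct and matches the paper's own proof almost verbatim: the paper also applies Lemma~\ref{lem:expansion_lower_bound} (using edge-transitivity from Lemma~\ref{lemma:edge_transitive}) and plugs in the bounds $D\leq 2d+2$ and $r\geq d+1$ from Lemmas~\ref{lemma:diameter} and~\ref{lem:harmonic_mean} to obtain $EE(\hasse(\Delta^n))\geq \frac{d+1}{2(2d+2)}=\frac{1}{4}$. Your extra remarks about checking the hypothesis $n\geq 2d+1$ and simplicity of the graph are fine observations but not points of divergence.
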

\begin{proof}
    We can lower bound the edge expansion of $\hasse(\Delta^n)$ by plugging the values calculated in Lemma \ref{lemma:diameter} and Lemma \ref{lem:harmonic_mean} into the formula given by Lemma \ref{lem:expansion_lower_bound}, namely $EE(\hasse(\Delta^n)) \geq \frac{r}{2D} = \frac{d+1}{2\cdot(2d+2)} = \frac{1}{4}.$
\end{proof}
\begin{theorem}
\label{thm:lower_bound}
    Let $n\geq 2d+1$. The treewidth of $\hasse(\Delta^n)$ is $\Omega(\binom{n}{d}\cdot\frac{1}{d+1})$
\end{theorem}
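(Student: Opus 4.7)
The plan is to apply Lemma \ref{cor:tw_lower_bound_ee} to $G=\hasse(\Delta^n)$, using the edge expansion lower bound of $1/4$ just established. I need three ingredients: the edge expansion $EE(\hasse(\Delta^n))\geq 1/4$, the maximum degree $d_{\max}$ of $\hasse(\Delta^n)$, and the number of vertices $|V(\hasse(\Delta^n))|$. The degree calculation was already done in Lemma \ref{lem:harmonic_mean}: a $d$-simplex has degree $d+1$ and a $(d-1)$-simplex has degree $n-d$. Since $n\geq 2d+1$, we have $n-d\geq d+1$, so $d_{\max}=n-d$. The vertex count is $|V(\hasse(\Delta^n))|=\binom{n}{d}+\binom{n}{d+1}$.

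Next I would plug these into Lemma \ref{cor:tw_lower_bound_ee} to obtain
\[
\tw(\hasse(\Delta^n))\;\geq\;\frac{EE(\hasse(\Delta^n))\cdot|V(\hasse(\Delta^n))|}{4\cdot d_{\max}}\;\geq\;\frac{(1/4)\bigl(\binom{n}{d}+\binom{n}{d+1}\bigr)}{4(n-d)}.
\]
The main simplification I would then perform is to drop the $\binom{n}{d}$ summand (it is only a lower-order term) and use the identity $\binom{n}{d+1}=\binom{n}{d}\cdot\frac{n-d}{d+1}$. Substituting,
\[
\tw(\hasse(\Delta^n))\;\geq\;\frac{\binom{n}{d+1}}{16(n-d)}\;=\;\frac{\binom{n}{d}(n-d)/(d+1)}{16(n-d)}\;=\;\frac{1}{16}\cdot\binom{n}{d}\cdot\frac{1}{d+1},
\]
which gives the claimed $\Omega\!\bigl(\binom{n}{d}\cdot\frac{1}{d+1}\bigr)$ bound.

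Since every step is either a direct substitution or the algebraic identity relating $\binom{n}{d}$ and $\binom{n}{d+1}$, there is no real obstacle in this proof; the hard work was all done in the preceding sequence of lemmas establishing the edge-transitivity, diameter, harmonic mean, and edge expansion of $\hasse(\Delta^n)$. The only thing to be careful about is using the hypothesis $n\geq 2d+1$ correctly to justify $d_{\max}=n-d$ rather than $d+1$, which is what makes the $(n-d)$ factors cancel cleanly.
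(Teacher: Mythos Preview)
Your proposal is correct and follows essentially the same approach as the paper's own proof: both apply Lemma~\ref{cor:tw_lower_bound_ee} with $EE\geq 1/4$, $d_{\max}=n-d$, and $|V|=\binom{n}{d}+\binom{n}{d+1}$, then drop the $\binom{n}{d}$ summand and use the identity $\binom{n}{d+1}/(n-d)=\binom{n}{d}/(d+1)$ to obtain the bound.
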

\begin{proof}
     The number of vertices in $\hasse(\Delta^n)$ is $|V(\hasse(\Delta^n))| = \binom{n}{d} + \binom{n}{d+1}$. As $n \geq 2d+1$, the maximum degree of $\hasse(\Delta^n)$ is $d_{\max} = n-d$. We plug these values and the lower bound on $EE(\hasse(\Delta^n))$ into the equation given by Corollary \ref{cor:tw_lower_bound_ee} to obtain a lower bound on $\tw(\hasse(\Delta^n))$.
     \begin{align*}
         \tw(\hasse(\Delta^n)) &\geq \frac{EE(\hasse(\Delta^n))\cdot |V(\hasse(\Delta^n))|}{4\cdot d_{\max}} \\
         &\geq \frac{\binom{n}{d} + \binom{n}{d+1}}{4\cdot 4\cdot(n-d)} \\
         &\geq \frac{\binom{n}{d+1}}{4\cdot 4\cdot(n-d)} \\
     \end{align*}
     We can derive the equality $\binom{n}{d+1} \cdot \frac{1}{n-d} = \binom{n}{d}\cdot\frac{1}{d+1}$ by expanding $\binom{n}{d+1}$, which gives the theorem statement. 
\end{proof}

\section{Acknowledgements}

We would like to thank Nello Blaser and Erlend Raa V\.{a}gset for helpful discussions about Section \ref{sec:comparison}, specifically for helping us prove Theorem \ref{thm:tw_hasse_upper_bound}.

%
%

\bibliographystyle{plain}
\bibliography{main.bib}

\end{document}